\def\instfull{\smash{(N,G,T,\{u_{i}\}_{i\in N})}}
\def\calI{\mathcal{I}}
\def\bbN{\mathbb{N}}
\def\bbR{\mathbb{R}}
\def\bbS{\mathbb{S}}
\def\bbZ{\mathbb{Z}}
\def\NP{\textsf{NP}}
\def\coNP{\textsf{coNP}}
\def\APX{\textsf{APX}}
\def\OPT{\textsf{OPT}}
\DeclareMathOperator*{\maximize}{maximize}
\DeclareMathOperator*{\subjectto}{subject\ to}
\newtheorem{theorem}{Theorem}[section]
\newtheorem{lemma}[theorem]{Lemma}
\theoremstyle{definition}
\newtheorem{definition}[theorem]{Definition}
\theoremstyle{remark}
\title{\bf Fairness in Repeated Matching: A Maximin Perspective}
\author[1]{Eugene Lim}
\author[2]{Tzeh Yuan Neoh}
\author[3]{Nicholas Teh}
\affil[1]{National University of Singapore, Singapore}
\affil[2]{Harvard University, USA}
\affil[3]{University of Oxford, UK}
\date{\vspace{-10mm}}
\begin{document}

\maketitle
\begin{abstract}
    We study a sequential decision-making model where a set of items is repeatedly matched to the same set of agents over multiple rounds. The objective is to determine a sequence of matchings that either maximizes the utility of the least advantaged agent at the end of all rounds (optimal) or at the end of every individual round (anytime optimal). We investigate the computational challenges associated with finding (anytime) optimal outcomes and demonstrate that these problems are generally computationally intractable. However, we provide approximation algorithms, fixed-parameter tractable algorithms, and identify several special cases whereby the problem(s) can be solved efficiently. Along the way, we also establish characterizations of Pareto-optimal/maximum matchings, which may be of independent interest to works in matching theory and house allocation.
\end{abstract}

\section{Introduction}
Traditional machine learning (ML) algorithms often focus on global objectives such as efficiency (e.g., maximizing accuracy or minimizing error rates in decision-making systems) or maximizing revenue/profit (e.g., maximizing click-through rates for recommendation systems), as they align closely with organizational goals and are more straightforward to quantify and optimize. However, modern approaches increasingly emphasize \emph{fairness} as a key desideratum, as societal and regulatory demands push for more equitable and responsible ML systems.

We consider a multi-agent sequential decision-making scenario where a set of resources must be allocated among agents repeatedly over time, with the objective of achieving fairness in the assignment process. This framework encompasses applications such as dynamic spectrum allocation in wireless networks and energy distribution in smart grids \citep{elhachmi2022spectrum,jain2022energyscheduling,rony2021dynamicspectrum,soares2024localenergysystems}. In the case of spectrum allocation, communication channels must be repeatedly assigned to devices, with each device requiring exclusive access to one channel in each time slot. Persistent disparities in access can degrade system efficiency, reduce user satisfaction, and undermine trust. Similarly, in many other ML-driven resource allocation systems, disparities in the distribution of resources---such as GPUs in distributed computing---can lead to unfair outcomes that compromise the perceived and actual effectiveness of the system. Numerous other applications where decisions are made dynamically---such as assigning tasks to workers in crowdsourcing platforms \citep{moayedikia2020taskassignment}, or distributing compute resources in cloud systems \citep{belgacem2022dynamiccloud,gupta2017dynamiccloud,saraswathi2015dynamiccloud}---call for central decision-makers to ensure that no agent is persistently disadvantaged, which is critical for both fairness and long-term trust in the system.

The scenarios described above can be captured using the \emph{repeated matching} framework---a multi-agent sequential decision-making model in which a set of goods is repeatedly matched to agents over time, and each agent is assigned exactly one good at each round. This can also be viewed as a multi-round generalization of the \emph{bottleneck assignment problem} \citep{ford1962flows} which is well-known in multi-agent task allocation: an application of this problem arises in \emph{threat seduction}, where decoys are assigned to multiple incoming threats \citep{shames2017decoy}. Our problem can also be viewed as a sequential variant of the \emph{Santa Claus problem} \citep{bansal2006santaclaus}, which is closely related to the classic scheduling problem of \emph{makespan minimization} on unrelated parallel machines \citep{lenstra1990schedulingunrelated,santaclausmakespan2024}.

In particular, we focus on the \emph{maximin} (or \emph{egalitarian}) objective \citep{demko1988egalitarian,thomson1983egalitarian}, which aims to find a sequence of matchings that maximizes the minimum utility among agents. Maximin fairness serves as a principled trade-off between fairness and efficiency, as minimizing disparities often enhances overall system robustness and user satisfaction. Moreover, modern ML systems often involve iterative, data-driven decision-making, and maximin fairness integrates naturally with these systems by providing a fairness criterion that adapts dynamically, with its ability to handle both short-term and long-term outcomes.\footnote{This is in contrast to other \emph{comparative} notions of fairness, such as \emph{envy-freeness}, which has also been studied in the static matching \citep{aigner2022envyfreematchings,wu2018envyfreematchings,yokoi2020envyfreematchings} and the two-sided repeated matching \citep{gollapudi2020matching} setting. Maximin fairness is also more \emph{demonstrably} fair compared to an envy-based approach.}

\subsection{Our Results}
 We study the \emph{repeated matching} problem from the perspective of \emph{maximin} (or \emph{egalitarian}) fairness, a principle grounded in game theory, fair division, and matching problems. Leveraging techniques from classical matching algorithms, approximation methods, dynamic programming, and online decision-making, we analyze how to design fair repeated matching policies that ensure long-term fairness across multiple rounds.

 In \Cref{sec:prelims}, we formally define the repeated matching problem and introduce the notion of \emph{(anytime) optimality} in the egalitarian sense. We also introduce several tools that is central in proving some of our results.

In \Cref{sec:optimal-hardness}, we study the computation of optimal solutions. We begin by defining the decision variant of our matching problem and showing that it is NP-hard in general. Notably, this hardness holds even with only two timesteps and ternary agent valuations (i.e., when each agent's utility for a good takes one of three possible values). Given these hardness results, we turn to the optimization variant of the problem and develop approximation algorithms that achieve an additive approximation bound independent of the number of rounds~$T$. Crucially, this implies that as $T$ increases---a scenario common in real-world applications---the solution produced by our algorithm converges to the optimal one. In addition, we also show that the problem is fixed-parameter tractable (FPT) with respect to the number of agents by providing a polynomial-time algorithm when the number of agents is a constant. Notably, in the process, we derive a characterization of \emph{Pareto optimal} matchings in terms of the permutations of agents. This generalizes the previously-known result that \emph{serial dictatorship} characterizes Pareto optimal matchings and may be of independent interest to communities working on the \emph{house allocation} problem.

In \Cref{sec:anytime-optimality}, we shift our focus to \emph{anytime optimal} solutions. We show that such solutions always exist for two agents, and we provide a polynomial-time algorithm for it. However, this does not extend to three or more agents---even with just two rounds, deciding if an instance admits an anytime optimal solution becomes coNP-hard. Nevertheless, we design an approximation algorithm that achieves anytime optimality with an additive bound independent of~$T$. These results underscore the inherent difficulty of achieving anytime optimality in our setting.

In \Cref{sec:special-cases}, we revisit optimality and identify three special cases admitting polynomial-time algorithms: (i) agents with binary valuations, (ii) two types of goods, and (iii) identical agent valuations. These special cases are well-motivated by the (temporal) fair division literature. For (i), we present an exact algorithm and a new characterization of Pareto optimal matchings under binary valuations. For (ii), we similarly provide an efficient exact algorithm. For (iii), despite NP-hardness in general, we show that optimal solutions can be computed in polynomial time when the number of rounds is a multiple of the number of agents. Finally, we extend our approximation approach to anytime optimality in these cases, giving us a stronger results than in the general setting.

\subsection{Related Work}
We highlight several streams of research that are related to our work. We note that while there are many works on \emph{online} matching and fair division, they are not directly relevant to our setting, as the underlying assumptions differ fundamentally. In our setting, the entire set of goods is made available in every round, whereas in online models, the set of goods may vary over time. Thus, we focus only on discussing works where meaningful implications can be drawn between their results and ours.

\paragraph{Repeated matching.}
Repeated matching was first studied by \citet{hosseini2015matchingordinal}, which considered ordinal preferences that could change over time. They study strategyproofness and approximate envy-freeness.
However, ordinal (their model) and cardinal (our model) preferences are vastly different, both in techniques and results. \citet{gollapudi2020matching} subsequently looked at a two-sided repeated matching problem (i.e., each side have preferences over the other side). They also study approximate envy-freeness as the key desiderata, albeit under some strong assumptions. In contrast, our model is on one-sided repeated matching, which is fundamentally different.
Our model is most aligned with that of \citet{caragiannis2024repeatedmatching}. However, they consider a slightly more general variant, whereby the value of an agent for a good in some round depends on the number
of rounds in which the good has been given to the agent in the past. They study approximately envy-free notions, show an intractability result, and special cases where fairness can be guaranteed. 
Our model, while more specialized than theirs, has a few distinctions: (i) we have stronger negative and intractability results, (ii) the fairness concept we consider is not envy-based, and is therefore novel in this domain, and (iii) we consider a notion of fairness \emph{at every round prefix}, something with prior work does not consider---they look at fairness at the end.
Recently, \citet{micheel2024repeatedhouse} also studied  essentially the same model (under a different name: \emph{repeated house allocation}), but with ordinal preferences and other kinds of envy-based measures.

\paragraph{Repeated fair division.}
\citet{igarashi2023repeatedfairallocation} studied a model of repeated fair division, where a set of goods is available at each round, and every good must be allocated. This is in contrast to our model where each agent gets exactly one good. They consider the compatibility of envy-freeness and Pareto optimality, and show positive results in restricted cases.
\citet{balan2011longtermfairness} study a similar model, but with a focus on the \emph{average} utility of goods received by the agents.
Note that as with classical fair division, house allocation (where each agent gets exactly one good) is a special case and has considerably different results. 
\citet{elkind2024temporalfairdivision} also consider a non-repeated (but also offline) variant of this model where a single good needs to be allocate at each round.

\paragraph{Multi-agent sequential decision-making.}
Several other works in multi-agent systems bear resemblance to our model.
For instance, \citet{zhang2014multiagentseq} also study the egalitarian objective multi-agent decision-making problems.
However, they take a non-cooperative game-theoretic approach and do not study a matching problem.
\citet{lim2024stochastic} consider an  assignment problem in the context of stochastic multi-armed bandits, with egalitarian fairness as the objective. In their setting, at each round, exactly one ``arm'' must be assigned to each user such that no two users are assigned to the same arm. However, the user's utility (``reward'') in this case is stochastic, and therefore explores a different problem.
Several other works \citep{cheng2005multiprocessorscheduling,kellerer1998onlinepartition} consider the problem of semi-online multiprocessor scheduling, with the objective of minimizing the \emph{makespan} (i.e., minimize the maximum time taken by any any processor). This is analogous to the egalitarian objective.
However, results in this setting only hold for identical valuations (since machines are identical), and primarily apply to a (semi-)online setting, where goods arrive one at a time (and so valuations over future goods are known not in advance), but the total valuation is known.

\paragraph{Santa Claus problem.}
Another related line of work is egalitarian fair division, also known as the \emph{Santa Claus} problem.
The standard model here is a single-shot fair division setting with an egalitarian objective, which was studied as far back as \citet{thomson1983egalitarian}, who axiomatically characterized the egalitarian solution using numerous desirable properties.
\citet{bansal2006santaclaus} then initiated the study of approximation algorithms for this problem, by providing an $\mathcal{O}(\log \log m/\log \log \log m)$ approximation algorithm for the special case when agents have \emph{restricted additive} valuations.
\citet{annamalai2015santaclaus} and \citet{davies2020santaclaus_constantapprox} subsequently provided a $12.33$- and $(4 + \varepsilon)$-approximation algorithm for this restricted case, respectively.
Numerous other works study \emph{online} variants of this problem, but typically under various relaxations---since strong worst-case guarantees are impossible without additional assumptions. Some of these restrictions include allowing for some reordering in the allocation process \citep{epstein2010reorderingbuffer} or restricting the number of agents
\citep{he2005twouniform,tan2006machinecovering_twouniform,wu2014semionlinemachinecovering}, or allowing transfer of items after assignment \citep{chen2011machinemigration}.

\paragraph{Other sequential decision-making models.}
We briefly mention several other models that may appear similar to (or could be superficially framed as) repeated matching, but are in fact distinct.
In the \emph{temporal voting} model \citep{alouf2022better,bulteau2021jrperpetual,chandak2024proportional,elkind2022temporalslot,elkind2024temporalelections,elkind2024temporalsurvey,elkind2025temporalchores,elkind2025verifying,phillips2025strengthening,zech2024multiwinnerchange}, the outcome is a sequence of decisions, where in each round a single project or candidate is selected.
These outcomes are \emph{public} in nature; they simultaneously benefit all agents rather than being individually allocated. 
While the same universal set of alternatives may exist across rounds (as in our model), the goal in temporal voting is to ensure fairness and representational balance across time in collective decisions. This differs fundamentally from repeated matching, where items are assigned \emph{exclusively} to individual agents in each round, and fairness arises from managing trade-offs in personal allocations over time.
Another related body of work looks at the \emph{online fair division} model \citep{aleksandrov2015onlinefoodbank,choo2025approximateproportionality,neoh2025onlineFDadditionalinfo,zhou2023icml_mms_chores}, where the repeated perspective does not apply, since a defining feature is uncertainty about future arrivals and valuations (the set of goods is not known in advance).

\section{Preliminaries} \label{sec:prelims}

Given a positive integer $z$, let $[z]=\{1,\dots,z\}$. We consider the problem of fairly matching a set of $n$ agents $N=[n]$ to a set of $m\geq n$ goods $G=\{g_{1},\dots,g_{m}\}$ over $T$ rounds.
We note that this is without loss of generality---to model the case of $m<n$, one can simply add zero-valued goods to arrive at the $m \geq n$ case and the results remain the same.

\paragraph{Matchings.} A \emph{matching} $M$ is an injective map from $N$ to $G$. We have $M(i)=g$ if and only if agent $i\in N$ is matched to good $g\in G$. In some instances, we also represent a matching either as a $n$-tuple $M=(M(1),\dots,M(n))$ or as an $n\times m$ matrix $M$, where $M_{ij}= 1$ if $M(i)=g_{j}$, and $0$ otherwise. We denote the set of all sequences of matchings with length at least $t\in[T]$ as $\bbS^{t}$. 

\paragraph{Valuations.} Let $u_{i}(g)$ denote the non-negative value that agent $i\in N$ receives when matched to good $g\in G$. The \emph{valuation profile} of a matching $M$ is the $n$-tuple $(u_{1}(M(1)),\dots,u_{n}(M(n)))$.
Given a sequence of $T$ matching $S=(M^{1},\dots,M^{T})$, the value that agent $i$ receives under $S$ up to round $t\in[T]$ is the sum of the values received up to that round, that is, $v_{i}^{t}(S)\coloneqq\sum_{s=1}^{t}u_{i}(M^{s}(i))$.

\paragraph{Instances.} An instance of the \emph{egalitarian repeated matching} problem is a tuple $\calI=\instfull$. The egalitarian (or maximin) objective seeks to maximize the value received by the worst-off agents. Let $t\in[T]$. We define the \emph{bottleneck agents} of a sequence $S\in\bbS^{t}$ at round $t$ as the set of agents who received the lowest value under $S$ up to that round. We further define the \emph{bottleneck value} as the value received by the bottleneck agents, that is, $b^{t}(S)\coloneqq\min_{i\in N}v_{i}^{t}(S)$.

\paragraph{Objective.} Motivated by the egalitarian objective, we denote the maximum bottleneck value at round $t$ as $\OPT(t)\coloneqq\max\{b^{t}(S)\,|\,S\in\bbS^{t}\}$. In this work, we consider two notions of optimality\footnote{For simplicity, we refer to \emph{optimality} as shorthand for the egalitarian welfare-maximizing optimal solution.}: one that ensures the best outcome at a specific round, and another that ensures the best outcome at every round up to a given round.
Both concepts of this nature (fairness at the end  or at the end of each prefix) have been studied in temporal/repeated fair division \citep{elkind2024temporalfairdivision,igarashi2023repeatedfairallocation} and repeated matching \citep{caragiannis2024repeatedmatching}.

We first introduce the weaker notion of optimality,\footnote{Note that our problem with optimality as an objective can be reformulated as a single-shot fair division problem with $T$ copies of each good and an added constraint that each agent receives exactly $T$ goods. While mathematically equivalent, this formulation is unintuitive in the classical setting, non-standard, and remains unexplored (with no known algorithms designed for it) in the literature. Furthermore, the \emph{sequential} perspective is necessary for defining and motivating anytime-optimality and enabling potential extensions, neither of which can be naturally accommodated in a single-shot optimization framework.}
which is defined by mandating fairness at the end of a particular round $t\in[T]$. More formally, we say that a sequence $S\in\bbS^{t}$ is \emph{optimal} at round $t\in[T]$ if $b^{t}(S)=\OPT(t)$.

Note that this property does not require optimality to hold at any previous rounds $s$, for $s<t$. However, for any round $t\in[T]$, if we require optimality at every round $s\leq t$, then we get a stronger notion of optimality. More formally, we say that a sequence $S\in\bbS^{t}$ is \emph{anytime optimal} up to round $t\in[T]$ if $b^{s}(S)=\OPT(s)$ for all rounds $s\in[t]$.

Observe that while anytime optimality is significantly stronger than standard optimality, positive results for anytime optimality do not necessarily extend to the well-studied online setting. This is because, in the online setting, goods typically arrive one at a time, and valuations over these goods can be arbitrary---potentially over an unlimited set.

\paragraph{Efficiency.} We also consider \emph{Pareto optimality}, a notion of economic efficiency commonly studied in the social choice literature. Formally, a matching $M$ is said to \emph{weakly Pareto dominates} another matching $M_{0}$ if all agents $i\in N$ receive at least as much value under $M$ as $M_{0}$, that is, $u_{i}(M(i))\geq u_{i}(M_{0}(i))$. A matching $M$ is said to \emph{strongly Pareto dominates} $M_{0}$ if $M$ weakly Pareto dominates $M_{0}$ and there exist some agent $i\in N$ with $u_{i}(M(i))>u_{i}(M_{0}(i))$. A matching $M$ is \emph{Pareto optimal} when no matching strongly Pareto dominates $M$.

\subsection{Allocations and Bistochastic Matrices}

Working with sequences of matchings can be challenging due to the constraints imposed by each matching. It would be helpful if we could ignore these constraints in our analysis and focus solely on the frequency with which each good is allocated to each agent. We refer to such an abstraction as an allocation. An \emph{allocation} $A=(A_{1},\dots,A_{n})$ is a collection of multiset, where $A_{i}$ is the multiset of goods that are allocated to agent $i\in N$. We can represent an allocation as a matrix $A$ where $A_{ij}$ is the number of times good $g_{j}\in G$ appears in $A_{i}$. The value that agent $i$ receives under $A$ is defined as
\begin{equation*}
    v_{i}(A)
    \coloneqq\sum_{\mathclap{g\in A_{i}}}u_{i}(g)
    =\sum_{\mathclap{g_{j}\in G}}A_{ij}u_{i}(g_{j}).
\end{equation*}

\Cref{lem:alloc-to-seq} states that an allocation can be transformed into a polynomial-length sequence of unique matching. Hence, when a proof is phrased in terms of allocations instead of a sequence, no generality is lost. Accordingly, we will often reason with allocations in our proofs, invoking the lemma whenever an explicit sequence of matchings is required.

\begin{restatable}{lemma}{alloctoseq}
    \label{lem:alloc-to-seq}
    Suppose $A\in\bbR^{n\times m}$ is an allocation with
    \begin{equation*}
        \sum_{\mathclap{i\in N}}A_{ij}\leq T\quad\text{and}\quad\sum_{\mathclap{g_{j}\in G}}A_{ij}\leq T.
    \end{equation*}
    Then, there exist a sequence of matchings $S$ consisting of $d\leq m^{2}-m+1$ unique matchings that satisfy $v_{i}^{T}(S)\geq v_{i}(A)$. This can be computed in polynomial time.
\end{restatable}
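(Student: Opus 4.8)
The plan is to pass from sequences of matchings to a square, ``doubly $T$-balanced'' matrix, and then strip off permutation matrices one at a time (Birkhoff--von Neumann style), bounding the number of \emph{distinct} ones by an elementary entry-counting argument.

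\textit{Reduction to a balanced square matrix.} We may take $A$ to be integral (allocations are, by definition). Introduce $m-n$ dummy agents to obtain an ambient agent index set of size $m$, and look for a non-negative integer $m\times m$ matrix $B$ such that (i) $B_{ij}\ge A_{ij}$ for every real agent $i\in[n]$ and every $g_j\in G$, and (ii) every row sum and every column sum of $B$ equals $T$. Writing $B=A'+B'$, where $A'$ is $A$ padded with $m-n$ zero rows, condition (i) becomes $B'\ge 0$, and (ii) becomes: $B'$ has row sums $T-r_i$ for $i\le n$ and $T$ for $i>n$, and column sums $T-c_j$, where $r_i=\sum_j A_{ij}$ and $c_j=\sum_{i} A_{ij}$. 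By hypothesis $r_i\le T$ and $c_j\le T$, so all these prescribed margins are non-negative integers, and both the row-margin total and the column-margin total equal $mT-\sum_i r_i$; hence such a $B'$ exists and is computable in polynomial time (northwest-corner rule, or an integral transportation flow). This is the only place the hypotheses $\sum_i A_{ij}\le T$ and $\sum_j A_{ij}\le T$ are used.

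\textit{Greedy peeling and verification.} Put $B^{(0)}=B$. As long as $B^{(s-1)}\neq 0$, all of its line sums equal some common integer $t_{s-1}\ge 1$, so $B^{(s-1)}/t_{s-1}$ is doubly stochastic; by Birkhoff--von Neumann (equivalently, Hall's theorem on its support) there is a permutation matrix $P_s$ with $(P_s)_{ij}=1\Rightarrow B^{(s-1)}_{ij}>0$. Let $c_s=\min\{B^{(s-1)}_{ij}:(P_s)_{ij}=1\}\ge 1$ and set $B^{(s)}=B^{(s-1)}-c_sP_s\ge 0$; this keeps all line sums equal (now to $t_{s-1}-c_s$), drives at least one previously-positive entry to $0$, and creates no new positive entries. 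The process ends with $B^{(d)}=0$, and since line sums drop from $T$ to $0$ in steps of $c_s$ we get $\sum_{s=1}^d c_s=T$. Restricting each $P_s$ to the rows $[n]$ gives a matching $M^s$; listing $M^s$ with multiplicity $c_s$ yields a length-$T$ sequence $S=(M^1,\dots,M^T)\in\bbS^{T}$ in which every agent $i\in N$ is matched to $g_j$ in exactly $\sum_s c_s(P_s)_{ij}=B_{ij}$ rounds, so $v_i^{T}(S)=\sum_j B_{ij}u_i(g_j)\ge\sum_j A_{ij}u_i(g_j)=v_i(A)$ for all $i\in N$, using non-negativity of valuations.

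\textit{Counting and the main obstacle.} Let $p(X)$ denote the number of nonzero entries of $X$. Each peel strictly decreases $p$, so $p(B^{(0)})>p(B^{(1)})>\dots>p(B^{(d-1)})$ are distinct positive integers; moreover $p(B^{(0)})\le m^2$, while whenever $B^{(s-1)}\neq 0$ all of its row sums are positive, forcing $p(B^{(s-1)})\ge m$. Hence $d\le m^2-m+1$, and the number of distinct matchings among $M^1,\dots,M^T$ is at most $d$ (merging equal $P_s$'s only helps). Each peel costs one bipartite perfect-matching computation plus an $O(m)$ minimum, and there are at most $m^2-m+1$ of them, so the whole construction runs in polynomial time. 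The genuinely problem-specific step — and the main obstacle — is this counting argument: everything hinges on the twin observations that a greedy peel kills an entry and that a nonzero balanced matrix has at least $m$ nonzero entries, which together pin the bound exactly at $m^2-m+1$.
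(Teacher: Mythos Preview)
Your proof is correct and follows essentially the same two-step strategy as the paper: complete the padded allocation to a square integer matrix with all line sums equal to $T$, then apply a Birkhoff-style greedy decomposition into permutation matrices with integer multiplicities. The only cosmetic differences are that you fill in the margins via a transportation argument (northwest-corner) whereas the paper uses an ad-hoc iterative increment, and you supply the $m^{2}-m+1$ counting argument in full rather than invoking the Birkhoff--von Neumann theorem as a black box.
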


Several proofs of our results, including the preceding lemma, represent an allocation as a bistochastic matrix. A \emph{bistochastic matrix} is a non‑negative square matrix whose rows and columns each sum to 1, and a \emph{scaled integer bistochastic matrix} is its integer counterpart, with non‑negative integer entries and the sum of each row and column is a common integer. We defer an extended discussion of the mathematical preliminaries (along with all other omitted proofs in this paper) to the appendix.

\section{Finding Optimal Sequences} \label{sec:optimal-hardness}
We begin by focusing on optimality in this section.
We first show that finding an optimal sequence of matchings is computationally intractable. We then show an relationship between a multiplicative approximation to our problem and the popular Santa Claus problem.
Since computing exact solutions is intractable for large instances, we propose an approximation algorithm to find a near-optimal sequence efficiently. 
We also complement the hardness result by introducing a fixed-parameter tractable (FPT) algorithm that finds an optimal sequence when $n$ or $m$ is a constant, thereby providing an efficient algorithm for practical applications.

We assume that the reader is familiar with basic notions of classic complexity theory~\citep{papadimitriou_computational_2007} and parameterized complexity \citep{flum_parameterized_2006,niedermeier_invitation_2006}.

\subsection{Hardness Results}
\label{subsec:hardness}

Consider the decision problem associated with the egalitarian repeated matching problem, as follows. 
\begin{tcolorbox}[title=\textsc{Egalitarian Repeated Matching (ERM)}]
    \textbf{Input}: An instance $\instfull$ and a target $\kappa$.
    \tcblower
    \textbf{Question}: Is there a sequence $S\in\bbS^{T}$ with $b^{T}(S)\geq\kappa$? 
\end{tcolorbox}

We show that \textsc{ERM} is \NP-complete by reducing from a known \NP-hard problem, \textsc{3-occ-3-sat} (defined in the proof).
This result also implies that \textsc{ERM} is \APX-hard—that is, there exists no polynomial-time approximation scheme (PTAS) for the problem. Our result is as follows.
\begin{restatable}{theorem}{optimalhardness}
    \label{thm:optimal-hardness}
    \textsc{ERM} is \NP-complete (and \APX-hard) even when $u_i(g) \in \{0,0.5,1\}$ for all $i \in N$ and $g \in G$, for any $T\geq 2$.
\end{restatable}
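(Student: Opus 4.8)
The plan is to reduce from \textsc{3-occ-3-sat} (3-SAT where each variable occurs in at most 3 clauses), which is well-known to be \NP-hard. I would first handle the base case $T=2$ and then argue the reduction lifts to all $T \geq 2$. Given a formula $\varphi$ with variables $x_1,\dots,x_r$ and clauses $C_1,\dots,C_s$, the idea is to build an \textsc{ERM} instance with one agent per variable and one agent per clause (so $n = r + s$), and to use the two rounds to encode the choice of a truth assignment: in round~$1$ each variable-agent ``commits'' to a literal, and in round~$2$ the clause-agents try to collect a satisfying literal. The ternary valuation set $\{0, 0.5, 1\}$ is exactly what one needs to create the gadget where a variable-agent can reach the target $\kappa$ in two different ways (true vs.\ false), and a clause-agent reaches $\kappa$ only if at least one of its literals was set the ``right'' way.

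Concretely, I would introduce, for each variable $x_k$, two ``literal goods'' $g_k^{+}$ and $g_k^{-}$, valued $1$ by the variable-agent $a_k$ and $0$ by everyone else except the clause-agents that contain that literal, who value the corresponding good $0.5$ (or $1$). For each variable-agent $a_k$ I also add a ``slack good'' valued so that $a_k$ can always hit exactly $\kappa$ over the two rounds regardless of which of $g_k^{+}, g_k^{-}$ it picks in round~$1$; the point is that in round~$1$, $a_k$ must take one of $g_k^{\pm}$ (this corresponds to \emph{falsifying} that literal / the other literal being true, or some such encoding—I'd pin down the exact polarity so the bookkeeping works), freeing the complementary good for the clause-agents in round~$2$. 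Set $\kappa = 1$ (scaling as needed). A clause-agent $c_j$ then can reach $\kappa$ in round~$2$ only by being matched to a literal good that its variable-agent did \emph{not} consume, i.e., a literal set true by the assignment; since each agent gets exactly one good per round and there are two rounds, I need enough ``filler'' $0$-goods and a careful count (here $m \geq n$ is automatic and the bounded-occurrence property of \textsc{3-occ-3-sat} keeps the number of clause-agents competing for any single literal good to at most $3$, which is what lets the matching constraints be satisfiable exactly when $\varphi$ is). The forward direction (satisfying assignment $\Rightarrow$ sequence with $b^2(S) \geq \kappa$) is routine; the reverse direction (any sequence achieving the bottleneck target induces a consistent satisfying assignment) is where the gadget must be tight—I must ensure no agent can ``cheat'' by grabbing goods meant for others, which is enforced by making all cross-valuations $0$ and by the exact arithmetic of $\kappa$ versus the available $0.5$/$1$ values.

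To extend from $T=2$ to arbitrary $T \geq 2$, I would add $T-2$ extra rounds' worth of ``universal'' goods each valued $1$ by every agent (or pad every agent with a private stash of such goods), and raise $\kappa$ by $T-2$; since these rounds contribute the same amount to every agent, they do not affect which instances are yes-instances, so hardness for $T=2$ propagates. Membership in \NP\ is immediate: a sequence of $T$ matchings is a polynomial-size certificate and $b^T(S)$ is checkable in polynomial time. \APX-hardness (ruling out a PTAS) then follows because \textsc{3-occ-3-sat} is \textsf{MAX-SNP}-hard, so the reduction—being a gap-preserving one, as the ``no'' case leaves some clause-agent short by a fixed additive amount bounded below by a constant fraction of $\kappa$—transfers inapproximability; I'd state this via the standard fact that an \NP-hardness reduction with a constant multiplicative gap precludes a PTAS.

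\smallskip
\noindent\textbf{Main obstacle.} The delicate part is making the two-round gadget \emph{tight}: I must choose the valuations and $\kappa$ so that (i) every variable-agent is \emph{forced} to consume exactly one of its two literal goods in round~$1$ (neither can it skip both nor take both, and a ``wrong'' choice in round~$1$ cannot be repaired in round~$2$), and (ii) the only way for \emph{all} clause-agents to reach $\kappa$ is for the induced assignment to satisfy every clause, with the bounded-occurrence hypothesis ensuring the round-$2$ matching among clause-agents and leftover literal goods actually exists. Balancing these constraints with only the three values $\{0,0.5,1\}$—rather than an unrestricted valuation range—is the crux of the construction.
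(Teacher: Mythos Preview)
Your high-level plan (reduce from \textsc{3-occ-3-sat}, set $T=2$, $\kappa=1$, use agents for variables and clauses) matches the paper's, and you correctly identify the crux. However, the gadget you sketch does not close the gap you yourself flag, and in fact it cannot with only one agent per variable.

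The fundamental issue is that in repeated matching every good is available afresh in every round. Over two rounds there are two ``slots'' for $g_k^{+}$ and two for $g_k^{-}$; your single variable-agent $a_k$ occupies exactly two of these four slots (one per round), and nothing in your construction prevents $a_k$ from occupying one slot of each, e.g.\ $g_k^{+}$ in round~1 and $g_k^{-}$ in round~2. In that case $g_k^{-}$ is free in round~1 and $g_k^{+}$ is free in round~2, so a clause containing $x_k$ and a different clause containing $\bar{x}_k$ can \emph{both} be satisfied---the induced assignment is inconsistent. Your slack good does not help: if $a_k$ can reach $\kappa$ via a slack good in one round, then $a_k$ need only block one literal slot, leaving three of the four literal slots free and again permitting inconsistency. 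With values restricted to $\{0,0.5,1\}$ and a single agent, there is no way to force $a_k$ to sit on the \emph{same} literal good in both rounds. The paper resolves this by using \emph{three} agents per variable with a carefully chosen $3\times 3$ valuation block
\[
\begin{pmatrix} 0.5 & 0.5 & 0\\ 0 & 0 & 1\\ 0.5 & 0.5 & 1 \end{pmatrix},
\]
so that reaching $\kappa=1$ for all three variable-agents simultaneously forces the leftover slots to be either ``two copies of $g_{i1},g_{i2}$ free'' (encoding \texttt{TRUE}) or ``one copy of $g_{i3}$ free'' (encoding \texttt{FALSE}), never a mix. The soundness argument then shows that if clause-agents grab goods witnessing both polarities of $x_i$, the three variable-agents cannot all reach $1$.

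Two smaller points. First, your lift to $T\ge 3$ by adding private goods of value~$1$ and raising $\kappa$ by $T-2$ does not work as stated: a variable-agent could then spend all $T$ rounds on its private good and ignore the literal goods entirely, decoupling the gadget. The paper instead adds, for each good, $T-2$ dummy \emph{agents} that value only that good, which forces every original good to be used at most twice by the original agents. Second, your \APX-hardness sketch via ``the reduction is gap-preserving from \textsf{MAX-SNP}'' is the wrong mechanism here: we are maximizing a \emph{minimum}, so the gap is not about the fraction of unsatisfied clauses. The paper's argument is cleaner and direct: because all values lie in $\{0,0.5,1\}$ and $T=2$, any bottleneck value strictly above $0.5$ must be at least $1$; hence a $(2-\varepsilon)$-approximation would distinguish $\OPT\ge 1$ from $\OPT\le 0.5$ and decide the \NP-hard problem.
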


An implication of \textsc{ERM} not having a PTAS is that only constant-factor multiplicative approximations may be possible (though its existence is not guaranteed). We define this formally: for any $c \in [1,\infty)$, we say that an algorithm is $c$-\emph{approximate} (or simply \emph{$c$-approx}) if the sequence $S\in\bbS^{t}$ returned by the algorithm satisfy $b^{t}(S)\geq\OPT/c$ for all $t\in[T]$. When $c=1$, we have an exact algorithm.
A natural question is whether \textsc{ERM} admits a $c$-approx algorithm, for some constant $c \in [1,\infty)$.
Interestingly, we show that the existence of a $c$-approx algorithm for \textsc{ERM} would imply the existence of a $c$-approx algorithm for the single-shot egalitarian fair division problem (i.e., the Santa Claus problem with additive valuations\footnote{We specify ``additive valuations'' explicitly as some works (e.g., \citet{davies2020santaclaus_constantapprox}) consider a more restricted variant of the Santa Claus problem with restricted additive valuations.}).

\begin{restatable}{proposition}{SCred}
    \label{prop:SC-red}
    For any $c \in [1,\infty)$, there is a $c$-approx algorithm for ERM only if there is a $c$-approx algorithm for the Santa Claus problem with additive valuations.
\end{restatable}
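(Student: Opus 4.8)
The plan is to give an approximation-preserving reduction \emph{from} the Santa Claus problem with additive valuations \emph{to} ERM; equivalently, I will show how to convert a $c$-approx ERM algorithm into a $c$-approx algorithm for Santa Claus. The conceptual gap to bridge is that in ERM a single good may be handed to many agents over the $T$ rounds, whereas in Santa Claus each item is assigned to exactly one agent. A naive encoding ``item $j$ $\leftrightarrow$ good $g_j$'' therefore fails: an ERM solution could repeatedly hand the same good to a single agent and thereby beat the Santa Claus optimum, leaving no way to decode it back. \textbf{The main obstacle is to design a gadget that rules out this over-use}, so that \emph{every} ERM sequence decodes losslessly into a genuine Santa Claus allocation.

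The construction I propose is the following. Given a Santa Claus instance with agents $[n]$, items $[p]$, and additive values $w_i(j)\geq 0$, let $W\coloneqq 1+\sum_{i\in[n],\,j\in[p]}w_i(j)$ (a strict upper bound on the Santa Claus optimum $\OPT_{\mathrm{SC}}$). Build an ERM instance with $T=p$ rounds whose goods are $p$ ``real'' goods $g_1,\dots,g_p$ together with polynomially many zero-valued dummy goods, and whose agents are the original agents $[n]$ (with $u_i(g_j)=w_i(j)$) plus, for each item $j$, a block of $T-1=p-1$ ``phantom'' agents $\phi_{j,1},\dots,\phi_{j,p-1}$, each valuing only $g_j$ at $W$ and every other good at $0$. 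This is clearly computable in polynomial time.

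I would then establish two facts. \textbf{(i)} $\OPT(T)\geq\OPT_{\mathrm{SC}}$: given an optimal Santa Claus allocation, use each good $g_j$ exactly once in every round---once by its owner (spread across the rounds according to a fixed ordering of that agent's bundle) and once by each of its $p-1$ phantoms---and pad the remaining agent--round slots with distinct dummies; this is a valid sequence of matchings in which every original agent collects its bundle value $\geq\OPT_{\mathrm{SC}}$ and every phantom collects $W\geq\OPT_{\mathrm{SC}}$. \textbf{(ii)} From \emph{any} sequence $S$ one can compute in polynomial time a Santa Claus allocation $A$ with $\min_{i\in[n]}w_i(A_i)\geq b^T(S)$. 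This is the crux: if $b^T(S)>0$ then each phantom $\phi_{j,k}$ must receive $g_j$ in at least one round (it has no other source of value), so the $p-1$ phantoms of block $j$ together consume at least $p-1$ of the at most $p$ available uses of $g_j$; hence at most one use of $g_j$ remains for the original agents. Thus each real good is taken by at most one original agent and at most once, so letting $\sigma(j)$ be that agent when it exists (and distributing the leftover items arbitrarily) yields $w_i(A_i)=v_i^T(S)\geq b^T(S)$ for every $i$. (Incidentally, (i) and (ii) together force $\OPT(T)=\OPT_{\mathrm{SC}}$, which also covers the degenerate case $\OPT_{\mathrm{SC}}=0$.)

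Finally I would chain everything: run the hypothetical $c$-approx ERM algorithm on the constructed instance to obtain $S$ with $b^T(S)\geq\OPT(T)/c\geq\OPT_{\mathrm{SC}}/c$, then decode it via (ii) into a Santa Claus allocation of value $\geq\OPT_{\mathrm{SC}}/c$. The only routine care needed in the write-up is the bookkeeping that makes the matchings in (i) legitimate (one distinct good per agent per round, with enough dummies) and noting that the phantom argument in (ii) needs nothing beyond $b^T(S)>0$---in particular it is irrelevant whether $b^T(S)$ happens to exceed $W$.
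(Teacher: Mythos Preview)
Your proposal is correct and follows essentially the same approach as the paper: both constructions set $T$ equal to the number of Santa Claus items and introduce, for each item $j$, a block of $T-1$ ``guard'' agents that value only $g_j$, forcing any sequence with positive bottleneck to leave at most one copy of each real good for the original agents. Your write-up is slightly more careful in two minor respects---you add dummy goods explicitly (the paper relies on its earlier WLOG assumption that $m\geq n$) and you choose the guard value $W$ as a strict global upper bound rather than the paper's $v_1(G)$---but the reduction and the decoding argument are the same.
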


The result above implies that finding even a constant-factor multiplicative approximation algorithm for \textsc{ERM} is likely to be very challenging.
This is because, despite the Santa Claus problem being a well-studied and long-standing problem, no constant-factor approximation is currently known for the version with general additive valuations. A constant-factor approximation is only known in the restricted additive case.\footnote{The current best known approximation factor is $(4 + \varepsilon)$, for a small $\varepsilon > 0$ in this restricted case \citep{davies2020santaclaus_constantapprox}.}

\subsection{Approximation Algorithm}
\label{subsec:approxalgo}

Given the results above, we focus on whether we can achieve an \emph{additive} approximation with respect to optimality instead.
We now describe an approximation algorithm that achieves an additive approximation bound independent of the number of rounds $T$. Crucially, this implies that as $T$ increases, the approximate solution converges rapidly to the optimal one.
The setting when the number of rounds is large can be observed in applications where the matching process runs continuously over extended periods---such as dynamic spectrum allocation (where the system operates continuously, often measured in (milli)seconds), leading to an immense number of allocation rounds.

Without loss of generality, we can assume that $n=m$; otherwise, we can simply create $m-n$ dummy agents with $u_{i}(g_{j})=\max_{i'\in N}\max_{g'_{j}\in G}u_{i'}(g'_{j})$ for all dummy agents $i$ and goods $g$. Then, consider the following linear program:

\begin{subequations}
    \begin{alignat}{2}
    \maximize_{b,B} \quad & b\tag{P1}\label{eqn:approx-general-lp}\\
    \subjectto \quad & \sum_{\mathclap{g_{j}\in G}} B_{ij}u_{i}(g_{j})\geq b, && \quad\forall i\in N,\nonumber\\
    & \sum_{\mathclap{g_{j}\in G}} B_{ij}=1, && \quad\forall i\in N,\nonumber\\
    & \sum_{i\in N} B_{ij}=1, && \quad\forall g_{j}\in G,\nonumber\\
    & B_{ij} \geq 0, && \quad\forall i\in N,\,\forall g_{j}\in G.\nonumber
    \end{alignat}
\end{subequations}
Note that the solution to (\ref{eqn:approx-general-lp}) is a bistochastic matrix $B$. Our approximation algorithm uses Birkhoff's algorithm to decompose $B$ into a convex combination of matchings. The number of times each matchings are included in the sequence is then determined by the convex coefficients (see \Cref{alg:approx-general}).

\begin{algorithm}[ht]
    \caption{Approximation algorithm for finding an optimal sequence of matchings}
    \label{alg:approx-general}
    \textbf{Input}: An instance $\calI=\instfull$
    
    \begin{algorithmic}[1]
        \STATE let $B$ be the solution to linear program (\ref{eqn:approx-general-lp})
        \STATE decompose $B$ into $\alpha_{1}M_{1}+\dots+\alpha_{d}M_{d}$ using \Cref{alg:birkhoff}
        \STATE let $S$ be an empty sequence
        \STATE add $\lfloor T\alpha_{k}\rfloor$ copies of $M_{k}$ in $S$ for each $k\in[d]$
        \STATE add any matchings into $S$ so that $|S|=T$
        \STATE \textbf{return} $S$ 
    \end{algorithmic}
\end{algorithm}

Then, we prove the following result.
\begin{restatable}{theorem}{approxgeneral}
    \label{thm:approx-general}
    Given an instance $\instfull$, the sequence $S\in\bbS^{T}$ returned by \Cref{alg:approx-general} satisfy
    \begin{equation*}
        b^{T}(S)\geq\OPT(T)-m\cdot\max_{i\in N}\max_{g\in G}u_{i}(g).
    \end{equation*}
\end{restatable}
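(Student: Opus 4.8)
The plan is to compare the LP optimum against $\OPT(T)$ and then bound how much value each agent loses when the fractional bistochastic solution $B$ is rounded down to an integer allocation by Birkhoff's algorithm.

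\textbf{Step 1: LP optimum dominates $\OPT(T)/T$.} First I would argue that the optimal value $b^{*}$ of (\ref{eqn:approx-general-lp}) satisfies $b^{*} \geq \OPT(T)/T$. To see this, take any optimal sequence $S^{*}=(M^{1},\dots,M^{T})$ achieving $b^{T}(S^{*})=\OPT(T)$, and form the average matrix $B^{*} = \frac{1}{T}\sum_{s=1}^{T} M^{s}$. Since each $M^{s}$ is a permutation matrix (recall $n=m$ here, so a matching is a bijection), $B^{*}$ is a convex combination of permutation matrices, hence bistochastic and feasible for (\ref{eqn:approx-general-lp}). Its bottleneck is $\min_{i}\sum_{j}B^{*}_{ij}u_{i}(g_j) = \frac{1}{T}\min_i v_i^T(S^*) = \OPT(T)/T$, so $b^{*}\geq \OPT(T)/T$.

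\textbf{Step 2: Track the loss from rounding.} Let $B = \sum_{k=1}^{d}\alpha_k M_k$ be the Birkhoff decomposition, with $\sum_k \alpha_k = 1$ and $d \leq (m-1)^2+1$ (or whatever bound \Cref{alg:birkhoff} gives; only $d\leq$ poly matters, and in fact I will want $\sum_k (\alpha_k - \lfloor T\alpha_k\rfloor/T)$ small). The sequence $S$ contains $\lfloor T\alpha_k\rfloor$ copies of $M_k$ plus arbitrary fillers to reach length $T$; throwing away the fillers only helps, so for each agent $i$,
\begin{equation*}
    v_i^T(S) \;\geq\; \sum_{k=1}^{d} \lfloor T\alpha_k\rfloor\, u_i(M_k(i))
    \;=\; T\sum_{k=1}^{d}\alpha_k u_i(M_k(i)) \;-\; \sum_{k=1}^{d}\bigl(T\alpha_k - \lfloor T\alpha_k\rfloor\bigr) u_i(M_k(i)).
\end{equation*}
The first term is exactly $T\cdot\sum_j B_{ij}u_i(g_j) \geq T b^{*} \geq \OPT(T)$ by feasibility of $B$ and Step 1. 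For the error term, bound each $u_i(M_k(i)) \leq u_{\max} := \max_{i}\max_{g}u_i(g)$ and each fractional part $T\alpha_k - \lfloor T\alpha_k\rfloor < 1$, giving an upper bound of $d\cdot u_{\max}$ on the loss. Combining, $b^T(S) = \min_i v_i^T(S) \geq \OPT(T) - d\cdot u_{\max}$.

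\textbf{The gap to close, and the main obstacle.} The naive bound above gives $d \cdot u_{\max}$ with $d \approx m^2$, not the claimed $m\cdot u_{\max}$. The main obstacle is therefore sharpening the error analysis to save a factor of $m$. The key observation I would use is that $\sum_{k=1}^{d}(T\alpha_k - \lfloor T\alpha_k\rfloor)$ is an integer (since $\sum_k T\alpha_k = T$ and $\sum_k \lfloor T\alpha_k\rfloor$ is an integer) and is strictly less than $d$, but more usefully, the total "leftover mass" $T - \sum_k \lfloor T\alpha_k\rfloor$ equals the number of filler matchings added in line 5, which is at most $d-1$; actually the cleaner route is: the number of indices $k$ with a nonzero fractional part is at most $d$, but we can do better by noting $\sum_k(T\alpha_k-\lfloor T\alpha_k\rfloor) \le d-1$ is still too weak. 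The right idea is to bound the loss per agent differently: for a fixed agent $i$, $\sum_k (T\alpha_k - \lfloor T\alpha_k\rfloor)u_i(M_k(i))$ — rather than bounding each term, observe that $\sum_k (T\alpha_k)u_i(M_k(i)) - \sum_k \lfloor T\alpha_k\rfloor u_i(M_k(i))$, and since only $T - \sum_k\lfloor T\alpha_k\rfloor \le m-1$ "units" of matchings are missing and each contributes at most $u_{\max}$ to agent $i$'s value, the per-agent loss is at most $(m-1)u_{\max} \le m\cdot u_{\max}$. Wait — this requires that dropping from the fractional $T\alpha_k$ to $\lfloor T\alpha_k\rfloor$ corresponds to removing at most $m-1$ rounds' worth of matchings from agent $i$'s bundle, which holds because $\sum_k \lfloor T\alpha_k\rfloor \ge T - d + 1$ only if $d \le m$; so I actually need the Birkhoff decomposition with $d \le m$ terms. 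Re-examining: standard Birkhoff–von Neumann on an $m\times m$ bistochastic matrix does give a decomposition into at most $(m-1)^2+1$ permutations, but a sparser decomposition into $\le m$ permutations is not generally possible. Hence the correct fix is: in line 5 at most $T - \sum_k \lfloor T\alpha_k\rfloor = \sum_k(T\alpha_k - \lfloor T\alpha_k\rfloor) \le d-1$ fillers are added, BUT for each agent the total value of the removed-and-not-replaced portion telescopes — I would instead directly write $v_i^T(S) \ge \sum_k \lfloor T\alpha_k\rfloor u_i(M_k(i))$ and bound $T b_i - \sum_k\lfloor T\alpha_k\rfloor u_i(M_k(i)) = \sum_k(T\alpha_k-\lfloor T\alpha_k\rfloor)u_i(M_k(i)) \le u_{\max}\sum_k(T\alpha_k - \lfloor T\alpha_k\rfloor)$, and the sum $\sum_k(T\alpha_k-\lfloor T\alpha_k\rfloor)$ — each summand being in $[0,1)$ and the total being a nonnegative integer less than $d$ — is at most $m-1 < m$ precisely when the number of permutations with nonzero fractional weight-times-$T$ is controlled; the honest statement is that this sum is at most $d-1$, so to match the theorem I must invoke a Birkhoff variant (cf. \Cref{alg:birkhoff}) guaranteeing the integer-scaled decomposition has total fractional defect $< m$, e.g. because the algorithm produces at most $m$ distinct matchings with the filler count bounded by $m-1$. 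I would check which exact guarantee \Cref{alg:birkhoff} provides and thread that bound through; everything else is the routine chain of inequalities above, and the bottleneck structure ($\min_i$) passes through unchanged since the bound holds for every $i$ simultaneously.
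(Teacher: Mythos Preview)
Your Steps 1 and 2 are fine, and you have correctly diagnosed the obstacle: rounding the Birkhoff coefficients $\alpha_k$ yields a per-agent loss of at most $d\cdot u_{\max}$ with $d$ potentially as large as $m^2-m+1$, and there is no general Birkhoff variant with $d\leq m$. The hope you articulate at the end does not pan out---\Cref{alg:birkhoff} only guarantees $d\leq m^2-m+1$, and the total fractional defect $\sum_k(T\alpha_k-\lfloor T\alpha_k\rfloor)$ can genuinely be of order $m^2$.

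The paper closes the gap by rounding a different object. Rather than taking $\lfloor T\alpha_k\rfloor$ copies of each $M_k$ and bounding the per-coefficient loss, it forms the integer allocation $A_{ij}=\lfloor TB_{ij}\rfloor$ directly from the LP optimum $B$---rounding the \emph{entries} of $B$, not the decomposition coefficients. Then for each agent $i$,
\[
v_i(A)=\sum_{g_j\in G}\lfloor TB_{ij}\rfloor\,u_i(g_j)\;\geq\;\sum_{g_j\in G}(TB_{ij}-1)\,u_i(g_j)\;=\;T\sum_{g_j\in G}B_{ij}u_i(g_j)\;-\;\sum_{g_j\in G}u_i(g_j),
\]
and the error term $\sum_{j}u_i(g_j)$ is at most $m\cdot u_{\max}$ simply because the sum has $m$ terms, one per column of $B$, rather than one per permutation in a decomposition. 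Since $B$ is bistochastic, $A$ has row and column sums at most $T$, so \Cref{lem:alloc-to-seq} converts $A$ into a length-$T$ sequence $S$ with $v_i^T(S)\geq v_i(A)$ for every $i$; combined with your Step~1 this gives the claimed bound.

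One wrinkle worth noting: the sequence the paper's proof actually analyzes is the one produced by \Cref{lem:alloc-to-seq} from $A_{ij}=\lfloor TB_{ij}\rfloor$, which is not literally the output of \Cref{alg:approx-general} as written (the algorithm uses $\lfloor T\alpha_k\rfloor$ copies of $M_k$). Because $\sum_k\lfloor T\alpha_k\rfloor(M_k)_{ij}\leq\lfloor TB_{ij}\rfloor$ in general, your analysis is in fact the natural one for the algorithm as stated; the paper's argument establishes the $m\cdot u_{\max}$ bound for a slightly different (still polynomial-time) construction.
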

\begin{proof}
    Consider the allocation $A$ in which $A_{ij}=\lfloor TB_{ij}\rfloor$ for all $i\in N$ and $g_{j}\in G$. Note that for each $g_{j}\in G$, we have
    \begin{equation*}
        \sum_{\mathclap{i\in N}}A_{ij}
        =\sum_{\mathclap{i\in N}}\,\lfloor TB_{ij}\rfloor
        \leq\sum_{\mathclap{i\in N}}TB_{ij}
        =T,
    \end{equation*}
    and similarly, for each $i\in N$, we have
    \begin{equation*}
        \sum_{\mathclap{g_{j}\in G}}A_{ij}
        =\sum_{\mathclap{g_{j}\in G}}\,\lfloor TB_{ij}\rfloor
        \leq\sum_{\mathclap{g_{j}\in G}}TB_{ij}
        =T.
    \end{equation*}
    By \Cref{lem:alloc-to-seq}, there exist a sequence $S$ over $T$ rounds composed of at most $O(m^{2})$ unique matchings such that $v_{i}^{T}(S)\geq v_{i}(A)$. Then, for any agent $i\in N$, we have
    \begin{align*}
        v_{i}^{T}(S)
        \geq v_{i}(A)
        &\geq\sum_{\mathclap{g_{j}\in G}}u_{i}(g_{j})\lfloor TB_{ij}\rfloor\\
        &\geq\sum_{\mathclap{g_{j}\in G}}u_{i}(g_{j})\cdot(TB_{ij}-1)\\
        &=\sum_{\mathclap{g_{j}\in G}}TB_{ij}u_{i}(g_{j})-\sum_{\mathclap{g_{j}\in G}}u_{i}(g_{j})\\
        &\geq T b-m\cdot\max_{\mathclap{g_{j}\in G}}u_{i}(g_{j})\\
        &\geq\OPT(T)-m\cdot\max_{\mathclap{g_{j}\in G}}u_{i}(g_{j}).
    \end{align*}
    Let $k\in N$ be a bottleneck agent of sequence $S$ at round $T$ so that $b^{T}(S)=v_{k}^{T}(S)$. Then, we have
    \begin{equation*}
        b^{T}(S)
        \geq\OPT(T)-m\cdot\max_{\mathclap{g_{j}\in G}}u_{k}(g_{j}) \geq\OPT(T)-m\cdot\max_{\mathclap{i\in N}}\max_{\mathclap{g_{j}\in G}}u_{i}(g_{j}). \qedhere
    \end{equation*}

\end{proof}
Note that although the maximum valuation can be arbitrarily large, they are typically bounded in practice. Consequently, such a bound remains informative and relevant. Instance‑dependent additive bounds of this type are well-established in the literature, particularly in the context of stochastic bandits \citep{lattimore2020bandits,lim2024stochastic} and online fair division \citep{benade2018envyvanish,hajiaghayi2022santaclaus}.

\subsection{Fixed-Parameter Tractable (FPT) Algorithm}
\label{subsec:fptalgo}

Next, we consider another approach to dealing with computational intractability.
We show that the problem is \emph{fixed parameter tractable} (FPT) when the number of agents is a fixed parameter, i.e., there exists an algorithm that can compute an optimal sequence in polynomial-time when $n$ is a constant. This provides a practical solution for small-group matching. Our result is as follows.
\begin{restatable}{theorem}{fptgeneral}
    \label{thm:fpt-general}
    Given an instance $\instfull$, \textsc{ERM} is FPT with respect to $n$.
\end{restatable}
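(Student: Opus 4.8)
The plan is to show that when $n$ is a constant, we can compute $\OPT(T)$ and an optimal sequence in time polynomial in the input size (which includes $m$ and $\log T$, or $T$ in unary depending on the encoding). The key structural idea is that, although the space of sequences is enormous, the relevant information about a single round's matching is captured entirely by the \emph{valuation profile} it induces, i.e.\ the $n$-tuple $(u_1(M(1)),\dots,u_n(M(n)))$; and the set of \emph{achievable} valuation profiles of a single matching is small when $n$ is constant. First I would observe that a matching $M$ only uses $n$ of the $m$ goods, so its valuation profile is determined by which $n$ goods are chosen and how they are assigned; but crucially, by the characterization of Pareto-optimal matchings in terms of permutations of agents (mentioned in the introduction as a tool in \Cref{sec:prelims}), it suffices to consider, for each of the $n!$ permutations $\pi$ of $N$, the matching obtained by greedily letting the agents pick their favorite remaining good in the order $\pi$ — this yields a set $\mathcal{P}$ of at most $n!$ ``canonical'' valuation profiles, and every matching is weakly Pareto-dominated by one of these (so using only canonical matchings loses nothing for a monotone objective like the bottleneck value). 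Since $n$ is a constant, $|\mathcal{P}| \le n!$ is a constant, and $\mathcal{P}$ is computable in polynomial time.

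Next I would reformulate the problem via allocations. By \Cref{lem:alloc-to-seq} and the reverse direction (a length-$T$ sequence yields an allocation with row and column sums at most $T$), finding an optimal sequence reduces to choosing how many times each canonical matching is used. Concretely, for a multiplicity vector $(c_P)_{P \in \mathcal{P}}$ of nonnegative integers with $\sum_{P} c_P = T$, agent $i$ receives total value $\sum_{P \in \mathcal{P}} c_P \cdot P_i$, and we want to maximize $\min_{i \in N} \sum_P c_P P_i$. This is an integer program with a constant number of variables ($|\mathcal{P}| \le n!$) and a constant number of constraints, so by Lenstra's algorithm for integer programming in fixed dimension it is solvable in polynomial time; alternatively one can binary-search on the target bottleneck value $\kappa$ and, for each $\kappa$, solve a constant-dimensional integer feasibility problem $\sum_P c_P P_i \ge \kappa\ \forall i$, $\sum_P c_P = T$, $c_P \ge 0$. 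Once the optimal multiplicities are found, \Cref{lem:alloc-to-seq} turns the resulting allocation into an actual polynomial-length sequence of matchings.

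The main obstacle — and the place where the Pareto-optimality characterization does real work — is justifying that restricting to the constantly-many canonical matchings is without loss of generality. One must argue that for the maximin objective it suffices to consider allocations built from Pareto-optimal matchings: given any optimal sequence, each round's matching can be replaced by a Pareto-dominating one (which exists and is among the $n!$ canonical ones, by the characterization) without decreasing any agent's accumulated value, hence without decreasing the bottleneck value at round $T$. A secondary subtlety is the encoding of $T$: if $T$ is given in binary, one needs the integer-programming step to run in time $\mathrm{poly}(\log T)$, which Lenstra's algorithm delivers since the dimension is fixed and the coefficients have polynomial bit-length; if $T$ is given in unary the naive dynamic program over rounds also works. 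I would also double-check the degenerate case where it is advantageous \emph{not} to fill all $T$ rounds with Pareto-optimal matchings of the full agent set — but since extra rounds only add nonnegative value, padding to exactly $T$ canonical matchings is harmless, so this does not arise.
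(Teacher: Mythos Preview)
Your plan is essentially the paper's proof: restrict attention to at most $n!$ canonical Pareto-optimal matchings via the characterization in Lemma~\ref{lem:pareto-characterization}, solve a fixed-dimension integer program by Lenstra's algorithm, and convert the multiplicities back into a sequence with Lemma~\ref{lem:alloc-to-seq}. The one place you diverge is the computation of the canonical matchings: you describe them as plain serial-dictatorship outputs (``greedily letting the agents pick their favorite remaining good in the order $\pi$''), whereas the paper explicitly observes that serial dictatorship with indifferences need not be Pareto-optimal, introduces the stronger notion of a $\pi$-\emph{optimal} matching, and computes one for each $\pi$ via a maximum-weight bipartite matching with carefully chosen exponential weights (Lemma~\ref{lem:pi-matching}). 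Your assertion that every matching is weakly Pareto-dominated by some serial-dictatorship output (with a fixed tie-breaking rule) is in fact true, but it is not what Lemma~\ref{lem:pareto-characterization} states and needs its own short argument; this is precisely the ``main obstacle'' you identify, and it is where the paper does the extra work you gloss over.
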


The proof of \Cref{thm:fpt-general} relies on our newly established characterizations of Pareto‑optimal and maximum matchings in terms of permutations of agents. These results may be of independent interest to researchers in matching and house allocation.

In particular, let $\pi:N\to[n]$ be a permutation of the agents. A matching $M_{*}$ is said to be \emph{$\pi$-optimal} if there exists no matching $M$ such that
\begin{itemize}
    \item Some agent $i\in N$ satisfies $u_{i}(M(i))>u_{i}(M_{*}(i))$; and
    \item For every such agent $i$, it holds that for all agents $i'\in N$ with $\pi(i')<\pi(i)$, we have $u_{i'}(M(i'))\geq u_{i'}(M_{*}(i'))$.
\end{itemize}
Then, we obtain the following lemma.
\begin{restatable}{lemma}{paretocharacterization}
    \label{lem:pareto-characterization}
    A matching $M$ is Pareto optimal if and only if it is $\pi$-optimal for some permutation $\pi$.
\end{restatable}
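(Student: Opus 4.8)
The plan is to prove the two directions separately. For the easier direction, suppose $M$ is $\pi$-optimal for some permutation $\pi$, and suppose toward a contradiction that some matching $M'$ strongly Pareto dominates $M$. Then every agent weakly improves and at least one agent $i$ strictly improves; take any such $i$. Since $M'$ weakly dominates $M$, in particular $u_{i'}(M'(i')) \geq u_{i'}(M(i'))$ for \emph{all} $i'$, hence certainly for those with $\pi(i') < \pi(i)$. This exhibits a matching $M'$ violating the definition of $\pi$-optimality of $M$, a contradiction. So $M$ is Pareto optimal.

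The substantive direction is the converse: if $M$ is Pareto optimal, I want to construct a permutation $\pi$ witnessing $\pi$-optimality. The idea is to build $\pi$ greedily, deciding the agents in order of priority $\pi^{-1}(1), \pi^{-1}(2), \dots$ Maintain a set $R$ of ``remaining'' agents (initially all of $N$) and a set of already-fixed agents whose $M$-values are ``locked in.'' At each stage I pick some agent $i^* \in R$ to be the next in the priority order, namely one such that: no matching $M'$ that keeps all already-higher-priority agents at value $\geq u(M)$ \emph{and} keeps all already-fixed agents at their $M$-values can strictly improve $i^*$. Equivalently, among the agents in $R$, pick one whose $M$-value cannot be strictly beaten subject to the constraints accumulated so far. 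The crux is to show that at every stage such an $i^*$ exists in $R$; once that is established, after processing all $n$ agents the resulting $\pi$ satisfies the $\pi$-optimality condition by construction (any $M'$ strictly improving some agent $i$ while weakly preserving all higher-priority agents would contradict the choice made when $i$ was selected).

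The key claim---and the main obstacle---is the existence of a selectable $i^*$ at each stage. I expect to argue as follows: fix the constraints so far (higher-priority agents pinned to $\geq$ their $M$-value, fixed agents pinned to $= $ their $M$-value), and suppose for contradiction that \emph{every} agent $i \in R$ can be strictly improved by some matching $M_i$ respecting those constraints. The plan is to ``compose'' these witnesses along alternating cycles/paths in the symmetric difference of $M$ with the $M_i$'s to produce a single matching that strictly improves at least one agent in $R$ while weakly dominating $M$ on all of $N$ (using that the non-$R$ agents keep at least their $M$-value under each $M_i$ by the constraint). Concretely, consider the bipartite multigraph on $N \cup G$ formed by $M$ together with one $M_i$; decompose into alternating cycles and paths, and swap along those components that help some agent in $R$ without hurting anyone. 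Since $M$ is Pareto optimal, no such strictly-dominating matching can exist, giving the contradiction. Making the cycle-switching argument airtight---ensuring that one can simultaneously respect the equality constraints on fixed agents while realizing a strict gain---is the delicate part; it may be cleanest to induct on $|R|$ or to invoke a Pareto-improvement/augmenting-structure lemma from house allocation (the serial dictatorship characterization is the $|R| = n$, no-fixed-agents base heuristic this generalizes). I would also double-check the edge cases $m > n$ (dummy zero goods, handled by the preliminaries) and ties in valuations, which are exactly why the definition of $\pi$-optimality uses strict inequality $u_i(M(i)) > u_i(M_*(i))$ for the ``improved'' agent and weak inequality for the higher-priority guards.
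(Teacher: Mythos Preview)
Your easy direction is fine and matches the paper. For the converse, your greedy construction of $\pi$ is a different route from the paper, which instead builds $\pi$ in one shot: it defines an \emph{envy$^+$ graph} on $N$ (an edge $i\to i'$ whenever there is a chain $i=i_1,\dots,i_p=i'$ with $u_{i_1}(M(i_1))<u_{i_1}(M(i_2))$ and $u_{i_r}(M(i_r))\le u_{i_r}(M(i_{r+1}))$ for $r\ge 2$), observes that Pareto optimality of $M$ forces this graph to be acyclic (a cycle would yield a Pareto-improving cyclic trade), takes $\pi$ to be a reverse topological order, and then verifies $\pi$-optimality by an alternating-path comparison with an auxiliary $\pi$-optimal matching $M_*$.

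The genuine gap in your plan is the existence of $i^*$ at each stage. Your concrete suggestion---take one witness $M_i$, decompose $M\triangle M_i$ into alternating components, and ``swap along those components that help some agent in $R$ without hurting anyone''---cannot work as stated: since $M$ is Pareto optimal, \emph{no} single component can help an agent without hurting another (otherwise that swap alone would already be a strict Pareto improvement), so looking at one $M_i$ yields nothing. The contradiction must come from using the hypothesis that \emph{every} $i\in R$ has a witness, jointly. One way to close it: in a Pareto-optimal $M$ no agent strictly prefers an unassigned good, so each improvable $i\in R$ strictly envies some assigned good $M(j)$; following the displaced agent $j$ (who, if in $P$, moves to an equal-valued good by your invariant, and if in $R$, is again improvable by hypothesis) traces out precisely the paper's envy$^+$ chains, and closing such a chain into a cycle is what contradicts Pareto optimality. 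Your greedy scaffolding and the final verification step are sound, but the ``delicate part'' you flagged is where essentially all of the content lives, and filling it appears to require the same envy-chain machinery the paper develops.
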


In the context of house allocation without indifferences, it is well-established that \emph{serial dictatorship} characterizes Pareto-optimal allocations \citep{abdulkadirouglu1998random}. However, when agents are allowed to express indifferences between houses, the allocations produced by serial dictatorship are not guaranteed to be Pareto optimal \citep{abraham2004pareto}. Therefore, our definition of $\pi$-optimal can be interpreted as an extension of serial dictatorship that ensures Pareto optimality even in the presence of indifferences.

We describe how this characterization leads to an FPT algorithm in \Cref{sec:fptalgo}.

\section{Anytime Optimality} \label{sec:anytime-optimality}
In this section, we consider the problem of \emph{anytime optimality}, a stronger notion that requires optimality at every round prefix. We show that an anytime optimal sequence always exists when $n=2$, but determining whether such a sequence exists for $n\geq 3$ is \coNP-hard.
The setting of $n=2$ is a widely studied and is an important special case in related literature \citep{elkind2024temporalfairdivision,gollapudi2020matching,igarashi2023repeatedfairallocation}.
Our results are as follows.

\begin{restatable}{theorem}{anytimetwoagents}
    \label{thm:anytime-twoagents}
    Given an instance $\instfull$ with $n=2$, there always exist an anytime optimal sequence of matchings, and we can find it in polynomial time.
\end{restatable}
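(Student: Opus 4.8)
\emph{Overview.} The plan is to reduce the two-agent problem to a one-dimensional one, write $\OPT(t)$ in closed form, and then thread a single sequence through the per-round optima.

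\emph{Reduction.} With $n=2$ a matching assigns one good to agent~$1$ and a different good to agent~$2$. Let $X_i=\argmax_{g\in G}u_i(g)$. If there are $g\in X_1$, $g'\in X_2$ with $g\ne g'$, the constant sequence $(g,g'),(g,g'),\dots$ gives $v_i^t=t\max_{g}u_i(g)$ for both agents, which no prefix can beat, so it is anytime optimal. Otherwise $X_1=X_2=\{g^\ast\}$ for a single good $g^\ast$; set $a^\ast=u_1(g^\ast)$, $b^\ast=u_2(g^\ast)$, $a_2=\max_{g\ne g^\ast}u_1(g)$, $b_2=\max_{g\ne g^\ast}u_2(g)$, $\alpha=a^\ast-a_2$, $\beta=b^\ast-b_2$; uniqueness of $g^\ast$ forces $\alpha,\beta>0$. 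For each $k\in\{0,\dots,t\}$, consider the length-$t$ prefix that gives $g^\ast$ to agent~$1$ in $k$ rounds and to agent~$2$ in the other $t-k$ rounds, the agent not holding $g^\ast$ taking a witness $h_1$ of $a_2$ (resp.\ $h_2$ of $b_2$); since $g^\ast\notin\{h_1,h_2\}$ this is a legal matching each round, and it realizes $v_1^t=f_1(k;t):=ka^\ast+(t-k)a_2$ and $v_2^t=f_2(k;t):=(t-k)b^\ast+kb_2$.

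\emph{Closed form for $\OPT(t)$.} I will show $\OPT(t)=\max_{0\le k\le t}\min\{f_1(k;t),f_2(k;t)\}$. The ``$\ge$'' direction is the construction above; for ``$\le$'', given any length-$t$ sequence let $k_1,k_2$ count the rounds in which agents $1,2$ receive $g^\ast$, so $k_1+k_2\le t$ and $v_1^t\le f_1(k_1;t)$, while $k_2\le t-k_1$ together with $b^\ast\ge b_2$ gives $v_2^t\le f_2(k_1;t)$, hence $b^t\le\min\{f_1(k_1;t),f_2(k_1;t)\}$. Since $f_1(\cdot;t)$ increases and $f_2(\cdot;t)$ decreases in $k$, the inner maximum is governed by the crossover $t\rho$, where $\rho=\tfrac{b^\ast-a_2}{\alpha+\beta}$: if $\rho\le0$ (i.e.\ $a_2\ge b^\ast$) then $k=0$ is optimal with $\OPT(t)=tb^\ast$, and if $\rho\ge1$ (i.e.\ $b_2\ge a^\ast$) then $k=t$ is optimal with $\OPT(t)=ta^\ast$ -- in both cases a constant sequence is anytime optimal. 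So assume $\rho\in(0,1)$, so that $\OPT(t)=\max\{f_1(\lfloor t\rho\rfloor;t),\,f_2(\lceil t\rho\rceil;t)\}$.

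\emph{Threading a single sequence, and the obstacle.} An anytime optimal sequence is exactly a non-decreasing $t\mapsto n_1(t)\in\mathbb{Z}$ with $n_1(0)=0$, $n_1(t)-n_1(t-1)\in\{0,1\}$, and $n_1(t)$ attaining $\OPT(t)$ for every $t$ (then round $s$ gives $g^\ast$ to agent~$1$ iff $n_1(s)>n_1(s-1)$, else to agent~$2$). I take $n_1(t)$ to be the smallest round-$t$-optimal $k$. By direct computation, $f_1(\lfloor t\rho\rfloor;t)\ge f_2(\lceil t\rho\rceil;t)$ iff $\{t\rho\}\le\gamma$ with $\gamma=\tfrac{\beta}{\alpha+\beta}\in(0,1)$, so $\lfloor t\rho\rfloor$ is round-$t$-optimal in that case and $\lceil t\rho\rceil$ is the unique optimum otherwise; hence $n_1(t)=\lfloor t\rho\rfloor+\mathbf 1[\{t\rho\}>\gamma]=\lceil t\rho-\gamma\rceil$. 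The real obstacle is precisely that locally optimal $k$'s need not be unique and could a priori move in a way incompatible with the unit-step constraint; the identity $n_1(t)=\lceil t\rho-\gamma\rceil$ dissolves this, since its argument grows by $\rho\in(0,1)$ per step, making $t\mapsto\lceil t\rho-\gamma\rceil$ automatically non-decreasing with steps in $\{0,1\}$, and it clearly lies in $\{0,\dots,t\}$. Thus the resulting sequence of matchings is legal and optimal at every prefix, and all quantities are computable with exact rational arithmetic in time polynomial in $m$ and $T$ (the sequence uses only two distinct matchings).
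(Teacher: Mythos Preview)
Your argument is correct and takes a genuinely different route from the paper. The paper first handles $m=2$ by an inductive existence argument on $T$ (swap-based contradiction showing that the set of sequences optimal up to $T-1$ and the set optimal at $T$ must intersect), then reduces general $m$ to two matchings, and finally proves polynomial-time constructibility separately via a greedy procedure with a loop invariant and several case distinctions. You instead (i) reduce immediately to two matchings, (ii) write $\OPT(t)=\max_{0\le k\le t}\min\{f_1(k;t),f_2(k;t)\}$ in closed form, and (iii) observe that the smallest round-$t$-optimal $k$ has the explicit form $n_1(t)=\lceil t\rho-\gamma\rceil$, which is automatically a $\{0,1\}$-step staircase in $t$ because $\rho\in(0,1)$; this simultaneously yields existence and the polynomial-time algorithm.

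What each buys: your closed form is shorter, fully explicit, and makes the polynomial-time algorithm a one-liner (compute $\rho,\gamma$ once, then emit ``agent~1 gets $g^\ast$'' in round $t$ iff $\lceil t\rho-\gamma\rceil>\lceil (t-1)\rho-\gamma\rceil$). The paper's inductive/greedy argument is less explicit but arguably more robust in spirit: it does not rely on identifying the exact crossover and could in principle generalize to settings where a closed form for $\OPT(t)$ is unavailable. Two very small points of phrasing: your sentence ``An anytime optimal sequence is exactly a non-decreasing $t\mapsto n_1(t)\ldots$'' overstates slightly (not every anytime optimal sequence is of this form, but it suffices that one is), and the appeal to ``exact rational arithmetic'' implicitly assumes rational utilities, which is standard but worth stating.
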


However, we show that this positive result does not extend to the case when $n \geq 3$, for all $T \geq 2$, with the following impossibility result.
\begin{restatable}{proposition}{anytimedontexist}
    \label{prop:anytime-exist}
    An anytime optimal sequence might not exist for any problem instance $\instfull$ with $n\geq 3$ and $T\geq 2$.
\end{restatable}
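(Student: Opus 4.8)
The plan is to exhibit a single explicit instance with $n=3$ and $T=2$ for which no sequence of two matchings is simultaneously optimal at round $1$ and at round $2$, and then remark that padding this instance with zero-valued goods (and/or repeating an all-zero round) extends it to arbitrary $n\geq 3$ and $T\geq 2$. Concretely, I would take three goods $g_1,g_2,g_3$ (so $m=n=3$) and design the $3\times 3$ valuation matrix $[u_i(g_j)]$ so that the unique way to achieve $\OPT(1)$ at round $1$ forces a particular matching $M$, while $\OPT(2)$ can only be attained by a pair of matchings whose first component differs from $M$. The cleanest way to engineer this is to make the round-$1$ optimum tight: arrange values so that $\OPT(1)$ equals some value $v$, and so that the \emph{only} matching whose valuation profile has minimum at least $v$ is $M$ (for instance, have exactly one good worth at least $v$ to each of two agents, and force those two goods to be matched to those agents, pinning down the third edge as well). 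Then separately argue that the best two-round bottleneck $\OPT(2)$ requires, over the two rounds, giving each agent a specific multiset of two goods that is incompatible with using $M$ in round $1$.

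The key steps, in order, are: (1) fix the three goods and write down a concrete valuation table; (2) compute $\OPT(1)$ by checking the six matchings of a $3\times3$ instance, and verify that the argmax is a unique matching $M$; (3) compute $\OPT(2)$ — here I would use the allocation abstraction and \Cref{lem:alloc-to-seq}: enumerate the candidate allocations $A$ with all row sums and column sums equal to $2$ (there are few of these up to the relevant symmetry, essentially determined by how the $3\times 3$ "multiplicity matrix" with entries in $\{0,1,2\}$ and margins $2$ looks), evaluate $\min_i v_i(A)$ for each, and identify $\OPT(2)$ together with the allocation(s) $A^\star$ achieving it; (4) observe that every allocation achieving $\OPT(2)$, when split into two matchings via Birkhoff/the decomposition in \Cref{lem:alloc-to-seq}, cannot have $M$ as one of its two matchings — equivalently, $A^\star - M$ (as a matrix) is not a nonnegative permutation-like matrix with the right margins — so no anytime optimal sequence exists; (5) state the reduction to general $(n,T)$ by adding $n-3$ dummy agents each valuing one private zero-to-all... actually, more simply, add goods worth $0$ to everyone and agents that are trivially satisfied, or append rounds in which all remaining values are $0$, none of which changes $\OPT(1)$ or $\OPT(2)$ for the original three agents, so the obstruction persists.

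I would phrase step (3) using allocations rather than sequences to keep the case analysis finite and transparent, invoking \Cref{lem:alloc-to-seq} to translate back; the constraint that a length-$2$ sequence of matchings corresponds exactly to an allocation matrix with all margins equal to $2$ that is a sum of two $0/1$ permutation matrices (i.e., decomposable by Birkhoff into two matchings) is the precise combinatorial fact I would use when ruling out $M$ in step (4). The main obstacle is the design in step (1)–(2): I need the round-$1$ optimum to be \emph{rigid} (achieved by essentially one matching) while the round-$2$ optimum genuinely prefers a different first-round matching; getting both tightness conditions to hold simultaneously in a $3\times 3$ table with small integer (or $\{0,\tfrac12,1\}$-style) values takes a bit of trial, and I would likely start from the desired conclusion — "round $1$ wants edge set $E_1$, round $2$ wants the pair $(E_1',E_2')$ with $E_1\ne E_1'$ and $E_1\ne E_2'$" — and reverse-engineer valuations, e.g.\ something like $u_1=(2,1,0)$, $u_2=(0,2,1)$, $u_3=(1,0,2)$ as a starting point, adjusting until both optima are pinned down as required. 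Once the instance is fixed, steps (3)–(5) are routine finite checks.
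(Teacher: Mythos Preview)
Your approach—exhibit an explicit $3\times3$ instance with $T=2$ and verify by finite case analysis that no sequence is simultaneously optimal at rounds $1$ and $2$—is exactly the paper's. The paper's execution is leaner than your outline: it writes down the matrix $U=\bigl(\begin{smallmatrix}5&2&1\\3&3&2\\2&5&1\end{smallmatrix}\bigr)$, notes $\OPT(1)=2$ and $\OPT(2)=6$, and observes that the unique pair of matchings achieving $\OPT(2)$ each has bottleneck $1$; it does \emph{not} require the round-$1$ optimum to be achieved by a unique matching (in this instance it is not—both $(g_1,g_3,g_2)$ and $(g_2,g_3,g_1)$ attain it), it does not invoke the allocation/Birkhoff machinery (six matchings are easy to check by hand), and the extension to $T>2$ is automatic since $\OPT(1)$ and $\OPT(2)$ do not depend on $T$ (your ``rounds with all values $0$'' is ill-defined here, as valuations are round-independent, but also unnecessary).
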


\begin{proof}
    Consider the following instance with $m=n=3$. For each $i\in N$ and $g_{j}\in G$, let $u_{i}(g_{j})=U_{ij}$, where
    \begin{equation*}
        U=
        \begin{bmatrix}
            5 & 2 & 1 \\
            3 & 3 & 2 \\
            2 & 5 & 1
        \end{bmatrix}.
    \end{equation*}
    Note that $\OPT(1)=2$ and $\OPT(2)=6$. Furthermore, the only way to achieve $\OPT(2)$ is by choosing $M_{1}=(1,2,3)$ and $M_{2}=(3,1,2)$ in any order. As such, the bottleneck value at $t=1$ is $1$, which is not anytime optimal.
\end{proof}

The above implies that we cannot hope for anytime optimality in most cases.
However, given a problem instance, one may still wish to obtain an anytime optimal result \emph{if it exists}.
Unfortunately, we show that even determining whether an instance admits an anytime optimal solution is computationally intractable, with the following result.

\begin{restatable}{theorem}{anytimehardness}
    \label{thm:anytime-hardness}
    Given instance $\calI=\instfull$, the problem of deciding if $\calI$ admits an anytime optimal sequence is \coNP-hard.
\end{restatable}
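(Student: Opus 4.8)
The plan is to reduce from the complement of \textsc{ERM} (or directly from \textsc{3-occ-3-sat}), building on the NP-hardness machinery already established in \Cref{thm:optimal-hardness}. The high-level idea is that an instance fails to admit an anytime optimal sequence precisely when the set of matchings that are optimal at some early round is incompatible with the set of matchings needed to be optimal at a later round; intuitively, if a particular good-assignment is \emph{forced} in round $1$ to achieve $\OPT(1)$, but that assignment makes $\OPT(2)$ unattainable when combined over the two rounds, then no anytime optimal sequence exists. So I would engineer an instance with $T=2$ in which: (i) there is a "cheap" gadget forcing which matching(s) can appear in round $1$ to hit $\OPT(1)$, and (ii) the reachability of $\OPT(2)$ over the two rounds depends on solving a hard combinatorial problem. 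If that hard problem is a \textsc{yes}-instance, the forced round-$1$ matching is compatible with an optimal round-$2$ completion, so an anytime optimal sequence exists; if it is a \textsc{no}-instance, no such sequence exists — giving \coNP-hardness of the existence question.

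Concretely, I would start from the instance $\calI$ produced in the proof of \Cref{thm:optimal-hardness} (with $T=2$, ternary valuations, target $\kappa$), which satisfies $b^2(S)\ge\kappa$ for some $S$ iff the \textsc{3-occ-3-sat} formula is satisfiable. I would augment it by adding a small number of extra agents and goods forming a "pinning" gadget: extra goods that are highly valued by the new agents and worthless to the original agents, arranged so that (a) $\OPT(1)$ is determined entirely by the gadget and can only be achieved by matchings whose restriction to the original part is essentially unconstrained (or constrained in a harmless way), while (b) $\OPT(2)$ equals the gadget contribution plus $\kappa$, so that achieving $\OPT(2)$ over two rounds forces the original part of the two matchings to form an \textsc{ERM}-feasible sequence with $b^2\ge\kappa$. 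The bookkeeping here — verifying the exact values of $\OPT(1)$ and $\OPT(2)$ and ensuring the gadget does not accidentally create new ways to meet the targets — is the routine part, handled by a few inequalities on the ternary valuations.

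The crux, and the main obstacle, is arranging that the round-$1$ requirement and the round-$2$ requirement genuinely \emph{interact} in the right way: I need $\OPT(1)$ to be achievable (so the "anytime" constraint at $t=1$ is not vacuously violated in the \textsc{yes} case) yet to leave no slack, so that in the \textsc{no} case the unique (or tightly constrained) optimal first matching cannot be extended to an optimal second matching. A clean way to force this is to make the gadget agents' optimal round-$1$ assignment already consume the goods that would be needed to "fix up" a bottleneck original agent in round $1$, so that the original agents' round-$1$ profile is pinned to exactly the valuation profile they would have in the first step of an \textsc{ERM}-optimal length-$2$ sequence — no more, no less. Then anytime optimality at $t=1$ is equivalent to the first matching already lying on an optimal length-$2$ trajectory, and the whole construction reduces cleanly to \textsc{3-occ-3-sat}. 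I would also remark that the same argument goes through for any $T\ge 2$ by padding with rounds whose contribution is identical for all agents (e.g., a round matching every agent to a good of equal value), which does not disturb any of the $\OPT(t)$ relationships, and that this simultaneously handles the "for all $T\ge2$" strengthening in the spirit of \Cref{prop:anytime-exist}.
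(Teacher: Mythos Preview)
Your reduction is pointed in the wrong direction. You arrange that the hard problem is a \texttt{YES}-instance iff an anytime optimal sequence \emph{exists}; that establishes \NP-hardness of the existence question, not \coNP-hardness. For \coNP-hardness you need the opposite mapping: \texttt{YES}-instances of the \NP-hard source problem must correspond to instances \emph{without} an anytime optimal sequence. The parenthetical ``reduce from the complement of \textsc{ERM}'' does not rescue this, because the concrete construction you describe in the second and third paragraphs is a reduction from \textsc{3-occ-3-sat}/\textsc{ERM} itself, not from its complement.

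There is also a genuine gap in the \texttt{NO} case even ignoring the direction. You write as though ``$\OPT(2)$ is reachable'' is the condition that depends on the hard problem, but $\OPT(2)$ is \emph{by definition} always reachable. What actually changes between \texttt{YES} and \texttt{NO} is the \emph{value} of $\OPT(2)$: it drops below $\kappa$ in the \texttt{NO} case. Your gadget, as described, fixes $\OPT(1)$ while leaving the original part ``essentially unconstrained'' at round~$1$. But then, in the \texttt{NO} case, nothing prevents taking an $\OPT(1)$-achieving first matching and extending it to hit the (now smaller) $\OPT(2)$ --- so an anytime optimal sequence would exist in \emph{both} cases, and the reduction collapses. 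You never argue why the lower $\OPT(2)$ should be incompatible with the round-$1$ constraint.

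The paper's proof reduces from \textsc{3-Partition} with $T=3$ and gets the direction right: a \texttt{YES}-instance forces $\OPT(3)=\gamma$, and any sequence attaining $\gamma$ must monopolize certain goods for all three rounds, which starves a small block of ``sentinel'' agents below $\OPT(1)=2\epsilon$. A \texttt{NO}-instance drops $\OPT(3)$ to $\gamma-1$, and this slack is exactly enough to free up goods for the sentinel agents, so an explicit anytime optimal sequence can be exhibited. The crucial design choice --- absent from your proposal --- is that $\OPT(1)$ and $\OPT(2)$ are engineered to be the \emph{same} fixed values in both cases, so the only moving part is $\OPT(3)$ and its (in)compatibility with the earlier rounds.
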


Finally, we complement the above hardness result with an approximation algorithm that achieves an additive approximation bound independent of the number rounds $T$. Again, this means that as $T$ increases, the approximate solution converges rapidly to the optimal one.

\begin{algorithm}[ht]
    \caption{Approximate algorithm for anytime optimal sequence}
    \textbf{Input:} An instance $\calI=\instfull$
    
    \label{alg:approx-anytime}
    \begin{algorithmic}[1]
        \STATE let $B$ be the solution to (\ref{eqn:approx-general-lp})
        \STATE decompose $B$ into $\alpha_{1}M_{1}+\dots+\alpha_{d}M_{d}$ using \Cref{alg:birkhoff}
        \STATE initialize $n_{k}=0$ for all $k\in[d]$
        \FOR{$t=1,\dots,T$}
        \STATE choose matching $M^{t}=\arg\min_{M_{k}}(n_{k}+1)/\alpha_{k}$
        \STATE update $n_{k}\leftarrow n_{k}+1$
        \ENDFOR
        \STATE \textbf{return} $\{M_{1},\dots,M_{T}\}$
    \end{algorithmic}
\end{algorithm}
\begin{restatable}{theorem}{anytimeapprox}
    \label{thm:anytime-approx}
    Given an instance $\instfull$, there always exist a sequence of matchings that is approximate anytime optimal. Furthermore, \Cref{alg:approx-anytime} outputs a sequence of matchings $S$, in polynomial time, that satisfy
    \begin{equation*}
        b^{t}(S)\geq\OPT(t)-5m\cdot\max_{i\in N}\max_{g\in G}u(g), \quad\forall t\in[T].
    \end{equation*}
\end{restatable}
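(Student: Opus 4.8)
The plan is to analyze Algorithm~\ref{alg:approx-anytime}, which is a round-robin-style scheduler applied to the Birkhoff decomposition $B=\alpha_1 M_1+\dots+\alpha_d M_d$ of the LP solution~$B$. First I would record the key invariant of the greedy rule ``choose $M^t=\argmin_{M_k}(n_k+1)/\alpha_k$'': after $t$ rounds, the number of copies $n_k$ of matching $M_k$ placed so far satisfies $|n_k - t\alpha_k| \le 1$ for every $k\in[d]$. This is a standard apportionment/Jefferson-divisor-method fact — the greedy rule is exactly the largest-remainder-style scheduler that keeps each count within $1$ of its ideal fractional share $t\alpha_k$ at all times, since $\sum_k \alpha_k = 1$. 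I would prove it by induction on $t$: if some $n_k$ were to exceed $t\alpha_k + 1$ after selecting $M^t=M_k$, then before the selection $n_k \ge t\alpha_k$, so $(n_k+1)/\alpha_k \ge t + 1/\alpha_k$; meanwhile, since $\sum_j n_j = t-1 < t = \sum_j t\alpha_j$, some $M_j$ has $n_j < t\alpha_j$, giving $(n_j+1)/\alpha_j \le t$, contradicting that $M_k$ was the argmin. The lower bound $n_k \ge t\alpha_k - 1$ follows symmetrically.

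Next I would use this invariant to bound each agent's accumulated value. Fix an agent $i\in N$ and a round $t\in[T]$. Let $S_t=(M^1,\dots,M^t)$ be the prefix produced by the algorithm. Then
\begin{align*}
v_i^t(S_t)
&= \sum_{k\in[d]} n_k\, u_i(M_k(i))
\ge \sum_{k\in[d]} (t\alpha_k - 1)\, u_i(M_k(i))\\
&= t\sum_{k\in[d]} \alpha_k\, u_i(M_k(i)) \;-\; \sum_{k\in[d]} u_i(M_k(i)).
\end{align*}
The first sum is exactly $t$ times the $i$-th LP constraint value $\sum_{g_j\in G} B_{ij} u_i(g_j) \ge t\cdot b \ge \OPT(t)$ (using $Tb\ge\OPT(T)$ from the proof of Theorem~\ref{thm:approx-general}, rescaled: $tb\ge\OPT(t)$ holds because the LP optimum $b$ upper-bounds the per-round fractional bottleneck, and $\OPT(t)\le tb$ by averaging any length-$t$ sequence). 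The subtracted term $\sum_{k\in[d]} u_i(M_k(i))$ has $d$ terms, and by Birkhoff's theorem the decomposition uses $d \le m^2 - m + 1 \le m^2$ matchings; each $u_i(M_k(i)) \le \max_{i,g} u_i(g)$, so this term is at most $m^2 \cdot \max_{i,g}u_i(g)$. That already gives an additive loss of order $m^2$, not the claimed $5m$; so the real work is to get a sharper count.

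The main obstacle — and where I expect to spend the most care — is tightening the error term from $d\approx m^2$ down to $5m$. The trick is that I should not bound $\sum_k u_i(M_k(i))$ by summing over \emph{all} $d$ matchings, but rather bound $v_i^t(S_t)$ directly against $\OPT(t)$ by noticing that only matchings $M_k$ with $n_k \ge 1$ after $t$ rounds contribute, and $\sum_k n_k = t$ with each deficit $t\alpha_k - n_k \le 1$ — but more importantly, a matching $M_k$ with $t\alpha_k < 1$ can ``cost'' at most one unit of $u_i(M_k(i))$, and I can charge these against the structure of the LP. A cleaner route: apply the invariant only to the $O(m)$ matchings with the largest $\alpha_k$, and show the remaining small-weight matchings collectively carry little total fractional mass, hence contribute a bounded error; combined with the fact that in a bistochastic decomposition one can always choose a ``sparse'' representation, or by reusing Lemma~\ref{lem:alloc-to-seq}'s bound applied to prefixes, the per-round deficit telescopes to $O(m)$ goods' worth of value. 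I would reconcile the constant ``$5$'' by carefully accounting: at most $\le 1$ deficit per relevant matching, at most $\le m$ relevant matchings for bounding the bottleneck agent, plus a factor from the fact that $\OPT(t)$ vs.\ $tb$ and the floor-vs-ceiling discrepancies of the scheduler each contribute an $m\cdot\max u$ slack, summing to $5m\cdot\max_{i,g}u_i(g)$. Finally, I would pick the bottleneck agent $k$ at round $t$ as in Theorem~\ref{thm:approx-general}, conclude $b^t(S_t) = v_k^t(S_t) \ge \OPT(t) - 5m\cdot\max_{i\in N}\max_{g\in G}u_i(g)$ for all $t\in[T]$, and note the whole computation (LP solve + Birkhoff decomposition + $T$ greedy steps) runs in polynomial time, which also establishes existence.
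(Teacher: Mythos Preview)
Your invariant and the main chain of inequalities are exactly what the paper does: the greedy rule keeps $n_{kt}\ge t\alpha_k-1$, and then $v_i^t(S)\ge tb-d\cdot\max_g u_i(g)\ge\OPT(t)-d\cdot\max_g u_i(g)$. The paper's proof of the invariant is a short summing-to-contradiction argument (suppose $n_{kt}<t\alpha_k-1$; show $n_{lt}<t\alpha_l$ for \emph{every} $l$; sum to get $t=\sum_l n_{lt}<t$), which is cleaner than the divisor-method sketch you give but equivalent in content.

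The genuine gap is your third paragraph. None of the mechanisms you float (keeping only the large-$\alpha_k$ matchings, telescoping deficits, accumulating several $m\cdot\max u$ slacks to reach the constant $5$) is what actually closes the $m^2\to 5m$ gap, and I do not see how to make any of them work. The paper's argument is a one-line LP vertex observation that you are missing entirely: take $B$ to be a \emph{basic feasible (vertex) solution} of (\ref{eqn:approx-general-lp}). The LP has $m^2+1$ variables and, written as inequalities, $m^2+5m$ constraints; at a vertex, $m^2+1$ of them are tight, and since only $5m$ constraints are not of the form $B_{ij}\ge 0$, at most $5m$ entries of $B$ can be nonzero. Running Birkhoff's algorithm on a matrix with at most $5m$ nonzero entries terminates in $d\le 5m$ steps. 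That is the entire source of the constant $5$---it comes from counting LP constraints, not from any subtle accounting inside the scheduler.
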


\begin{proof}[Proof sketch]
    Let $n_{kt}$ be the value of $n_{k}$ after round $t$. After each round $t\in[T]$, we claim that our choice of matching $M^{t}$ maintains the invariant $n_{kt}\geq\alpha_{k} \cdot t-1$ for all $k\in[d]$. Intuitively, this says that by any round $t$, each matching $M_{k}$ has been selected for roughly its intended $\alpha_{k}$ fraction of the rounds. Thus, we will get a result similar to that of \Cref{thm:approx-general}. More specifically, we can show that $v_{i}^{t}(S)
        \geq\OPT(t)-d\cdot\max_{g\in G}u_{i}(g)$
    for all $i\in N$. Observe that since (\ref{eqn:approx-general-lp}) has $m^{2}+5m$ inequality constraints and $m^{2}+1$ variables, $m^{2}+1$ constraints will be tight at a vertex solution, meaning there are at most $5m$ non-zero entries in $B$, which implies that $d\leq 5m$. 
\end{proof}

\section{Special Cases} \label{sec:special-cases}

In this section, we shift our focus back to optimality\footnote{Unfortunately, anytime optimality is a strong condition with relatively strong negative results (as with many similar problems in the online setting). We leave the existence (or impossibility) of obtaining anytime optimality in special cases as an interesting direction for future work.} and consider three special cases: (1) when agents have binary valuations, (2) when there are only two types of goods, and (3) when agents share identical valuations.
For each of the first two cases, we provide an algorithm that computes an optimal sequence of matchings in polynomial time.
A key technique that we used here is to reduce the problem to one of \emph{circulation with demand} and leveraging the Ford-Fulkerson algorithm to compute a feasible circulation. We then show for the third case that the problem is hard even for optimality, address a special case where we it can be solved in polynomial time, and provide an approximate anytime optimal algorithm for it.

\subsection{Binary Valuations}
The first setting we consider is when agents have \emph{binary valuations}, i.e. $u_{i}:G\to\{0,1\}$ for all agents $i\in N$.
This is an important and well-studied subclass of valuations (sometimes referred to as \emph{binary additive} valuations).
Numerous fair division \citep{aleksandrov2015onlinefoodbank,amanatidis2021mnwefx,bouveret2016conflict,freeman2019equitable,HalpernPrPs20,hosseini2020infowithholding,SuksompongTe22} and matching \citep{BogomolnaiaMo04,gollapudi2020matching} papers consider this setting.
Binary valuations can also be viewed as \emph{approval votes}, which have long been
studied in the voting literature \citep{brams2007approvalvoting,kilgour2010approval}, and permit very simple elicitation.

Notably, under binary valuations, maximizing egalitarian welfare is equivalent to maximizing \emph{Nash welfare} (i.e., the geometric mean), which is an extremely popular concept in fair division, and has many desirable properties \citep{HalpernPrPs20,SuksompongTe22}.

We first establish the following lemma. 
\begin{restatable}{lemma}{matchingpolem} \label{lem:matching_po}
    Let $G'$ be the goods in a maximum matching. Then, for any matching $M$, there is a matching $M_{*}$ that weakly Pareto dominates $M$ and that the goods matched by $M_{*}$ is a subset of $G'$.
\end{restatable}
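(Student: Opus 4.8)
The plan is to translate everything into bipartite matching theory. Let $H$ be the ``preference graph'' on vertex set $N\cup G$ in which an agent $i$ is joined to a good $g$ precisely when $u_i(g)=1$. Then ``maximum matching'' is a maximum‑cardinality matching $\Lambda$ of $H$, and $G'$ is the set of goods it saturates (write $A'$ for the set of agents it saturates, so $|A'|=|G'|=|\Lambda|$). Since valuations are binary, a matching $M_*$ weakly Pareto dominates $M$ exactly when every agent of $N_1:=\{i\in N: u_i(M(i))=1\}$ again receives value $1$, and ``the goods matched by $M_*$'' are the goods it assigns to agents that value them. So the lemma reduces to: produce a (full) matching $M_*$ that gives each agent of $N_1$ a value‑$1$ good lying in $G'$, and gives its remaining agents goods of value $0$ (or goods in $G'$). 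Let $\mu$ be the matching of $H$ consisting of the value‑$1$ edges of $M$; it saturates exactly $N_1$.

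The heart of the argument is the following claim: there is a \emph{maximum} matching $\rho$ of $H$ whose saturated goods are exactly $G'$ and which saturates every agent of $N_1$. To prove it, I would decompose $\mu\oplus\Lambda$ into its components — even alternating cycles and alternating paths. Maximality of $\Lambda$ forbids $\Lambda$‑augmenting paths, so no path component has both endpoints $\Lambda$‑unsaturated; and a one‑line parity/bipartiteness check shows that on a path component with one endpoint $p$ saturated only by $\mu$ and one endpoint $q$ saturated only by $\Lambda$, the vertices $p$ and $q$ are of the same kind (both agents, or both goods). Now build $\rho$ component by component: on each even cycle, on each path with both endpoints $\Lambda$‑saturated, and on each path whose $\mu$‑endpoint $p$ is a good (so $p\notin G'$), take the $\Lambda$‑edges; on each path whose $\mu$‑endpoint $p$ is an agent (so $p\in N_1\setminus A'$), take the $\mu$‑edges. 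One checks directly that $\rho\subseteq H$, that every agent of $N_1$ gets saturated, and that no good outside $G'$ is ever used (in the only delicate case the sole $\Lambda$‑unsaturated vertex of the path is $p$, and there $p$ is an agent, not a good). Finally, by pairing each uncovered ``$q$''-endpoint with the $N_1\setminus A'$-endpoint on the same path that forces it, one gets $|\{\text{agents saturated by }\rho\}| = |A'|$; hence $\rho$ is a maximum matching, and since it uses $|A'|=|G'|$ goods all lying in $G'$, its good set is exactly $G'$.

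Given the claim the proof finishes cleanly. Put $M_*(i)=\rho(i)$ for every agent $i$ saturated by $\rho$ (these include all of $N_1$), and match the remaining $n-|G'|$ agents injectively — in any way — to the $m-|G'|\ge n-|G'|$ goods of $G\setminus G'$. Every agent saturated by $\rho$ gets a value‑$1$ good of $G'$ (as $\rho\subseteq H$), so $M_*$ weakly Pareto dominates $M$; and if $i$ is one of the remaining agents, matched to some $g\in G\setminus G'$, then both $i$ and $g$ are left unsaturated by the \emph{maximum} matching $\rho$, so $(i,g)\notin H$, i.e.\ $u_i(g)=0$. Hence no good outside $G'$ is ever used to deliver value, so the goods matched by $M_*$ form a subset of $G'$, as required.

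The step I expect to be the main obstacle is exactly the claim — specifically the bookkeeping showing that $\rho$ comes out \emph{maximum}, not merely ``some matching saturating $N_1$ with goods in $G'$''. That extra strength is what lets the extension step go through for free (an arbitrary extension into $G\setminus G'$ automatically avoids value‑$1$ edges), and it rests on the parity observation that the two defect endpoints of a component path have matching types; the reformulation and the extension itself are routine.
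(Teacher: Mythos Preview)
Your argument is correct and uses the same alternating-path idea as the paper. The paper phrases it via a goods-only ``augmenting graph'' (vertices $G$, with an edge $(g,g')$ whenever some agent has $g$ in $M$ and $g'$ in the fixed maximum matching $M'$) and iteratively shifts one offending agent at a time along its path, modifying $M$ in place; you instead take the symmetric difference $\mu\oplus\Lambda$ and process all components simultaneously to build a fresh maximum matching $\rho$, then extend to a full assignment. Your packaging is cleaner in two ways: the component casework is explicit and exhaustive, and proving that $\rho$ is \emph{maximum} makes the extension into $G\setminus G'$ automatic (any leftover agent--good pair would otherwise augment $\rho$), whereas the paper never leaves the full matching $M$ and so sidesteps this point. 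One small omission to patch: when you ``build $\rho$ component by component'' you must also include the edges of $\mu\cap\Lambda$; these lie outside $\mu\oplus\Lambda$ but are needed both for $\rho$ to saturate every agent of $N_1$ and for your count $|\{\text{agents saturated by }\rho\}|=|A'|$ to come out right.
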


The above lemma basically provides another characterization, this time, of maximum matchings under binary valuations in terms of Pareto optimality. To the best of our knowledge, this result is also novel in the context of house allocation, which may be of independent interest. This lemma is used to prove the following result.

\begin{restatable}{theorem}{binaryoptimal}
    \label{thm:binary-optimal}
    Given an instance $\instfull$ with binary valuations, we can find an optimal sequence of matchings in polynomial time.
\end{restatable}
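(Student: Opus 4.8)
The plan is to reduce the problem to a small number of \emph{circulation with demands} feasibility checks, one for each candidate value of $\OPT(T)$. Since valuations are binary, for any sequence $S$ of length $T$ we have $v_i^T(S) \in \{0, 1, \dots, T\}$ for every agent, so $\OPT(T)$ is an integer in $\{0, \dots, T\}$, and it suffices to decide, for a given target $\kappa \in \{0, \dots, T\}$, whether there is a sequence $S \in \bbS^T$ with $v_i^T(S) \ge \kappa$ for all $i \in N$; we then take the largest feasible $\kappa$ (by binary search over $O(\log T)$ values, which is polynomial in the input size). By \Cref{lem:alloc-to-seq}, it is enough to work with allocations: we seek an integer matrix $A \in \bbZ_{\ge 0}^{n\times m}$ with each row sum and each column sum at most $T$ (so it corresponds to a realizable sequence of $T$ matchings), and with $\sum_{g_j \in G} A_{ij} u_i(g_j) \ge \kappa$ for all $i$.

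First I would set up the flow network. The key simplification from binariness is that agent $i$ only cares about how many \emph{approved} goods (those $g_j$ with $u_i(g_j)=1$) it receives. Build a network with a source $s$, a node for each agent $i$, a node for each good $g_j$, and a sink $t$. Put an arc $g_j \to t$ with capacity $T$ (each good is used at most $T$ times over the $T$ rounds); an arc $i \to g_j$ with capacity $T$ whenever $u_i(g_j)=1$, and capacity $T$ as well when $u_i(g_j)=0$ (unapproved goods may still be handed out, they just don't help); and an arc $s \to i$ with capacity $T$ (each agent receives exactly one good per round, so at most $T$ total). To encode the fairness requirement I would additionally route, for each agent $i$, a \emph{lower bound} of $\kappa$ units of flow through the approved-goods arcs out of $i$: formally, split $i$ into $i_{\mathrm{in}}, i_{\mathrm{approved}}, i_{\mathrm{any}}$, send the arc $s \to i_{\mathrm{in}}$ with capacity $T$, an arc $i_{\mathrm{in}} \to i_{\mathrm{approved}}$ with demand (lower bound) $\kappa$ and capacity $T$, an arc $i_{\mathrm{in}} \to i_{\mathrm{any}}$ with capacity $T$, and then arcs $i_{\mathrm{approved}} \to g_j$ only for approved $g_j$ and $i_{\mathrm{any}} \to g_j$ for all $g_j$. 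One also needs the total flow to be exactly $nT$ (every agent-round slot filled) — this is forced by saturating the $s \to i_{\mathrm{in}}$ arcs, which I'd impose with a lower bound of $T$ there too. A feasible circulation in this network, which exists iff the Hall-type / Gale cut condition holds and can be found by the Ford–Fulkerson-based algorithm for circulations with demands in polynomial time, yields integral flows (all capacities and demands are integers), hence an integral allocation $A$ with the required row/column bounds and per-agent approved-good count $\ge \kappa$. Conversely any such allocation gives a feasible circulation, so the feasibility check is exact.

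Having found the largest feasible $\kappa^\star$ and a corresponding integral allocation $A$, I would invoke \Cref{lem:alloc-to-seq} to turn $A$ into a sequence $S$ of at most $m^2 - m + 1$ matchings over $T$ rounds with $v_i^T(S) \ge v_i(A) \ge \kappa^\star$ for all $i$, in polynomial time; since $\kappa^\star = \OPT(T)$ this $S$ is optimal at round $T$. The remaining subtlety is the step where I claimed it suffices to count approved goods and may freely pad with unapproved ones while keeping column sums $\le T$ — I should check that the padding arcs do not let a solution "cheat" by overusing a good; but since every good node has a single outgoing arc to $t$ of capacity $T$, no good is ever used more than $T$ times regardless of how the flow is split between the approved and any branches, so this is fine. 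I expect the main obstacle to be writing the network cleanly enough that the correspondence "feasible integral circulation $\iff$ realizable allocation with all agents reaching $\kappa$" is transparent, and in particular arguing the forward direction: given an arbitrary optimal sequence, one extracts an allocation, but one must be careful that it may give some agent \emph{more} than $\kappa^\star$ approved goods and the flow decomposition still goes through — here \Cref{lem:matching_po} is the natural tool, letting me restrict attention to allocations built from Pareto-optimal (indeed maximum) matchings on well-behaved good subsets, which keeps the flow bounded and the argument routine.
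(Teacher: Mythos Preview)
Your approach is correct and follows the same high-level structure as the paper: binary search on the target value, a circulation-with-demands feasibility check for each candidate, and \Cref{lem:alloc-to-seq} to turn the resulting integral allocation into a length-$T$ sequence.

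The one substantive difference is in how the circulation is built. The paper first invokes \Cref{lem:matching_po} to fix a set $G'$ of goods used by some maximum matching and then works \emph{only} over $G'$; this lets it get away with a very small network (one node per agent with demand $b$, one node per good in $G'$ with demand $-T$, and a single slack node absorbing $|G'|T - nb$), with edges only along approved pairs. Your construction instead keeps all goods and encodes the per-agent lower bound via node splitting ($i_{\mathrm{in}}\to i_{\mathrm{approved}}$ with demand $\kappa$, plus an $i_{\mathrm{any}}$ branch for padding). Both are polynomial and both yield the exact feasibility test; the paper's version is leaner but \emph{needs} \Cref{lem:matching_po} for the backward direction (so that a witnessing sequence can be assumed to live over $G'$), whereas your version is self-contained and does not actually require that lemma at all. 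Your closing remark that \Cref{lem:matching_po} is ``the natural tool'' for the sequence-to-circulation direction is therefore misplaced: in your network, any optimal sequence already routes directly (send $\kappa$ approved units through $i_{\mathrm{approved}}$, the rest through $i_{\mathrm{any}}$), with no restriction to a maximum-matching good set needed.
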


Note that the NP-hardness result of \Cref{thm:optimal-hardness} implies that we cannot strengthen the positive results to the setting where agents have \emph{ternary valuations} (or three-valued instances) \citep{fitzsimmons2025ternary}.

\subsection{Two Types of Goods}

Next, we consider the setting with two \emph{types} of goods: each good can be divided into two groups, and each agent values all goods in a particular group equally. 
This preference restriction is also commonly studied in (temporal) fair division \citep{ALRS23,elkind2024temporalfairdivision,GMQ24}. Formally, let $G_{0},G_{1}\subseteq G$ be a partition of the set of goods such that $G_{0}\cap G_{1}=\emptyset$, $G_{0}\cup G_{1}=G$, and for all agent $i\in N$ and all goods $g,g'\in G_{r}$ for some $r\in\{0,1\}$, we have $u_{i}(g)=u_{i}(g')$.
Then, our result is as follows.

\begin{restatable}{theorem}{optimaltwogoods}
    Given an instance $(N,G_{1}\cup G_{2},T,\{u_{i}\}_{i\in N})$ with two types of goods, we can find an optimal sequence of matchings in polynomial time.
\end{restatable}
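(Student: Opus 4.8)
The plan is to exploit the fact that, since all goods within a type are interchangeable for every agent, an agent's total value is determined entirely by how its $T$ matched goods split between the two types. Write $a_i$ (resp.\ $b_i$) for agent $i$'s common value on $G_0$ (resp.\ $G_1$); then, for any sequence $S$, if agent $i$ receives a $G_0$-good in exactly $x_i$ of the $T$ rounds, we have $v_i^T(S) = f_i(x_i) := x_i a_i + (T-x_i) b_i$, an affine function of $x_i\in\{0,\dots,T\}$. Hence $b^T(S) = \min_i f_i(x_i)$ depends only on the profile $(x_i)_{i\in N}$, and $\OPT(T)$ equals the maximum of $\min_i f_i(x_i)$ over all profiles that are \emph{realizable} by a length-$T$ sequence of matchings. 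The realizability constraints are exactly the ``supply'' constraints: across $T$ rounds at most $T$ agents per round can be sent to $G_0$ (and at most $|G_1|$ to $G_1$), so roughly $T(n-|G_1|)\le\sum_i x_i\le T|G_0|$; I will encode these, together with the per-agent target $f_i(x_i)\ge\kappa$, inside a single circulation instance.

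Fix a target $\kappa$. For each agent $i$, the set $I_i(\kappa):=\{x\in\{0,\dots,T\}: f_i(x)\ge\kappa\}$ is an integer interval (as $f_i$ is affine), with endpoints computable in closed form from the sign of $a_i-b_i$. I would build a circulation-with-demands network $\Gamma(\kappa)$: a source $s$, a sink $t$, an agent node for each $i$, a good node for each $g_j$, and two ``counter'' nodes $c_i^0, c_i^1$ per agent. Add an arc $s\to i$ with lower and upper bound both $T$ (agent $i$ is matched once per round, so $T$ times total); an arc $i\to c_i^0$ with lower bound $\min I_i(\kappa)$ and upper bound $\max I_i(\kappa)$; an arc $i\to c_i^1$ with capacity $[0,T]$ (flow conservation at $i$ then forces the $c_i^1$-flow to be $T-x_i$, where $x_i$ is the $c_i^0$-flow); arcs $c_i^0\to g_j$ for $g_j\in G_0$ and $c_i^1\to g_j$ for $g_j\in G_1$, each with capacity $[0,T]$; and an arc $g_j\to t$ with capacity $[0,T]$ (each good is used at most once per round). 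A feasible integral circulation through $\Gamma(\kappa)$ (which necessarily routes $nT$ units) can be found, or shown not to exist, in polynomial time by reducing lower-bound feasibility to a max-flow computation solved via Ford--Fulkerson; integrality of max-flow gives an integral solution.

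The two directions are then routine. (i) From a feasible integral circulation in $\Gamma(\kappa)$, let $A_{ij}$ be the flow on the arc into $g_j$ from its type's counter node; conservation gives $\sum_j A_{ij}=T$ and $\sum_i A_{ij}\le T$, and $v_i(A)=a_i x_i + b_i(T-x_i)=f_i(x_i)\ge\kappa$ since $x_i\in I_i(\kappa)$. Applying \Cref{lem:alloc-to-seq} to $A$ produces, in polynomial time, a sequence $S\in\bbS^T$ with $v_i^T(S)\ge v_i(A)\ge\kappa$, so $b^T(S)\ge\kappa$. (ii) Conversely, any $S$ with $b^T(S)\ge\kappa$ induces $A_{ij}=|\{s: M^s(i)=g_j\}|$, which is an integral circulation in $\Gamma(\kappa)$: the $s\to i$ bounds hold since agent $i$ gets $T$ goods; the $i\to c_i^0$ bounds hold since $v_i^T(S)=f_i(x_i)\ge b^T(S)\ge\kappa$ forces $x_i\in I_i(\kappa)$; and the $g_j\to t$ bounds hold since $g_j$ is matched at most once per round. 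Thus $\Gamma(\kappa)$ is feasible iff $\OPT(T)\ge\kappa$, i.e.\ $\OPT(T)=\max\{\kappa:\Gamma(\kappa)\text{ feasible}\}$. Since $\OPT(T)\in\{f_i(x): i\in N,\ x\in\{0,\dots,T\}\}$, a set of $\mathcal O(nT)$ rational candidates, I binary-search for the largest feasible $\kappa$, recover a feasible integral flow for it, extract $A$, and output the sequence from \Cref{lem:alloc-to-seq}.

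The main obstacle is designing the network correctly: the per-agent value requirement constrains the \emph{sum} of the $G_0$-arcs leaving $i$, not any single arc, so it must be funneled through the counter node $c_i^0$ carrying the interval capacity $[\min I_i(\kappa),\max I_i(\kappa)]$, and one must verify that conservation at $i$ then automatically forces exactly $T-x_i$ units toward $G_1$ and that the aggregate supply constraints come out right. One must also be careful that passing from the round-by-round problem to the aggregate allocation $A$ loses nothing---this is precisely what \Cref{lem:alloc-to-seq} guarantees---and that the binary search is genuinely polynomial (if $T$ is given in binary, one instead binary-searches over rationals of polynomially bounded bit-length, using affinity of each $f_i$).
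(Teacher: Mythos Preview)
Your proposal is correct and follows essentially the same approach as the paper: binary search on the target bottleneck value, a flow/circulation feasibility test encoding the per-agent type-count constraints, and an appeal to \Cref{lem:alloc-to-seq} to turn the resulting allocation into a sequence. The only differences are cosmetic---the paper uses node demands and splits each non-indifferent agent into a node $i_r$ (exact demand $-k_i$, edges only to the preferred type $G_r$) and a node $i_{1-r}$ (demand $-(T-k_i)$, edges to all goods), whereas you use arc lower/upper bounds and route all of agent $i$'s $G_0$-flow through a counter node carrying the interval $I_i(\kappa)$; both encodings are standard and equivalent.
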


\subsection{Identical Valuations}
The last special case we consider here is one where agents have \emph{identical valuation} functions, i.e., $u_{i}=u_{i'}$ for all agents $i,i'\in N$.
The setting with identical valuations is also well-studied in the repeated fair division/matching \citep{caragiannis2024repeatedmatching,igarashi2023repeatedfairallocation} and standard fair division \citep{barman2020identical,mutzari2023resilient,plaut2020almost} literature. 
Moreover, works on semi-online multiprocessor scheduling with the makespan minimization objective (analogous to the egalitarian objective) \citep{cheng2005multiprocessorscheduling,kellerer1998onlinepartition} focus on identical valuations as well (since machines are identical in that setting). 

We show that even under this restricted setting of identical valuations, the problem of finding an optimal sequence is generally still \NP-hard.

\begin{restatable}{theorem}{optimalidenticalhardness}
    \label{thm:optimal-identical-hardness}
    Given an instance $\instfull$ with identical valuations, finding an optimal sequence of matchings is \NP-complete.
\end{restatable}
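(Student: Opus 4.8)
The plan is to reduce from a strongly NP-hard scheduling-type problem — my first choice is \textsc{3-Partition}, since its strong NP-hardness lets us use unary-encoded values and thereby avoid any subtlety about the size of $T$ or the magnitude of utilities. Recall that in \textsc{3-Partition} we are given $3k$ positive integers $a_1,\dots,a_{3k}$ with $\sum_j a_j = kB$ and $B/4 < a_j < B/2$ for all $j$, and we ask whether the multiset can be partitioned into $k$ triples each summing exactly to $B$. Membership in \NP\ is immediate: a witness is a sequence $S\in\bbS^{T}$, whose length is polynomial by \Cref{lem:alloc-to-seq} (or we can simply certify via an allocation matrix $A$ together with the target $\kappa$), and checking $b^{T}(S)\geq\kappa$ takes polynomial time.

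For the hardness direction, the key idea is to set up an instance where there are exactly as many goods as agents, all agents share one valuation function $u$, and the structure of the identical valuations forces any optimal sequence to implement a \textsc{3-Partition}-like balancing. Concretely I would take $n = m$ equal to something like $3k$ (the number of agents matching the number of integers), create one good $g_j$ per integer $a_j$ with common value $u(g_j) = a_j$, and choose the number of rounds $T$ to be a small multiple of something convenient (likely $T = 3$ if $n = k$, or $T$ chosen so that each agent must receive exactly three goods over the horizon). Because valuations are identical, the value an agent accumulates depends only on the \emph{multiset} of goods it receives, so a sequence of matchings over $T$ rounds exactly corresponds to partitioning the $T\cdot n$ good-slots into $n$ bundles of size $T$, one per agent, subject to the constraint (from the matching structure, unpacked via Birkhoff/\Cref{lem:alloc-to-seq}) that no good is used more than $T$ times in total. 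The target $\kappa = B$ then forces every agent's bundle to sum to \emph{exactly} $B$, i.e. a \textsc{3-Partition}. The ``only if'' direction goes the other way: a valid \textsc{3-Partition} gives an allocation $A$ with row sums $3$ and column sums $3$ (each good used exactly three times if $T = 3$), which by \Cref{lem:alloc-to-seq} yields a sequence $S$ with $b^{T}(S)\geq B = \kappa$.

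The main obstacle I anticipate is the interaction between the matching constraint (each good to at most one agent per round) and the freedom we need to realize an \emph{arbitrary} \textsc{3-Partition} as a sequence of matchings. A naive bundle assignment need not decompose into matchings: if two agents both want good $g_j$ in their bundle, that is fine across rounds, but if \emph{more} than $T$ agents collectively demand $g_j$ it breaks. With $n = m$ and each good contributing exactly $T$ copies of itself spread over $T$ rounds, the column-sum condition of \Cref{lem:alloc-to-seq} ($\sum_i A_{ij}\leq T$) is met with equality, so the lemma applies cleanly — but I will need to double-check that the \textsc{3-Partition} instance, after padding with dummy zero-valued goods if $m>n$ is needed to make the counts work out, still produces an allocation matrix with all row and column sums at most $T$. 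A secondary subtlety is ruling out ``degenerate'' optimal sequences that beat $B$ by giving some agent a surplus: since $\sum_j a_j = kB$ and there are exactly $k$ agents (in the $n=k$ variant) each receiving three goods, any agent exceeding $B$ forces another below $B$, so $\OPT(T) = B$ iff a \textsc{3-Partition} exists — this pigeonhole argument is the crux and should be stated carefully. If \textsc{3-Partition} proves awkward to wire into the matching structure, the fallback is to reduce from \textsc{Bin Packing} or \textsc{Unrelated-Machine Makespan} restricted to identical machines, but I expect \textsc{3-Partition} to be the cleanest.
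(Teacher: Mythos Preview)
Your reduction has a genuine gap in the ``only if'' direction. In the $n=k$, $m=3k$, $T=3$ variant on which you run the pigeonhole argument, the total value $\sum_{i}v_{i}^{T}(S)$ is \emph{not} pinned at $\sum_{j}a_{j}=kB$: with $n<m$ only $k$ of the $3k$ goods are matched each round, and nothing prevents the same high-value good from being reused in all three rounds, so one agent exceeding $B$ does not force another below $B$. Concretely, take $k=2$, $B=100$, and values $\{40,40,35,30,28,27\}$: no triple of distinct values sums to $100$, so this is a \textsc{No} instance of \textsc{3-Partition}, yet matching agent~$1$ to the first $40$-good and agent~$2$ to the second $40$-good in every round gives $b^{3}(S)=120\geq B$, so the \textsc{ERM} instance is \textsc{Yes}.

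The underlying issue, which your $n=m=3k$ variant does not escape either, is that an agent's bundle over $T$ rounds is a \emph{multiset} of goods, and there are $n$ such bundles rather than $k$. Even when $n=m$ pins the total at $T\sum_{j}a_{j}$ and hence forces every bundle to sum to exactly $B$, you are left with $3k$ size-$3$ multisets in which each good appears three times; extracting $k$ pairwise-disjoint triples of distinct goods from this is not automatic, and you give no argument for it. The paper confronts precisely this obstacle by reducing instead from a tailor-made \textsc{Promise Balanced Partition} problem (with $n=m=|E|$ and $T=|E|/2$), whose promise guarantees that no size-$|E|/2$ multiset containing a repeated element can hit the target sum; under that promise, the bundle of a single agent must already consist of distinct goods and directly yields the desired balanced partition. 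Establishing hardness of the promise version then requires a separate, fairly delicate base-$(2n{+}1)$ encoding of \textsc{1-in-3-SAT}. Your proposed fallbacks (\textsc{Bin Packing}, identical-machine makespan) run into the same multiset obstruction.
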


However, when $T$ is a multiple of $n$, we shown that the problem can be solved in polynomial time, with the following two results. We note that the case when $T$ is a multiple of $n$ is also a popular special case studied in repeated matching/fair division \citep{caragiannis2024repeatedmatching,igarashi2023repeatedfairallocation}

\begin{restatable}{theorem}{optimalidenticalfactor}
    \label{thm:optimal-identical-factor}
    Given an instance $\instfull$ with identical valuations and $T=kn$ for some $k\in\bbZ$, we can find an optimal sequence of matchings in polynomial time.
\end{restatable}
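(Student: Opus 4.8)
The plan is to pin down $\OPT(T)$ exactly in closed form and then exhibit a simple periodic sequence that attains it.

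First I would prove an upper bound by averaging. Since valuations are identical, relabel the goods so that $u(g_1)\ge u(g_2)\ge\dots\ge u(g_m)$; this is well-defined and, as $m\ge n$, the top $n$ goods $g_1,\dots,g_n$ exist. For any sequence $S=(M^1,\dots,M^T)\in\bbS^T$ and any round $t$, the matching $M^t$ sends the $n$ agents to $n$ \emph{distinct} goods, so $\sum_{i\in N}u(M^t(i))\le\sum_{j=1}^n u(g_j)$. Summing over the $T=kn$ rounds gives $\sum_{i\in N}v_i^T(S)\le kn\sum_{j=1}^n u(g_j)$, and since the minimum over agents is at most the average, $b^T(S)\le k\sum_{j=1}^n u(g_j)$. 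Hence $\OPT(T)\le k\sum_{j=1}^n u(g_j)$.

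Next I would give a matching construction that uses only the top $n$ goods in a cyclic (Latin-square) pattern. For $s\in\{0,1,\dots,n-1\}$, let $M_s$ be the matching sending agent $i$ to $g_{((i-1+s)\bmod n)+1}$; each $M_s$ is a bijection from $N$ onto $\{g_1,\dots,g_n\}$, so it is a valid matching. Let $S$ consist of $k$ consecutive blocks, each block being $(M_0,M_1,\dots,M_{n-1})$; equivalently round $t$ uses $M_{(t-1)\bmod n}$. Within one block, agent $i$ receives each of $g_1,\dots,g_n$ exactly once, so over all $k$ blocks every agent's total value is exactly $k\sum_{j=1}^n u(g_j)$. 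Therefore $b^T(S)=k\sum_{j=1}^n u(g_j)\ge\OPT(T)$, so $S$ is optimal. Since $S$ is described by the $n$ distinct matchings $M_0,\dots,M_{n-1}$ together with the periodic schedule $t\mapsto(t-1)\bmod n$, it has a polynomial-size representation and is produced after merely sorting the goods, giving a polynomial-time algorithm.

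I do not expect a serious obstacle here; the only delicate point is that divisibility of $T$ by $n$ is exactly what makes each block's cyclic assignment perfectly balanced across agents. Without it, the final partial block cannot equalize the agents, which is consistent with the general \NP-hardness of \Cref{thm:optimal-identical-hardness}. I would also note in passing that this same $S$ is optimal at every prefix length that is a multiple of $n$, but since controlling the intra-block prefixes would require a different construction, I would not claim anything beyond round-$T$ optimality.
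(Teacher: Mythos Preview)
Your proof is correct and essentially mirrors the paper's: both identify the optimal allocation as ``every agent receives each of the top-$n$ goods exactly $k$ times'' and certify optimality by the same averaging (utilitarian) upper bound $\OPT(T)\le k\sum_{j=1}^{n}u(g_j)$. The only cosmetic difference is that you exhibit the sequence explicitly via cyclic (Latin-square) shifts, whereas the paper works at the allocation level and implicitly relies on \Cref{lem:alloc-to-seq} to produce a sequence; your direct construction is arguably cleaner for this special case but adds no new mathematical content.
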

Finally, we complement the above with an approximation algorithm that achieves (even anytime) optimality up to an additive approximation factor of $\max_{g\in G} u(g)$.\footnote{We denote agents' identical utility function as $u$. Then, $v_{i}^{t}(S)\coloneqq\sum_{s=1}^{t} u(M^{s}(i))$ for all $t\in[T]$.}
This gives us a stronger result compared to the general case, which is also only for optimality (as in \Cref{thm:approx-general}).

\begin{restatable}{theorem}{approxidentical}
    \label{thm:approx-identical}
    Given an instance $\instfull$ with identical valuations, we can find, in polynomial time, a sequence of matchings $S$ that satisfy
    \begin{equation*}
        b^{t}(S)\geq\OPT(t)-\Delta,\quad \forall t\in[T],
    \end{equation*}
    where $\Delta$ is the difference in value between the most valuable good and the $n$-th most valuable good.\footnote{This is equivalent to the concept of \emph{gap} in the bandits literature.} 
\end{restatable}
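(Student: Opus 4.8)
The plan is to exploit the identical-valuation structure to greedily round a fractional egalitarian allocation, so that the deficit incurred in each round is at most the ``gap'' $\Delta$ between the best good and the $n$-th best good. First I would set up the fractional benchmark: since all agents share $u$, sort the goods as $u(h_1)\ge u(h_2)\ge\cdots\ge u(h_m)$. In any single round the best we can do for the worst-off agent is to use the top $n$ goods $h_1,\dots,h_n$, and the natural fractional (allocation-level) target is to give every agent a $1/n$ share of each of $h_1,\dots,h_n$; this is a scaled bistochastic matrix, and by Birkhoff's theorem (cf.\ the machinery behind \Cref{lem:alloc-to-seq}) it decomposes into matchings. Running this for $t$ rounds, every agent's fractional value is $\frac{t}{n}\sum_{j=1}^n u(h_j)$, which should be shown to equal (or to lower-bound) $\OPT(t)$: indeed $\OPT(t)\le \frac{t}{n}\sum_{j=1}^n u(h_j)$ by a pigeonhole/averaging argument over the $tn$ total good-slots delivered in $t$ rounds, since the worst agent receives at most the average of the $n$ highest among any $tn$ goods used.

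Next I would do the rounding round-by-round rather than all at once, to get the \emph{anytime} guarantee. The idea: maintain the invariant that after round $t$, the sorted ``deficit vector'' of the agents (relative to the ideal $\frac{t}{n}\sum_{j=1}^n u(h_j)$) is controlled. Concretely, in each round assign the $n$ top goods $h_1,\dots,h_n$ to the $n$ agents, giving the currently worst-off agents the best available goods. One then argues inductively that the spread among agents' accumulated values never lets the minimum fall below $\frac{t}{n}\sum_{j=1}^n u(h_j) - \big(u(h_1)-u(h_n)\big) = \OPT(t)-\Delta$: whenever an agent is about to become the bottleneck, in the next round she is handed $h_1$ (value $u(h_1)$) while the agent who was ahead of her gets at worst $h_n$ (value $u(h_n)$), so a one-round deficit of at most $\Delta$ is self-correcting and cannot accumulate. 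This is essentially a potential-function / exchange argument on the multiset of accumulated utilities.

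The main obstacle I anticipate is making the invariant precise and closing the induction cleanly: the accumulated-value vector is an arbitrary integer combination of the $u(h_j)$'s, and I need the statement ``$\min_i v_i^t(S)\ge \OPT(t)-\Delta$'' to be robust to how ties and near-ties among the bottleneck agents are broken across consecutive rounds. The clean way is probably to track the quantity $\frac{t}{n}\sum_{j=1}^n u(h_j) - \min_i v_i^t(S)$ and show it stays in $[0,\Delta]$ by case analysis on whether the round-$t$ assignment strictly decreases it or keeps it bounded; one must also handle the edge case $m<n$ (pad with zero-valued goods, as in the preliminaries, so $h_{n}$ may have value $0$ and $\Delta=u(h_1)$) and the case where several agents are simultaneously at the minimum. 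Finally I would invoke \Cref{lem:alloc-to-seq} (or simply note the per-round assignments are already an explicit sequence) to confirm the output is a genuine length-$T$ sequence of matchings computable in polynomial time, and observe that the bound $\OPT(t)-\Delta$ holds for \emph{every} $t\in[T]$, giving the claimed (anytime) guarantee that strictly improves on \Cref{thm:approx-general}.
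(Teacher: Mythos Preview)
Your proposal is correct and takes essentially the same approach as the paper: the same greedy algorithm (each round, give the top-$n$ goods to agents in increasing order of accumulated value), the same averaging upper bound $\OPT(t)\le \tfrac{t}{n}\sum_{j=1}^n u(h_j)$, and the same target inequality that the minimum stays within $\Delta$ of this average. The only difference is in how the spread bound is established: rather than the round-by-round potential you propose, the paper fixes the bottleneck agent $i$ at round $t$ and, for every other agent $i'$, looks at the \emph{last} round $s_0$ at which $i'$ picked before $i$---this gives $v_{i'}^{s_0}\le v_i^{s_0}+\Delta$ in one step and $u(M^s(i'))\le u(M^s(i))$ for all $s>s_0$, which cleanly sidesteps the tie-breaking and induction issues you anticipated.
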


\section{Conclusion}
In this work, we introduced and studied a model of repeated matching with goal of obtaining egalitarian optimality.
We investigated the computational complexity of achieving optimality and anytime optimality, and identified several settings where these problems can be solved efficiently, together with accompanying algorithms.
Specifically, for optimality, we provided an approximation algorithm independent of $T$, and FPT algorithms with respect to $n$ or $m$. For anytime optimality, we provided an approximation algorithm that complements the hardness and impossibility result even in simple cases.
We also showed two special cases (binary valuations, two types of goods) where optimality can be achieved, and a final special case (identical valuations) where approximate anytime optimality can be achieved. 

Directions for future work include considering other special cases that admit efficient optimal solutions, such as bi-valued utilities (where each agent values each good at either $1$ or some integer $p > 1$) or identical rankings.
It would also be interesting to study concepts that interpolate optimality and anytime optimality (e.g., optimality at every $\tau$ timesteps).
In two of our special cases, we mentioned the equivalence between egalitarian and Nash welfare. It would be interesting to identify the conditions under which these two objectives are equivalent in this setting.

\bibliographystyle{plainnat}
\bibliography{aaai2026}

\appendix
\onecolumn
\begin{center}
    \LARGE\bf
    Appendix
\end{center}
\vspace{2mm}

\section{Allocations and Bistochastic Matrices}
\label{sec:proofs}

We devote this section to establishing \Cref{lem:alloc-to-seq}.


We begin by showing that any allocation $A$ satisfying the conditions of \Cref{lem:alloc-to-seq} can be transformed into a scaled integer bistochastic matrix via \Cref{alg:alloc-to-bistoc}.

\begin{algorithm}[h]
    \caption{Convert allocation to a scaled integer BM}
    \textbf{Input:} Allocation $A$ and integer $T$
    \label{alg:alloc-to-bistoc}
    \begin{algorithmic}[1]
        \STATE let $B$ be a copy of $A$
        \STATE append $B$ with $m-n$ rows of zeros
        \WHILE{some row $i$ and column $j$ does not sum to $T$}
            \STATE increment $B_{ij}'$ by
            \vspace{-0.25cm}
            \begin{equation*}
                T-T\cdot\max\Biggl\{\,\sum_{i'=1}^{m}B_{i'j}\,,\,\sum_{j'=1}^{m}B_{ij'}\,\Biggr\}
            \end{equation*}
            \vspace{-0.25cm}
        \ENDWHILE
        \STATE \textbf{return} $B$
    \end{algorithmic}
\end{algorithm}

\begin{lemma}
    \label{lem:alloc-to-bistoc}
    Suppose $A\in\bbR^{n\times m}$ is an allocation with
    \begin{equation*}
    \sum_{\mathclap{i\in N}}A_{ij}\leq T\quad\text{and}\quad\sum_{\mathclap{g_{j}\in G}}A_{ij}\leq T.
    \end{equation*}
    There exist a scaled integer bistochastic matrix $B\in\bbR^{m\times m}$ such that the sum of each row and column is $T$, and for each agent $i\in N$, $v_{i}(B)\geq v_{i}(A)$. This can be computed by \Cref{alg:alloc-to-bistoc} in polynomial time.
\end{lemma}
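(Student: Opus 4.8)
The plan is to analyze the greedy procedure in \Cref{alg:alloc-to-bistoc} directly, maintaining two invariants through the loop: (i) after every iteration, every row sum and every column sum of $B$ is at most $T$; and (ii) $B_{ij}\ge A_{ij}$ for every $i\in N$ and $g_j\in G$ (i.e.\ for the first $n$ rows). Both hold at initialization, since $B$ is $A$ padded with $m-n$ zero rows: each row/column sum of $B$ equals the corresponding sum of $A$, which is at most $T$ by hypothesis, and the first $n$ rows of $B$ coincide with those of $A$. Here I use that an allocation has nonnegative integer entries, so $B$ starts integral.

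For the inductive step, I would observe that when the loop picks $(i,j)$ with current column sum $c_j:=\sum_{i'}B_{i'j}<T$ and current row sum $r_i:=\sum_{j'}B_{ij'}<T$ and increments $B_{ij}$ by $\delta:=T-\max\{c_j,r_i\}$, the only sums that change are row $i$'s and column $j$'s, each increasing by $\delta$. If $c_j\ge r_i$ then the new column sum is exactly $T$ and the new row sum is $r_i+T-c_j\le T$; the case $r_i\ge c_j$ is symmetric. So invariant (i) is preserved, and moreover at least one of row $i$, column $j$ becomes \emph{saturated} (sum exactly $T$). Invariant (ii) is preserved because entries are only ever incremented. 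Integrality is also preserved: $\delta$ is a difference of integers and, since $\max\{c_j,r_i\}<T$, in fact $\delta\ge 1$ in every executed iteration.

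For termination and correctness, the key point is that a saturated row or column stays saturated: entries are never decreased, no sum ever exceeds $T$ by (i), and an increment touches $B_{ij}$ only when both row $i$ and column $j$ are still unsaturated. Hence the number of saturated rows-or-columns strictly increases each iteration, so the loop runs at most $2m$ times, and as each iteration is easily implemented in polynomial time (maintaining running row/column sums), the whole algorithm is polynomial. For correctness at exit I would invoke conservation of mass: the sum of all row sums equals the sum of all column sums (both equal $\sum_{i,j}B_{ij}$), and all $2m$ of these sums are at most $T$, so the total row deficit $mT-\sum_{i,j}B_{ij}$ equals the total column deficit; in particular there is an unsaturated row iff there is an unsaturated column. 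Thus the loop's exit condition — no unsaturated row \emph{and} no unsaturated column — is equivalent to all $2m$ sums equaling $T$, i.e.\ $B$ being a scaled integer bistochastic matrix with common row/column sum $T$. Combined with invariant (ii), $v_i(B)=\sum_j B_{ij}u_i(g_j)\ge\sum_j A_{ij}u_i(g_j)=v_i(A)$ for each $i\in N$, using $u_i\ge 0$, which finishes the proof.

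The main obstacle is establishing that the greedy process does not get "stuck" with leftover unsaturated rows whose deficits cannot be filled: the clean resolution is the deficit-balance (flow-conservation) observation above, which shows the exit condition is exactly global saturation, rather than trying to argue about individual entries.
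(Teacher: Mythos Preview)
Your proof is correct and follows essentially the same approach as the paper's: both argue that each iteration saturates at least one additional row or column without any sum exceeding $T$, bounding the loop by $2m$ iterations. Your treatment is in fact more complete than the paper's, which omits the explicit conservation-of-mass argument for the exit condition and does not verify $v_i(B)\ge v_i(A)$ or integrality.
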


\begin{proof}
    Observe that \Cref{alg:alloc-to-bistoc} must exit its loop after at most $2m$ iterations. This is because after each iteration, at least one more row or column will sum to $T$, and no row or column will sum to greater than $T$. Then, after at most $2m$ iterations, every row and column must sum to $T$. Since each iteration can be completed in polynomial time, the loop will also terminate in polynomial time. Furthermore, by the loop condition, we know that once the algorithm exits the loop, $B$ will be a scaled integer bistochastic matrix.
\end{proof}

Given a bistochastic matrix $B$, the \emph{Birkhoff-von Neumann Theorem}~\citep{birkhoff1946} states that $B$ can be written as a convex combination of $d\leq m^{2}-m+1$ matchings: $B=\alpha_{1}M_{1}+\dots+\alpha_{d}M_{d}$,
where $\alpha_{1},\dots,\alpha_{d}$ are non-negative coefficients that sums to $1$, and $M_{1},\dots,M_{d}$ are matchings. This decomposition can be computed in polynomial time using \emph{Birkhoff's algorithm}, and we describe the procedure for computing the coefficients and matchings in \Cref{alg:birkhoff}. \Cref{lem:scaled-bistoc-birkhoff} further extend the Birkhoff-von Neumann Theorem to scaled integer bistochastic matrices.

\begin{algorithm}[ht]
    \caption{Birkhoff's algorithm}
    \label{alg:birkhoff}
    \textbf{Input:} (Scaled integer) bistochastic matrix $B\in\bbR^{m\times m}$
    
    \begin{algorithmic}[1]
        \STATE let $B'$ be a copy of $B$
        \STATE initialize $k=1$
        \WHILE{there are non-zero entries in $B'$}
        \STATE construct a bipartite graph $G=([m],[m],E)$ such that $(i,j)\in E$ if and only if $B'_{ij}>0$
        \STATE find a perfect matching $E'\subseteq E$ of $G$
        \STATE let $\alpha_{k}=\min\{B'_{ij}\mid(i,j)\in E'\}$
        \STATE let $M_{k}$ be the permutation matrix corresponding to the perfect matching $E'$
        \STATE update $B'\leftarrow B'-\alpha_{k}M_{k}$
        \STATE update $k\leftarrow k+1$
        \ENDWHILE
        \STATE \textbf{return} $\{\alpha_{1},\dots,\alpha_{d}\}$ and $\{M_{1},\dots,M_{d}\}$
    \end{algorithmic}
\end{algorithm}

\begin{lemma}
    \label{lem:scaled-bistoc-birkhoff}
    If $B\in\bbR^{m\times m}$ is a scaled integer bistochastic matrix, then we can decompose $B$ as a linear combination of $d\leq m^{2}-m+1$ matchings $M_{1},\dots,M_{d}$ with positive integer coefficients $\alpha_{1},\dots,\alpha_{d}$: $B=\alpha_{1}M_{1}+\dots+\alpha_{d}M_{d}$.
    This can be computed by \Cref{alg:birkhoff} in polynomial time.
\end{lemma}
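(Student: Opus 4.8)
The plan is to establish Lemma~\ref{lem:scaled-bistoc-birkhoff} by reducing the scaled integer case to the ordinary Birkhoff--von Neumann decomposition. Suppose $B\in\bbR^{m\times m}$ is a scaled integer bistochastic matrix whose rows and columns each sum to the common integer $T$. First I would form the matrix $B/T$, which is a genuine (doubly stochastic) bistochastic matrix: its entries are non-negative, and each of its rows and columns sums to $1$. Applying the Birkhoff--von Neumann Theorem to $B/T$ yields a convex decomposition $B/T = \beta_1 M_1 + \dots + \beta_d M_d$ with $d \le m^2 - m + 1$, where the $M_k$ are permutation matrices and the $\beta_k$ are non-negative reals summing to $1$. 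Multiplying through by $T$ gives $B = (T\beta_1) M_1 + \dots + (T\beta_d) M_d$, so the natural candidate is $\alpha_k = T\beta_k$.

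The remaining content is to verify that the coefficients $\alpha_k = T\beta_k$ are \emph{positive integers}, and that this is exactly what \Cref{alg:birkhoff} produces. The integrality is where the argument has to be careful: it does not follow merely from $T\beta_k$ being a scaled version of a rational number. Instead I would argue it directly from the behaviour of \Cref{alg:birkhoff} run on $B$ itself. I would prove by induction on the loop iterations that the working matrix $B'$ always has non-negative \emph{integer} entries with all row and column sums equal to a common non-negative integer. The base case is immediate since $B$ is a scaled integer bistochastic matrix. For the inductive step: if $B'$ has integer entries and is not identically zero, then (since its positive row sums force every row and column to contain a positive entry, i.e. the bipartite graph on positive entries has no isolated vertex and, by a Hall-type / König argument identical to the classical Birkhoff proof, admits a perfect matching) the algorithm finds a perfect matching $E'$, and $\alpha_k = \min\{B'_{ij} : (i,j)\in E'\}$ is a positive integer. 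Subtracting $\alpha_k M_k$ keeps all entries non-negative integers and decreases every row and column sum by the same integer $\alpha_k$, so the invariant is maintained and the common sum strictly decreases. Hence the loop terminates with a decomposition into matchings with positive integer coefficients, and the bound $d \le m^2 - m + 1$ transfers from the classical theorem because each iteration zeroes out at least one entry and the last iteration zeroes out at least $m$ (the same counting argument as in the real-valued case).

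For the running time, each iteration constructs a bipartite graph and finds a perfect matching in polynomial time (e.g. via Hopcroft--Karp or the Hungarian method), and there are at most $m^2 - m + 1$ iterations, so the total is polynomial; the entries of $B'$ only shrink, so no blow-up in bit-complexity occurs. I would close by noting the two viewpoints agree: the $\alpha_k$ extracted by the algorithm, divided by $T$, give a valid convex decomposition of $B/T$, consistent with the Birkhoff--von Neumann statement invoked above.

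The main obstacle is making the integrality argument airtight rather than hand-wavy --- specifically, justifying that a perfect matching on the support of $B'$ exists at every step (which is the heart of the classical Birkhoff proof and must be recalled, not assumed) and that the ``$\min$'' picked as $\alpha_k$ is a \emph{positive} integer, so that the loop makes genuine progress and terminates. Everything else is bookkeeping: transferring the count $d \le m^2 - m + 1$ and the polynomial-time claim from the unscaled setting.
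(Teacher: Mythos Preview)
Your proposal is correct and follows essentially the same approach as the paper: run \Cref{alg:birkhoff} directly on $B$ and argue by induction that the working matrix $B'$ retains non-negative integer entries, so each $\alpha_k$ (being the minimum over a perfect matching in the support) is a positive integer, while the bound $d\le m^2-m+1$ and termination are inherited from the classical Birkhoff argument. Your write-up is in fact more detailed than the paper's (you spell out the Hall/K\"onig step and the running-time analysis), but the core idea is identical.
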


\begin{proof}
    This decomposition can be achieved by Birkhoff's algorithm, and its correctness follows analogously to the proof of correctness for Birkhoff's algorithm. As such, we will focus only on showing that $\alpha_{1}\dots,\alpha_{d}$ are positive integers. We claim that at the start of each iteration of the loop, the entries in $A'$ can only be non-negative integers. This is trivially true in the first iteration. Suppose this is true for the $k$-th iteration. Since the coefficient $\alpha_{k}$ is the minimum entry of $A'$ that corresponds to the perfect matching $E'$, $\alpha_{k}$ is a positive integer. Furthermore, after we update $A'$ by subtracting $\alpha_{k}$ from the entries that correspond to the matching $E'$, they must remain as non-negative integers. Thus, at the start of iteration $k+1$, the entries in $A'$ can only be non-negative integers. Since the coefficient $\alpha_{k}$ is just some entry of $A'$ at iteration $k\in[d]$, it is a positive integer.
\end{proof}

\Cref{lem:alloc-to-bistoc} and \Cref{lem:scaled-bistoc-birkhoff} together imply that an allocation $A$ can be transformed into a sequence of matchings $S$ by first converting $A$ into a bistochastic matrix using \Cref{alg:alloc-to-bistoc}, then next applying \Cref{alg:birkhoff} to convert the bistochastic matrix into a sequence. The resulting sequence satisfies the inequality $v_{i}^{T}(S)\geq v_{i}(A)$.

\begin{proof}[Proof of \Cref{lem:alloc-to-seq}]
    The claim follows directly from \Cref{lem:alloc-to-bistoc} and \Cref{lem:scaled-bistoc-birkhoff}.
\end{proof}

\section{Hardness Results for Optimality (\Cref{subsec:hardness})}
\label{proofs_secoptimal-hardness}

\optimalhardness*

\begin{proof}
    We will utilize the following decision problem \textsc{3-occ-3-sat} that is known to be \NP-hard.
    \begin{tcolorbox}[title=\textsc{3-occurrences 3-satisfiability (3-occ-3-sat)}]
    \textbf{Input}: A boolean formula $\Phi$ with $p$ variables $x_{1},\dots,x_{p}$ and $q$ clauses $c_{1},\dots,c_{q}$. For each $i\in[p]$, the literal $x_{i}$ appears twice and the literal $\bar{x}_{i}$ appears once.
    \tcblower
    \textbf{Question}: Is there an assignment for the variables such that $\Phi$ evaluates to \texttt{TRUE}? 
\end{tcolorbox}
    
    We first prove that \textsc{ERM} is \NP-hard when $T = 2$ and $u_i(g) \in \{0,0.5,1\}$ for all $i \in N$ and $g \in G$.
    Given a \textsc{3-occ-3-sat} instance $\Phi$, we will reduce it to an \textsc{ERM} instance $(\calI,\kappa)$ with $\calI=\instfull$, $T=2$, and $\kappa=1$. We construct $\calI$ as follows: For each $i\in[p]$, create three agents who are labeled, by an abuse of notation, as $x_{i1},x_{i2},x_{i3}$ and three goods $g_{i1},g_{i2},g_{i3}$. The base valuation of these agents are given by
    \begin{equation*}
        \begin{matrix}
            & g_{i1} & g_{i2} & g_{i3} \\[4pt]
            x_{i1} & 0.5 & 0.5 & 0.0 \\
            x_{i2} & 0.0 & 0.0 & 1.0 \\
            x_{i3} & 0.5 & 0.5 & 1.0
        \end{matrix}
    \end{equation*}
    and $0$ otherwise. Then, for each $j\in[q]$, create one agent with label $c_{j}$. The base valuation of $c_{j}$ is given by $u_{c_{j}}(g)= 1$ if literal $x_{i}$ is in clause $c_{j}$, and $g=g_{i1}$ or $g=g_{i2}$; or if literal $\bar{x}_{i}$ is in clause $c_{j}$ and $g=g_{i3}$; and $u_{c_{j}}(g)= 0$ otherwise.
    We create another $3p+q$ dummy goods, each with zero value for all agents.

    Suppose $\Phi$ is a \texttt{YES} instance and let $x_{1},\dots,x_{p}$ be a satisfying assignment. Consider the sequence $S$ constructed as follows: For each $i\in[p]$, if $x_{i}$ is \texttt{TRUE}, we match agents $x_{i1},x_{i2},x_{i3}$ according to
    \begin{equation*}
        \begin{matrix}
            & & t=1 & & \quad & & t=2 & \\
            & g_{i1} & g_{i2} & g_{i3} & \quad & g_{i1} & g_{i2} & g_{i3} \\[4pt]
            x_{i1} & 0 & 1 & 0 & \quad & 1 & 0 & 0\\
            x_{i2} & 0 & 0 & 1 & \quad & 0 & 0 & 0 \\
            x_{i3} & 0 & 0 & 0 & \quad & 0 & 0 & 1
        \end{matrix}
    \end{equation*}
    where we match an agent to a dummy good if its row contains all zeros. Since there are exactly two clauses, say $c_{a}$ and $c_{b}$, with literal $x_{i}$, we can, if necessary, match agent $c_{a}$ to good $g_{i1}$ at $t=1$ and match agent $c_{b}$ to good $g_{i2}$ at $t=2$. Likewise, if $x_{i}$ is \texttt{FALSE}, then we match according to
    \begin{equation*}
        \begin{matrix}
            & & t=1 & & \quad & & t=2 & \\
            & g_{i1} & g_{i2} & g_{i3} & \quad & g_{i1} & g_{i2} & g_{i3} \\[4pt]
            x_{i1} & 0 & 1 & 0 & \quad & 1 & 0 & 0\\
            x_{i2} & 0 & 0 & 1 & \quad & 0 & 0 & 0 \\
            x_{i3} & 1 & 0 & 0 & \quad & 0 & 1 & 0
        \end{matrix}
    \end{equation*}
    Since there are exactly one clause, say $c_{a}$, with literal $\bar{x}_{i}$, we can, if necessary, match agent $c_{a}$ to good $g_{i3}$ at $t=2$. It is easy to verify that each agent $i\in N$ has $v_{i}^{2}(S)\geq 1$. Thus, $(\calI,\kappa)$ is a \texttt{YES} instance.

    Suppose $(\calI,\kappa)$ is a \texttt{YES} instance and let $S$ be a solution to this instance. For each $j\in[q]$, if the literal $x_{i}$ appears in clause $c_{j}$ and agent $c_{j}$ is matched to either good $g_{i1}$ or $g_{i2}$, then we set $x_{i}$ to \texttt{TRUE} to satisfy the clause. Similarly, if the literal $\bar{x}_{i}$ appears in clause $c_{j}$ and agent $c_{j}$ is matched to good $g_{i3}$, then we set $x_{i}$ to \texttt{FALSE} to satisfy the clause. Since $v_{c_{j}}^{2}(S)\geq 1$, at least one literal in clause $c_{j}$ must have been set to satisfy the clause.
    
    This procedure might be ambiguous because two clauses might assign different values to the same variable. We claim that this will not happen. Suppose, for sake of contradiction, that there exist two clauses $c_{a}\neq c_{b}$ such that the literal $x_{i}$ appears in clause $c_{a}$ while the literal $\bar{x}_{i}$ appears in $c_{b}$, and that agent $c_{a}$ is matched to, without loss of generality, good $g_{i1}$ while agent $c_{b}$ is matched to good $g_{i3}$. Then, agent $x_{i2}$ must be matched to good $g_{i3}$ once, leaving only one copy of good $g_{i1}$ and two copies of good $g_{i2}$ to be allocated to agent $x_{i1}$ and $x_{i3}$. However, there is no way to achieve $b^{2}(S)\geq 1$ with this configuration, which leads to a contradiction.

    We now prove the case for $T\geq 3$. We will perform a reduction from an instance of the decision problem $(\calI',\kappa')$ where $\calI'=(N',G',T',\{u'_{i}\}_{i\in N'})$ with $T'=2$.

    Given an instance $(\calI',\kappa')$, we will reduce it to an instance $(\calI,\kappa)$, where $\kappa=\kappa'+C$ with
    \begin{equation*}
        C=T\cdot\max_{i\in N'}\max_{g_{j}\in G'}u'_{i}(g_{j})
    \end{equation*}
    and $\calI=\instfull$ with $n=m=2n'+m'$ and
    \begin{equation*}
        u_{i}(g_{j})=
        \begin{cases}
            u'_{i}(g_{j}), & \text{for $i\in\{1,\dots,n'\}$ and $j\in\{1,\dots,m'\}$,}\\
            C/(T-2), & \text{for $i\in\{1,\dots,n'\}$ and $j=i+n'+m'$,}\\
            \kappa/(T-2), & \text{for $i\in\{n'+1,\dots,n'+m'\}$ and $j=i-n'$,}\\
            \kappa/2, & \text{for $i\in\{n'+m'+1,\dots,2n'+m'\}$ and $j=i$,}\\
            0, & \text{otherwise.}
        \end{cases}
    \end{equation*}

    Suppose $(\calI',\kappa')$ is a \texttt{YES} instance and let $S'\in\bbS^{2}$ be a solution to this instance. Let $A'$ be the allocation associated to $S'$. Consider the allocation $A$ with
    \begin{equation*}
        A_{ij}=
        \begin{cases}
            A'_{ij}, & \text{for $i\in\{1,\dots,n'\}$ and $j\in\{1,\dots,m'\}$,}\\
            T-2, & \text{for $i\in\{1,\dots,n'\}$ and $j=i+n'+m'$,}\\
            T-2, & \text{for $i\in\{n'+1,\dots,n'+m'\}$ and $j=i-n'$,}\\
            2, & \text{for $i\in\{n'+m'+1,\dots,2n'+m'\}$ and $j=i$,}\\
            0, & \text{otherwise.}
        \end{cases}
    \end{equation*}
    Since the sum of each rows and each columns of $A'$ is at most $2$, it is straightforward to verify that the sum of each rows and each columns of $A$ is at most $T$. Under $A$, agent $i\in\{1,\dots,n'\}$ receives
    \begin{equation*}
        v_{i}(A)=v'_{i}(A')+(T-2)\cdot\frac{C}{T-2}\geq \kappa'+C=\kappa,
    \end{equation*}
    agent $i\in\{n'+1,\dots,n'+m'\}$ receives
    \begin{equation*}
        v_{i}(A)=(T-2)\cdot\frac{\kappa}{T-2}=\kappa,
    \end{equation*}
    and agent $i\in\{n'+m'+1,\dots,2n'+m'\}$ receives
    \begin{equation*}
        v_{i}(A)=2\cdot\frac{\kappa}{2}=\kappa.
    \end{equation*}
    By \Cref{lem:alloc-to-seq}, there exist a sequence $S\in\bbS^{T}$ such that $v_{i}^{T}(S)\geq v_{i}(A)\geq\kappa$. Thus, $(\calI,\kappa)$ is a \texttt{YES} instance.
    
    Suppose $(\calI,\kappa)$ is a \texttt{YES} instance and let $S\in\bbS^{T}$ be a solution to this instance. Let $A$ be the allocation associated to $S$ and $A'$ be the first $n'$ rows and $m'$ columns of $A$. Observe that since agents $i\in\{n'+1,\dots,n'+m\}$ have value at least $\kappa$, $A_{ij}\geq T-2$ for $j=i-n'$. This implies that the sum of each columns of $A'$ is at most $2$. By a similar argument, we have $A_{ij}\geq 2$ for all agents $i\in\{n'+m'+1,\dots,2n'+m'\}$ and $j=i$. Furthermore, since agent $i\in\{1,\dots,n'\}$ can never reach a value of $C$ by only allocating goods in $\{g_{1},\dots,g_{n'+m'}\}$ to them, they must be allocated to $g_{i+n'+m'}$ for at least $T-2$ times. Since these are the same goods that must be allocated at least twice to agents in $\{n'+m'+1,\dots,2n'+m'\}$, we have $A_{ij}=T-2$ for $j=i+n'+m'$. This implies that the sum of each row in $A'$ is also at most $2$. Under $A'$, agent $i\in\{1,\dots,n'\}$ receives
    \begin{equation*}
        v'_{i}(A')=v_{i}(A)-(T-2)\cdot\frac{C}{T-2}\geq\kappa-C=\kappa'.
    \end{equation*}
    By \Cref{lem:alloc-to-seq}, there exist a sequence $S'\in\bbS^{2}$ such that ${v'_{i}}^{2}(S')\geq v'_{i}(A')\geq\kappa'$. Thus, $(\calI',\kappa')$ is a \texttt{YES} instance.

    We now show that \textsc{ERM} is \APX-hard for $T=2$ by showing that if there exists a $(2-\epsilon)$-approx algorithm for \textsc{ERM} (for any small $\epsilon > 0$), then we can determine if there is a sequence $S$ for the instance $\calI$ in our reduction such that $b^{T}(S) \geq 1$. 
    Suppose that $\OPT \geq 1$. A $(2-\epsilon)$-approx algorithm would find a sequence $S$ such that $b^{T}(S)\geq \OPT/(2-\epsilon) \geq 1/(2 -\epsilon) > 0.5$. 
    Now, note that as $u_i(g) \in \{0,0.5,1\}$, for all $i \in N$ and $g \in G$, if $b^{T}(S) > 0.5$  for any sequence $S$, then $b^{T}(S) \geq 1$. Hence,  a $(2-\epsilon)$-approx algorithm allows us to determine if there exists a sequence $S$ for the instance $\calI$ in our reduction such that $b^{T}(S) \geq 1$.

    Lastly, we note that we can easily extend this \APX-hardness result from $T = 2$ to any $T\geq 3$ by adding dummy agents.
    Specifically, for each good, introduce $T -2$ dummy agents that only have non-zero utility for that good and zero for other goods. 
    Hence, to satisfy these dummy agents (so they have non-zero utility), only two ``instances'' of any good can be given to non-dummy agents. 
    Formally, let the set of dummy agents agents be $\{{a_{j,k}} \mid j \in [m], k \in [T-2]\}$. 
    For this dummy set of agents, set 
    \begin{equation*}
        u_{a_{j,k}}(g_{j'})=\begin{cases}
            2 \cdot \max_{g \in G} u_1(g), &\text{if $j=j'$},\\
            0,&\text{otherwise,}
        \end{cases}
    \end{equation*}
    for all $j,j'\in [m]$ and $k \in [T-2]$.
\end{proof}

\SCred*
\begin{proof}
     Let $\Phi=(N,G,(v_{i})_{i\in N})$ be an instance of the Santa Claus problem with additive valuations. 
     Define $\OPT$ to be the maximum egalitarian welfare across all allocations:
     \begin{equation*}
         \OPT = \max_{A} \min_{i \in [n]} v_i(A_i).
     \end{equation*}
     Now, we will construct an instance $\calI$ of \textsc{ERM} such that $\OPT(T)$ for $\calI$ is equal to  $\OPT$ and that for every sequence $S$ such that $b^{T}(S) >0$, $S$ can be easily mapped to an allocation of goods $A = (A_1,\dots, A_n)$ for $\Phi$ such that $\min_{i} v_i(A_i) = b^{T}(S)$. 

     Our instance of ERM has $m$ timesteps, $m$ goods and $n + m \times(m-1)$ agents. 
     We split the agents into two sets $N_1 = \{a_1, \dots, a_n\}$ (representing the agents in $\Phi$) and $N_2 =\{a_{j,k} \mid j \in [m], k \in [m-1]\}$ (representing dummy agents). 
     We construct the utilities as follows: for agents $a_i \in N_1$, let $u_{a_i}(g_j) = v_i(g_j)$, and for agents $a_{j,k} \in N_2$, let 
     \begin{equation*}
         u_{a_{j,k}}(g_{j'}) = \begin{cases}
             v_1(G),&\text{if $j'=j$}\\
             0,&\text{otherwise.}
         \end{cases}
     \end{equation*}

    We now show that for all $\kappa > 0$ there is a sequence $S$  for $\calI$ with bottleneck value $b^{T}(S) \geq \kappa$ if and only if there is an allocation $A$ for $\Phi$ such that $\min_{i \in [n]} v_i(A_i) \geq \kappa$.

    $(\Leftarrow)$ Suppose there is an allocation $A$ for $\Phi$ such that $\min v_i(A_i) \geq \kappa$. 
    Then, we can construct allocation $A'$ for $\calI$ as follows: for $a_i \in N_1$, $A'_{a_i} = A_i$ and for $a_{j,k} \in N_2$, $A'_{a_{j,k}} = \{j\}$. 
    We note that $u_{a_{j,k}}(\{j\}) \geq v_1(G) > \kappa$. 
    We further note that all goods are allocated exactly $m$ times and all agents are allocated at most $m$ goods. 
    Hence,  by \Cref{lem:alloc-to-seq}, there exist a sequence $S\in\bbS^{T}$ such that $v_{i}^{T}(S)\geq v_{i}(A')\geq\kappa$.

    $(\Rightarrow)$ Suppose there is a sequence $S$ for $\calI$ with bottleneck value $b^{T}(S) \geq \kappa$. 
    Then, let $A'$ be the allocation associated with the sequence. We now construct an allocation $A$ for $\Phi$ as follows: for $i \in [n]$, $A_i = A'_{a_i}$. This construction ensures that $v_i(A_i) = u_{a_i}(A_{a_i}') \geq \kappa$. 
    We are now only left to prove that for all goods $g_j \in [m]$, $g_j$ was allocated at most once to agents in $N_1$ (i.e., $\sum_{a_i \in N_1} A'_{a_i,j} \leq 1$).  As $b^{T}(S) > 0$, all agents receive at least one good that they have non-negative utility for. 
    Thus, for all $a_{j,k} \in N_2$, they must be allocated the good $g_j$ at least once. 
    Hence, as $g_j$ must be allocated at least $m-1$ times to agents in $N_2$,  $g_j$ was allocated only once to agents in $N_1$.

    Hence, $\OPT(T)$ for $\calI$ is equal to $\OPT$, and for every sequence $S$ such that $b^{T}(S) > 0$ can be easily mapped to an allocation of goods $A$ for $\Phi$ such that $\min_{i \in [n]} v_i(A_i) \geq b^{T}(S)$. 
    Thus, there is a $c$-approx algorithm for ERM only if there is a $c$-approx algorithm for the Santa Claus problem with additive valuations.
\end{proof} 

\section{Approximate Algorithm for Optimality (\Cref{subsec:approxalgo})}

\approxgeneral*

\begin{proof}
    Consider the allocation $A$ in which $A_{ij}=\lfloor TB_{ij}\rfloor$ for all $i\in N$ and $g_{j}\in G$. Note that for each $g_{j}\in G$, we have
    \begin{equation*}
        \sum_{\mathclap{i\in N}}A_{ij}
        =\sum_{\mathclap{i\in N}}\,\lfloor TB_{ij}\rfloor
        \leq\sum_{\mathclap{i\in N}}TB_{ij}
        =T,
    \end{equation*}
    and similarly, for each $i\in N$, we have
    \begin{equation*}
        \sum_{\mathclap{g_{j}\in G}}A_{ij}
        =\sum_{\mathclap{g_{j}\in G}}\,\lfloor TB_{ij}\rfloor
        \leq\sum_{\mathclap{g_{j}\in G}}TB_{ij}
        =T.
    \end{equation*}
    By \Cref{lem:alloc-to-seq}, there exist a sequence $S$ over $T$ rounds composed of at most $O(m^{2})$ unique matchings such that $v_{i}^{T}(S)\geq v_{i}(A)$. Then, for any agent $i\in N$, we have
    \begin{align*}
        v_{i}^{T}(S)
        \geq v_{i}(A)
        &\geq\sum_{\mathclap{g_{j}\in G}}u_{i}(g_{j})\lfloor TB_{ij}\rfloor\\
        &\geq\sum_{\mathclap{g_{j}\in G}}u_{i}(g_{j})\cdot(TB_{ij}-1)\\
        &=\sum_{\mathclap{g_{j}\in G}}TB_{ij}u_{i}(g_{j})-\sum_{\mathclap{g_{j}\in G}}u_{i}(g_{j})\\
        &\geq T b-m\cdot\max_{\mathclap{g_{j}\in G}}u_{i}(g_{j})\\
        &\geq\OPT(T)-m\cdot\max_{\mathclap{g_{j}\in G}}u_{i}(g_{j}).
    \end{align*}
    Let $k\in N$ be a bottleneck agent of sequence $S$ at round $T$ so that $b^{T}(S)=v_{k}^{T}(S)$. Then, we have
    \begin{align*}
        b^{T}(S)
        &\geq\OPT(T)-m\cdot\max_{\mathclap{g_{j}\in G}}u_{k}(g_{j})\\
        &\geq\OPT(T)-m\cdot\max_{\mathclap{i\in N}}\max_{\mathclap{g_{j}\in G}}u_{i}(g_{j}).\qedhere
    \end{align*}

\end{proof}

\section{FPT Algorithms for Optimality (\Cref{subsec:fptalgo})}
\label{sec:fptalgo}

In this section, we consider another approach to dealing with the computational intractability. Our goal is to develop a \emph{fixed parameter tractable} (FPT) when the number of agents is a fixed parameter, i.e., there exists an algorithm that can compute an optimal sequence in polynomial-time when $n$ is a constant. This provides a practical solution for small-group matching (e.g., crowdsourcing platforms divide workers into subgroups tailored to specific categories of tasks).

To build up to this result, let us first consider the easier case where the number of goods $m$ is the fixed parameter.

\begin{theorem}
    Given an instance $\instfull$, \textsc{ERM} is FPT with respect to $m$.
\end{theorem}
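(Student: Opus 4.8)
The plan is to exploit the fact that when $m$ is fixed, the number of distinct matchings is a constant, so that a length-$T$ sequence is completely determined---as far as the bottleneck value is concerned---by how many times each matching appears, which turns the problem into an integer program in a fixed number of variables.

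First, since $n \le m$, both quantities are bounded by the constant $m$, and the set $\mathcal{M}$ of all matchings (injective maps $N \to G$) has size $D := |\mathcal{M}| \le m!/(m-n)! \le m!$, which is a constant depending only on $m$. Enumerate $\mathcal{M} = \{M_1,\dots,M_D\}$ in polynomial time. For any sequence $S \in \bbS^T$ in which matching $M_k$ occurs $t_k$ times (so $t_k \in \bbZ_{\ge 0}$ and $\sum_{k=1}^{D} t_k = T$), each agent $i \in N$ receives $v_i^T(S) = \sum_{k=1}^{D} t_k\, u_i(M_k(i))$, which depends only on the vector $(t_1,\dots,t_D)$ and not on the order of the matchings. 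Hence reordering is free, and
$\OPT(T) = \max\{\,\min_{i\in N} \sum_{k=1}^{D} t_k\, u_i(M_k(i)) \;:\; t_k \in \bbZ_{\ge 0},\ \sum_{k=1}^{D} t_k = T\,\}$;
conversely, from any maximizing vector $(t_k^{\ast})$ we recover an optimal sequence by listing each $M_k$ exactly $t_k^{\ast}$ times (in any order), so it suffices to find such a vector.

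Next, observe that for a fixed target $\kappa$, deciding whether some $S \in \bbS^T$ satisfies $b^T(S) \ge \kappa$ is exactly the feasibility question for the system $\sum_{k} t_k = T$, $t_k \ge 0$, and $\sum_{k} t_k\, u_i(M_k(i)) \ge \kappa$ for all $i \in N$, in the integer variables $t_1,\dots,t_D$. This is integer linear programming in dimension $D = O(1)$, which is solvable in time $f(m)\cdot\mathrm{poly}(\text{input size})$ by Lenstra's polynomial-time algorithm for integer programming in fixed dimension; this already establishes that \textsc{ERM} is FPT with respect to $m$ (and polynomial-time for constant $m$). For the optimization variant one may assume, after clearing denominators, that all $u_i(g_j)$ are integers, so $\OPT(T)$ is an integer in $\{0,1,\dots,T\cdot\max_{i,j} u_i(g_j)\}$; a binary search over this range using the feasibility test as an oracle pins down $\OPT(T)$ in a polynomial number of steps, and one further feasibility solve recovers the optimal $(t_k^{\ast})$ and hence the optimal sequence. (Alternatively, one can directly solve the mixed-integer program $\max b$ subject to the above constraints together with $\sum_k t_k\, u_i(M_k(i)) \ge b$, with the $D$ integer variables $t_k$ and the single continuous variable $b$, which is likewise polynomial in fixed dimension.)

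The only genuine subtlety is the encoding of $T$: a naive enumeration of all $(t_1,\dots,t_D)$ with $\sum_k t_k = T$ runs in time $(T+1)^D$, which is polynomial only if $T$ is given in unary; invoking the fixed-dimension ILP machinery instead keeps everything polynomial even when $T$ is written in binary. The remaining points---handling the maximin objective via the threshold reformulation (or the mixed-integer formulation), and verifying that all coefficients stay polynomially bounded---are routine. Note that we do not even need \Cref{lem:alloc-to-seq} here, since the multiset-of-matchings reformulation is direct.
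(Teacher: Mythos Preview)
Your proposal is correct and follows essentially the same approach as the paper: both observe that only the multiset of matchings matters, bound the number of distinct matchings by $m!$, and reduce to integer programming in $O(m!)$ variables solved via Lenstra's fixed-dimension algorithm. Your treatment is in fact slightly more careful in that you explicitly address the decision version (which is what \textsc{ERM} actually is) before discussing the optimization variant, and you flag the unary-vs-binary encoding of $T$; the paper simply writes the max-min ILP with an extra variable $b$ and invokes Lenstra directly.
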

\begin{proof}
    Since the order of the matchings in a sequence does not affect the values at round $T$, our goal is to determine the number of times each matching should be chosen to achieve the highest bottleneck value. Let $\mathcal{M}$ be the set of all possible matchings. For each $M\in\mathcal{M}$, let $X_{M}$ be the number of times $M$ should be chosen in the sequence. We can formulate the optimization problem as the following integer linear program:
    \begin{subequations}
        \begin{alignat}{2}
        \maximize_{b,X} \quad & b\tag{P2}\label{eqn:fpt-general-ilp}\\
        \subjectto \quad & \sum_{\mathclap{M\in\mathcal{M}}}X_{M}u_{i}(M(i))\geq b, && \quad\forall i\in N,\nonumber\\
        & \sum_{\mathclap{M\in\mathcal{M}}}X_{M}=T,\nonumber \\
        & X_{M} \geq 0, && \quad\forall M\in\mathcal{M}.\nonumber
        \end{alignat}
    \end{subequations}
    Since the number of variables for (\ref{eqn:fpt-general-ilp}) is at most $m!+1$, we obtain the FPT result using Lenstra's theorem~\cite{lenstra1983fptilp}.
\end{proof}


In order to extend the result for that the algorithm is FPT with respect to $n$, we need to reduce the number of variables in (\ref{eqn:fpt-general-ilp}) to a function of $n$ and not $m$. This reduction is accomplished through two observations. First, it is always possible to construct an optimal sequence consisting solely of Pareto optimal matchings, by replacing any non-Pareto optimal matching with one that strongly Pareto dominates it. Second, there are at most $n!$ unique (up to its valuation profile) Pareto optimal matchings. Together, these observations allow us to consider a smaller set of matchings $\mathcal{M}$ in (\ref{eqn:fpt-general-ilp}), with size at most $n!$, thereby achieving our desired result. We now prove the second observation by characterizing Pareto optimal matchings in terms of permutations of agents.
We note that this result may be of independent interest, especially on the topic of \emph{house allocation} \citep{abdulkadirouglu1998random,abraham2004pareto,atila1999house,ChooLiSuTeZh24,gan2019envy,hylland1979positions,zhou1990house}.

\paretocharacterization*

\begin{proof}
    If a matching $M$ is $\pi$-optimal for some permutation $\pi$, then $M$ is clearly Pareto optimal: no agent can improve without harming someone with higher priority in $\pi$. Now suppose for contradiction that $M$ is Pareto optimal but not $\pi$-optimal for any permutation $\pi$.
    
    We define the envy\textsuperscript{+} graph of $M$ as a directed graph where each vertex corresponds to an agent. There is an edge from agent $i$ to $i'$ if there exists a sequence of $p\geq 2$ distinct agents $(i_{1},\dots,i_{p})$, with $i_{1}=i$ and $i_{p}=i'$, such that agent $i_{1}$ strictly envies $i_{2}$, meaning $u_{i_{1}}(M(i_{1}))<u_{i_{1}}(M(i_{2}))$, and for $r=2,\dots,p-1$, agent $i_{r}$ weakly envies agent $i_{r+1}$, that is, $u_{i_{r}}(M(i_{r}))\leq u_{i_{r}}(M(i_{r+1}))$.
    
    If this graph contains a cycle, we could perform a cyclic exchange among the agents to strictly improve at least one agent's value without hurting others, contradicting the Pareto optimality of $M$. Hence, the envy\textsuperscript{+} graph must be acyclic. We can therefore define $\pi$ to be the topological sort of this graph in reverse dependency order, so that agents earlier in the ordering are not envied by those that come later. In particular, if $\pi(i)<\pi(i')$, then $(i,i')$ is not an edge in the envy\textsuperscript{+} graph.
        
    Let $M_{*}$ be a $\pi$-optimal matching. We claim that there exists an agent $a\in N$ such that (1) $u_{a}(M(a))<u_{a}(M_{*}(a))$, and (2) for all agent $i\in N$ with $\pi(i)<\pi(a)$, we have $u_{i}(M(i))\leq u_{i}(M_{*}(i))$. To see this, recall that by definition of $\pi$-optimality applied to $M_{*}$, any matching, including $M$, must satisfy one of the following:
    \begin{itemize}
        \item For all agent $i\in N$, $u_{i}(M(i))\leq u_{i}(M_{*}(i))$; or
        \item There exists an agent $i\in N$ such that $u_{i}(M(i))>u_{i}(M_{*}(i))$, but there exists another agent $i'\in N$ with $\pi(i')<\pi(i)$ such that $u_{i'}(M(i'))<u_{i'}(M_{*}(i'))$.
    \end{itemize}
    
    In the first case, if all agents are indifferent between $M$ and $M_{*}$, then $M$ is itself $\pi$-optimal, contradicting our assumption. Hence, there must exist some agent $a\in N$ who is strictly better off under $M_{*}$, that is, $u_{a}(M(a))<u_{a}(M_{*}(a))$, thus proving property (1). Since $u_{i}(M(i))\leq u_{i}(M_{*}(i))$ for all $i$, property (2) follows immediately.

    In the second case, let $i\in N$ be the agent with the highest priority under $\pi$ such that $u_{i}(M(i))>u_{i}(M_{*}(i))$. By definition, there must exist an agent $i'\in N$ with $\pi(i')<\pi(i)$ such that $u_{i'}(M(i'))<u_{i'}(M_{*}(i'))$. Setting $a=i'$ proves property (1). Since all agent with higher priority than $a$ also outrank $i$, and $i$ is the highest-priority agent property who strictly prefers $M$ over $M_{*}$, property (2) follows. This completes the claim, allowing us to proceed with the main argument.
    
    To analyze the structure of $M_{*}$ relative to $M$, we define the envy path graph as a directed graph whose vertices correspond to agents. There is an edge from agent $i$ to agent $i'\neq i$ if, under $M_{*}$, agent $i$ receives the good that agent $i'$ was assigned in $M$, that is, $M_{*}(i)=M(i')$.
    
    Consider the traversal of the envy path graph starting from agent $a$. Since each good is matched to at most one agent in both $M$ and $M_{*}$, each vertex in the graph has at most one incoming and one outgoing edge. As a result, the traversal either enters a cycle that includes agent $a$, or eventually terminates at a vertex with no outgoing edge, forming a simple path. In the former case, we obtain a cycle $(a=a_{1},\dots,a_{q},a_{1})$; and in the latter, we obtain a path $(a=a_{1},\dots,a_{q})$, where it is possible that $q=1$ if there is no outgoing edge from $a$.

    We now show, by induction on the position $r=1,\dots,q-1$ along the traversal path, that (i) $u_{a_{r}}(M(a_{r}))\leq u_{a_{r}}(M(a_{r+1}))$, and (ii) $\pi(a_{r+1})<\pi(a)$. Intuitively, this means that each agent along the path or cycle weakly prefers the good assigned to the next agent, and all agents in the sequence have higher priority than $a$ under $\pi$.
    
    For the base case $r=1$, observe that agent $a=a_{1}$ strictly prefers their assignment in $M_{*}$ over $M$, and by construction, $M_{*}(a_{1})=M(a_{2})$. Therefore, we have
    \begin{equation*}
        u_{a_{1}}(M(a_{1}))=u_{a}(M(a))<u_{a}(M_{*}(a))=u_{a_{1}}(M_{*}(a_{1}))=u_{a_{1}}(M(a_{2})).
    \end{equation*}
    This confirms the first property, showing that agent $a_{1}$ envies $a_{2}$ in $M$. As such, there is an edge from $a_{1}$ to $a_{2}$ in the envy\textsuperscript{+} graph. Since $\pi$ is defined as a reverse topological ordering of this graph, we conclude that $\pi(a_{2})<\pi(a_{1})=\pi(a)$, thereby establishing the second property.

    For the inductive step, assume that both properties hold for all indices up to $r=k-1$. Since $\pi(a_{k})<\pi(a)$ and $M_{*}$ is a $\pi$-optimal matching, it follows that
    \begin{equation*}
        u_{a_{k}}(M(a_{k}))\leq u_{a_{k}}(M_{*}(a_{k}))=u_{a_{k}}(M(a_{k+1})),
    \end{equation*}
    which verifies the first property for $r=k$. To verify the second property, recall that $u_{a_{1}}(M(a_{1}))<u_{a_{1}}(M(a_{2}))$, and for all $r=2,\dots,k$, $u_{a_{r}}(M(a_{r}))\leq u_{a_{r}}(M(a_{r+1}))$. By definition, there must be an edge between $a_{1}$ and $a_{k+1}$ in the envy\textsuperscript{+} graph, which implies that $\pi(a_{k+1})<\pi(a_{1})=\pi(a)$, completing the inductive argument.

    Thus, we return to the two cases of the traversal: either a cycle $(a=a_{1},\dots,a_{q},a_{1})$ or a path $(a=a_{1},\dots,a_{q})$. If the cyclic case, we perform a cyclic exchange where each agent $a_{i}$ receives $M(a_{i+1})$ for all $i\in[q-1]$, and agent $a_{q}$ receives $M(a_{1})$. Since each edge in the envy path graph represents weak preference, every agent in the cycle weakly prefers their new assignment, and agent $a=a_{1}$ strictly prefers $M(a_{2})$ over $M(a_{1})$. This yields a matching that strictly Pareto dominates $M$, contradicting its assumed optimality.

    In the path case, $M_{*}(a_{q})$ must be a good that is unassigned in $M$. We construct a new matching by assigning each agent $a_{i}$ to $M(a_{i+1})$ for $i\in[q-1]$, and assigning agent $a_{q}$ to $M_{*}(a_{q})$. Again, each $a_{i}$ weakly prefers their new good, and $a=a_{1}$ strictly prefers $M(a_{2})$ over $M(a_{1})$. Furthermore, since $\pi(a_{q})<\pi(a)$, the $\pi$-optimality of $M_{*}$ implies that $u_{a_{q}}(M(a_{q}))\leq u_{a_{q}}(M_{*}(a_{q}))$, so agent $a_{q}$ weakly prefers their new good as well. This matching also strictly Pareto dominates $M$, again contradicting its assumed optimality. Thus, $M$ is $\pi$-optimal for some $\pi$.
\end{proof}

Now, note that in an optimal sequence, we can swap a matching with one that has an identical valuation profile without affecting the sequence's optimality. Thus, it is sufficient to compute just one $\pi$-optimal matching for all permutations $\pi$ and consider only sequences construct from these matching. To ensure the resulting algorithm is FPT with respect to $n$, we need to show that it is efficient in all other parameters to compute a $\pi$-optimal matching for each $\pi$.

\begin{restatable}{lemma}{pimatching}
    \label{lem:pi-matching}
    Given a permutation of agents $\pi$, we can find a $\pi$-optimal matching in polynomial time.
\end{restatable}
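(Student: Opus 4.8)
The plan is to compute a $\pi$-optimal matching by a greedy, sequential procedure that processes agents in priority order $\pi$, analogous to serial dictatorship but with care taken to handle ties (indifferences). Relabel agents so that $\pi(i) = i$, i.e. agent $1$ has highest priority. We build up a partial assignment; at step $k$ we will have committed to a set of achievable ``value vectors'' for agents $1,\dots,k$. The key invariant I would maintain is: after processing agent $k$, we have fixed the values $u_1(M(1)),\dots,u_k(M(k))$ to be lexicographically maximal (in the priority order) among all matchings, and we carry along the combinatorial freedom still available to later agents as a bipartite-matching feasibility constraint.

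Concretely, first I would compute the maximum value agent $1$ can receive in any matching; since a matching must match all $n$ agents to distinct goods, this is the largest value $w_1$ such that there is a perfect matching of $N$ into $G$ in which agent $1$ gets a good worth $w_1$ to her — computable by trying each candidate good $g$ for agent $1$ and checking whether a perfect matching of the remaining agents into the remaining goods exists (a single bipartite matching computation, e.g. via Ford–Fulkerson / Hopcroft–Karp). Having fixed $w_1$, restrict attention to matchings respecting $u_1(M(1)) = w_1$; among these, compute the maximum $w_2$ agent $2$ can get, again by a bipartite-matching feasibility check over the restricted good set; and so on. After all $n$ steps, the remaining constraint set is nonempty (each step only commits to values that were simultaneously realizable), so any perfect matching consistent with all the committed values is the desired $M_*$; extract one by a final bipartite matching computation. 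Each of the $O(n)$ steps makes $O(m)$ feasibility checks, each a polynomial-time bipartite matching call, so the whole procedure runs in polynomial time.

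The correctness argument is that this greedy choice produces a $\pi$-optimal matching. Suppose some matching $M$ has an agent $i$ with $u_i(M(i)) > u_i(M_*(i))$ while every higher-priority agent $i'$ (with $\pi(i') < \pi(i)$) has $u_{i'}(M(i')) \geq u_{i'}(M_*(i'))$. Take the smallest such $i$. Since all agents $1,\dots,i-1$ get at least their $M_*$-value under $M$, and those values were chosen greedily-maximal, they must in fact equal their $M_*$-values; hence $M$ is a matching consistent with the commitments made through step $i-1$, and in it agent $i$ gets strictly more than $w_i$ — contradicting that $w_i$ was the maximum value agent $i$ could attain subject to exactly those commitments. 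So no such $M$ exists, and $M_*$ is $\pi$-optimal.

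The main obstacle I anticipate is the correctness bookkeeping around indifferences: I must be careful that ``committing to value $w_k$ for agent $k$'' (rather than committing to a specific good) does not over-constrain later steps, and that the feasibility check at step $k$ genuinely ranges over all goods of value $\geq w_k$ (and $> w_{k-1}$-committed-values etc.) that agent $k$ could receive while leaving a perfect matching for the rest. Phrasing each step cleanly as a bipartite feasibility problem — build the bipartite graph on remaining agents and goods, with agent $k$ restricted to goods of value exactly $w_k$ and each already-processed agent $j$ restricted to goods of value exactly $w_j$ (these latter can actually be pinned to their chosen good without loss) — resolves this, but getting the reduction stated precisely is where the care is needed; the matching algorithmics themselves are entirely standard.
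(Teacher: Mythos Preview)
Your approach is correct and genuinely different from the paper's. You run a careful serial-dictatorship variant: process agents in $\pi$-order, at each step committing only to the \emph{value} $w_k$ (not a specific good) and checking by bipartite matching that the value profile $(w_1,\dots,w_k)$ is still realizable. Your correctness argument (take the smallest improving agent $i$, argue inductively that agents $1,\dots,i-1$ are pinned to exactly their $w_j$ values, contradict maximality of $w_i$) is sound. One caveat: your parenthetical ``these latter can actually be pinned to their chosen good without loss'' is false---if agent~1 is indifferent between $g_1,g_2$ and agent~2 only likes $g_1$, pinning agent~1 to $g_1$ ruins agent~2. Your main algorithm (restricting each processed agent to its value class, not a single good) avoids this, so just drop the parenthetical.

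The paper instead reduces to a \emph{single} maximum-weight bipartite matching: it weights edge $(i,g)$ by $r(i,g)\cdot m^{n-\pi(i)}$, where $r(i,g)$ is the rank of $g$ in $i$'s preference order, so that any gain for a higher-priority agent dominates all possible changes for lower-priority agents via a base-$m$ positional argument. This is slicker---one call to a weighted-matching oracle rather than $O(nm)$ feasibility checks---and makes the lexicographic structure explicit in the weights. Your iterative approach is closer in spirit to the serial-dictatorship intuition and arguably makes the handling of indifferences more transparent, at the cost of more matching calls.
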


\begin{proof}
    Denote the rank $r(i,g_{j})$ of good $g_{j}$ for agent $i$ to be the number of goods (inclusive of $g_{j}$) that are valued at most as highly as $g_{j}$. More formally,
    \begin{equation*}
        r(i,g_{j})=\sum_{g\in G}\mathbb{I}[u_{i}(g)\leq u_{i}(g_{j})]
    \end{equation*}
    where $\mathbb{I}[\cdot]$ is the indicator function, which equals $1$ if the condition inside is true and $0$ otherwise. Observe that if two goods have the same value $u_{i}(g_{j})=u_{i}(g)$, then they have the same rank, that is, $r(i,g_{j})=r(i,g)$.

    Construct a complete bipartite graph $H=(N,G,E)$ where the weight of $(i,g)\in E$ is
    \begin{equation*}
        w(i,g)=
        \begin{cases}
            r(i,g)\cdot m^{n-\pi(i)}, & \text{if $i \leq n$,}\\
            0, & \text{otherwise.}\\
        \end{cases}
    \end{equation*}
    Consider the maximum weight matching $M$ in $H$. We claim that $M$ is $\pi$-optimal.
    
    Suppose that $M$ is not $\pi$-optimal. Then there must exists some $M_{0}$ with an agent $i\in N$ that has $u_{i}(M_{0}(i))>u_{i}(M(i))$, which implies $r(i,M_{0}(i))>r(i,M(i))$. Moreover, all agents $i'\in N$ with $\pi(i')<\pi(i)$ must also satisfy $u_{i'}(M_{0}(i'))\geq u_{i'}(M(i'))$, which implies $r(i',M_{0}(i'))\geq r(i',M(i'))$. Partition $N$ into
    \begin{align*}
        N_{1}&=\{i'\in N\mid\pi(i')\leq\pi(i)\},\\
        N_{2}&=\{i'\in N\mid\pi(i')>\pi(i)\}.
    \end{align*}
    Then, we have
    \begin{equation*}
        \sum_{\mathclap{i'\in N_{1}}}w(i',M_{0}(i'))-w(i',M(i'))
        =\sum_{\mathclap{i'\in N_{1}}}\,\bigl(r(i',M_{0}(i'))-r(i',M(i'))\bigr)\cdot m^{n-\pi(i')}
        \geq m^{n-\pi(i)}.
    \end{equation*}
    Furthermore, we also have
    \begin{align*}
        \sum_{\mathclap{i'\in N_{2}}}w(i',M(i'))-w(i',M_{0}(i'))
        &=\sum_{\mathclap{i'\in N_{2}}}\,\bigl(r(i',M(i'))-r(i',M_{0}(i'))\bigr)\cdot m^{n-\pi(i')}\\
        &\leq(m-1)\cdot m^{n-(\pi(i)+1)}+\dots+(m-1)\cdot m^{n-n}\\
        &=m^{n-\pi(i)}-1.
    \end{align*}
    Thus, the weight of matching $M_{0}$ is greater than the weight of the matching $M$ and we have a contradiction.
\end{proof}

Then, we propose \Cref{alg:fpt-general} that gives us our desired result, as follows. 

\begin{algorithm}[t]
    \caption{FPT algorithm for optimal sequence}
    \label{alg:fpt-general}
    \textbf{Input:} an instance $\calI=\instfull$
    \begin{algorithmic}[1]
        \STATE let $\mathcal{M}$ be an empty set
        \FORALL{permutation $\pi$}
            \STATE compute $\pi$-optimal sequence $M$ using \Cref{lem:pi-matching}
            \STATE add $M$ to $\mathcal{M}$
        \ENDFOR
        \STATE construct integer linear program (\ref{eqn:fpt-general-ilp}) based on $\mathcal{M}$
        \STATE let $X$ be the solution to (\ref{eqn:fpt-general-ilp})
        \STATE construct a sequence $S$ with $X_{M}$ copies of $M\in\mathcal{M}$
        \STATE \textbf{return} $S$
    \end{algorithmic}
\end{algorithm}


\fptgeneral*

\begin{proof}
The correctness of \Cref{alg:fpt-general} follows immediately from our discussion. We will now show that it is FPT with respect to $n$. Let $\mathcal{M}$ be the set of $\pi$-optimal matching for all permutations $\pi$. By \Cref{lem:pi-matching}, the loop in \Cref{alg:fpt-general} computes $\mathcal{M}$ in time $\mathrm{poly}(n,m)\cdot n!$. Furthermore, since the number of variables for (\ref{eqn:fpt-general-ilp}) is at most $n!+1$, we obtain the FPT result using Lenstra’s theorem \cite{lenstra1983fptilp}.
\end{proof}

\section{Omitted Proofs in \Cref{sec:anytime-optimality}}

\anytimetwoagents*

\begin{proof}
    We begin by proving that such a sequence always exists, and then demonstrate how to construct it in polynomial time.
    
    \paragraph{Proof of existence.} Consider the case of $m=2$. We will prove by induction in $T$. The statement is obviously true for $T=1$. Suppose the statement is true for $T-1$. Let $\bbS_{1:T-1}=\{S\in\bbS^{T}\mid b^{t}(S)=\OPT(t),\,\forall t<T\}$ be the set of sequences that are anytime optimal up to round $T-1$ and $\bbS_{T}=\{S\in\bbS^{T}\mid b^{T}(S)=\OPT(T)\}$ be the set of sequences that are optimal at round $T$. By the inductive hypothesis, $\bbS_{1:T-1}$ is nonempty. We want to show that $\bbS_{1:T-1}\cap\bbS_{T}\neq\emptyset$.
    
    Suppose, for sake of contradiction, that $\bbS_{1:T-1}\cap\bbS_{T}=\emptyset$. For each pair of sequences $S_{1}=(M_{1}^{1},\dots,M_{1}^{T})\in\bbS_{1:T-1}$ and $S_{2}=(M_{2}^{1},\dots,M_{2}^{T})\in\bbS_{T}$, there is some round $s=\min\{t\in[T]\,|\,M_{1}^{t}\neq M_{2}^{t}\}$ such that the sequences first deviate. Choose $S_{1}$ and $S_{2}$ such that the first deviated round $s$ is maximized.
    
    We claim that for all rounds $t\in\{s,\dots,T\}$, the matching $M_{2}^{t}=M_{2}^{s}$. If there is a round $t$ in which $M_{2}^{t}\neq M_{2}^{s}$, then we can swap these two matchings to obtain a new sequence of matchings $S$. Since $S$ is a rearrangement of $S_{2}$, we know $S\in\bbS_{T}$. Furthermore, since there are only two types of matching for $n=m=2$, the $s$-th matching for $S$ is $M_{1}^{s}$. This implies that the first deviated round between $S_{1}\in\bbS_{1:T-1}$ and $S\in\bbS_{T}$ is greater than $s$, which contradicts the assumption that $s$ is the maximum first deviated round.
    
    Consider the sequence of matchings
    \begin{equation*}
        S_{0}=(M_{2}^{1},\dots,M_{2}^{s-1},M_{1}^{s},M_{2}^{s+1},\dots,M_{2}^{T})
    \end{equation*}
    that is constructed by exchanging the matching of $S_{2}$ at round $s$ to $M_{1}^{s}$. Without loss of generality, we assume that agent $1$ is the bottleneck agent for $S_{2}$ at round $s$.

    If $\min\{u_{1}(g_{1}),u_{1}(g_{2})\}>\max\{u_{2}(g_{1}),u_{2}(g_{2})\}$, then by choosing the good that maximizes the value for agent $2$ for all rounds, we will obtain an anytime optimal sequence of matchings up till round $T$. As such, we only have to consider $\min\{u_{1}(g_{1}),u_{1}(g_{2})\}\leq\max\{u_{2}(g_{1}),u_{2}(g_{2})\}$.
    
    Recall that $S_{0}$ is optimal at round $s$. Since agent $1$ is the bottleneck agent for $S_{2}$ at round $s$, we have $v_{1}^{s}(S_{2})\leq v_{1}^{s}(S_{0})$, which implies that $\min\{u_{1}(g_{1}),u_{1}(g_{2})\}=u_{1}(M_{2}^{s}(1))$. This also implies that $\max\{u_{2}(g_{1}),u_{2}(g_{2})\}=u_{2}(M_{2}^{s}(2))$; otherwise, $M_{4}^{s}$ will weakly Pareto dominates $M_{2}^{s}$, and choosing $M_{4}^{s}$ at every round produces an anytime optimal sequence of matchings up till round $T$. Thus, $v_{1}(M_{2}^{s}(1))\leq v_{2}(M_{2}^{s}(2))$.
    
    To reach a contradiction, we want to show that both $v_{1}^{T}(S_{0})$ and $v_{2}^{T}(S_{0})$ is at least $v_{1}^{T}(S_{2})$, since we will have $b^{T}(S_{0})\geq b^{T}(S_{2})=\OPT(T)$, which implies that $S_{0}\in\bbS_{T}$. Since the first round of deviation between $S_{1}\in\bbS_{1:T-1}$ and $S_{0}\in\bbS_{T}$ is greater than $s$, this leads to a contradiction.
    
    It is straightforward to show that $v_{1}^{T}(S_{0})\geq v_{1}^{T}(S_{2})$:
    \begin{equation*}
        v_{1}^{T}(S_{0})-v_{1}^{T}(S_{2})
        =v_{1}^{s}(S_{0})-v_{1}^{s}(S_{2})
        \geq 0.
    \end{equation*}
    Let us now show that $v_{2}^{T}(S_{0})\geq v_{1}^{T}(S_{2})$. Since $S_{0}$ is optimal at round $s$ and agent $1$ is the bottleneck agent for $S_{2}$ at round $s$, we have $v_{2}^{s}(S_{0})\geq v_{1}^{s}(S_{2})$. Further recall that $u_{1}(M_{2}^{s}(1))\leq u_{2}(M_{2}^{s}(2))$ and $M_{2}^{t}=M_{2}^{s}$ for all rounds $t\in\{s,\dots,T\}$. As such, we have
    \begin{align*}
        &\ v_{2}^{T}(S_{0})-v_{1}^{T}(S_{2})\\
        =&\ v_{2}^{s}(S_{0})+(T-s)\cdot u_{2}(M_{2}^{s}(2))-v_{1}^{s}(S_{2})-(T-s)\cdot u_{1}(M_{2}^{s}(1))\\
        \geq&\ (T-s)\cdot u_{2}(M_{2}^{s}(2))-(T-s)\cdot u_{1}(M_{2}^{s}(1))\\
        =&\ (T-s)\cdot(u_{2}(M_{2}^{s}(2))-u_{1}(M_{2}^{s}(1)))\\
        \geq&\ 0.
    \end{align*}
    Since both $v_{1}^{T}(S_{0})$ and $v_{2}^{T}(S_{0})$ is at least $v_{1}^{T}(S_{2})$, we reached a contradiction.
    
    To extend the proof to general $m$, if each agent's most valued good differs, then it is optimal to match each agent to their most valued good in every round. As such, we only need to consider the case where both agents have the same most valued good.
    
    Suppose that the most valued good of both agents is $g_{j_{0}}$, and let the next most valued good of agent $i$ be $g_{j_{i}}$. Observe that every possible matchings are weakly dominated by either $M_{1}=(g_{j_{1}},g_{j_{0}})$ or $M_{2}=(g_{j_{0}},g_{j_{2}})$. As such, we only need to consider sequences that consist of $M_{1}$ and $M_{2}$. This is equivalent to the case of $m=2$ where the valuation $u'_{i}(g'_{1})=u_{i}(g_{j_{0}})$ and $u'_{i}(g'_{2})=u_{i}(g_{j_{i}})$ for $i\in\{1,2\}$. The existence of an anytime optimal sequence follows from immediately from the proof for $n=2$.
    
    \paragraph{Proof of efficient constructibility.} We first consider the trivial cases. If the agents have different most preferred goods, then we can just assign each agent to their most preferred good for all rounds. Otherwise, let $g_{0}$ be their common most preferred good and $g_{i}$ be the second most preferred good for agent $i\in\{1,2\}$. If there exist an agent $i\in\{1,2\}$ with $u_{i}(g_{0})=u_{i}(g_{i})$, then we can match $g_{i}$ to agent $i$ and $g_{0}$ to the other agent for all rounds. Furthermore, if $u_{1}(g_{0})\leq u_{2}(g_{2})$, then we can just choose the matching $(g_{0},g_{2})$ for all rounds. Similarly, if $u_{2}(g_{0})\leq u_{1}(g_{1})$, then we can just choose the matching $(g_{1},g_{0})$ for all rounds.
    
    Suppose our instance is not one of the trivial cases. Let $M_{1}=(g_{0},g_{2})$ and $M_{2}=(g_{1},g_{0})$. Note that all matching is weakly Pareto dominated by either $M_{1}$ or $M_{2}$. As such, there must exist an anytime optimal sequence that contains only $M_{1}$ and $M_{2}$. We now construct the sequence $S$ greedily and iteratively. Consider the following loop invariant that must be satisfied before the start of iteration $t\in[T]$:
    \begin{quote}
        The sequence $S\in\bbS^{t-1}$ is anytime optimal up till round $t-1$ and there exist an extension of the sequence such that it is anytime optimal up till round $t'\geq t$.
    \end{quote}
    This is satisfied before the start of iteration $t=1$ because $S$ is empty (hence vacuously anytime optimal) and there exist an extension that is anytime optimal.
    
    Suppose the loop invariant is satisfied before the start of iteration $t\in[T]$. We want to extend the sequence with a matching such that the loop invariant is satisfied before the start of iteration $t+1$. By the loop invariant, we know that there exist an extension of the sequence (using only $M_{1}$ and $M_{2}$) that is anytime optimal up till round $t'\geq t$. Let $S_{1}=S\cup M_{1}$ and $S_{2}=S\cup M_{2}$. If $b^{t}(S_{1})>b^{t}(S_{2})$, then extending $S$ to $S{1}$ ensures the loop invariant holds for $t+1$. This holds similar for the case of $b^{t}(S_{1})<b^{t}(S_{2})$.
    
    Suppose that $b^{t}(S_{1})=b^{t}(S_{2})$. Since $u_{i}(g_{0})>u_{1}(g_{i})$ for both $i\in\{1,2\}$, we have $v_{1}^{t}(S_{1})>v_{1}^{t}(S_{2})$ and $v_{2}^{t}(S_{1})<v_{2}^{t}(S_{2})$. 
    Note that agent $1$ (resp. agent $2$) is the unique bottleneck agent for $S_{2}$ (resp. $S_{1}$). To see this, suppose agent $2$ is a bottleneck agent for $S_{2}$, that is, $b^{t}(S_{2})=v_{2}^{t}(S_{2})\leq v_{1}^{t}(S_{2})$. Then, we have $v_{2}^{t}(S_{1})<v_{2}^{t}(S_{2})\leq v_{1}^{t}(S_{2})<v_{1}^{t}(S_{1})$, which implies that agent $2$ is the bottleneck agent for $S_{1}$, that is, $b^{t}(S_{1})=v_{2}^{t}(S_{1})$. This leads to a contradiction because we have 
    \begin{equation*}
        b^{t}(S_{1})=v_{2}^{t}(S_{1})<v_{2}^{t}(S_{2})=b^{t}(S_{2}).
    \end{equation*}
    A similar argument can be used to prove the respective case. These results imply that $v_{2}^{t}(S_{1})=b^{t}(S_{1})=b^{t}(S_{2})=v_{1}^{t}(S_{2})$.

    We will now show that it does not matter if we match $g_{0}$ to any agent in round $t$ because we will have to match $g_{0}$ to the other agent in round $t+1$. Suppose, for sake of contradiction, and without loss of generality, that it is optimal to match $g_{1}$ to agent $1$ for round $t$ and $t+1$. Then, we have
    \begin{align*}
        &\ \smash{\min\{v_{1}^{t-1}(S)+2u_{1}(g_{1}), v_{2}^{t-1}(S)+2u_{2}(g_{0})\}}\\
        =&\ \smash{\min\{v_{1}^{t}(S_{2})+u_{1}(g_{1}), v_{2}^{t}(S_{2})+u_{2}(g_{0})\}}\\
        \leq&\ v_{1}^{t}(S_{2})+u_{1}(g_{1})\\
        =&\ v_{2}^{t}(S_{1})+u_{1}(g_{1})\\
        \leq &\ v_{2}^{t}(S_{1})+\max\{u_{1}(g_{1}),u_{2}(g_{2})\}\\
        \leq&\ v_{2}^{t}(S_{1})+\min\{u_{1}(g_{0}),u_{2}(g_{0})\}\\
        =&\ \smash{\min\{v_{1}^{t-1}(S)+u_{1}(g_{0})+u_{1}(g_{1}), v_{2}^{t-1}(S)+u_{2}(g_{0})+u_{2}(g_{2})\}},
    \end{align*}
    which contradicts optimality at round $t+1$. A similar argument can be used to show that it is not optimal to match $g_{2}$ to agent $2$ for round $t$ and $t+1$. Hence, the only extension left is to either match $g_{0}$ to any agent in round $t$ and to the other agent in round $t+1$. 
\end{proof}

\anytimedontexist*

\begin{proof}
    Consider the following instance with $m=n=3$. For each $i\in N$ and $g_{j}\in G$, let $u_{i}(g_{j})=U_{ij}$, where
    \begin{equation*}
        U=
        \begin{bmatrix}
            5 & 2 & 1 \\
            3 & 3 & 2 \\
            2 & 5 & 1
        \end{bmatrix}.
    \end{equation*}
    Note that $\OPT(1)=2$ and $\OPT(2)=6$. Furthermore, the only way to achieve $\OPT(2)$ is by choosing $M_{1}=(1,2,3)$ and $M_{2}=(3,1,2)$ in any order. As such, the bottleneck value at $t=1$ is $1$, which is not anytime optimal.
\end{proof}

\anytimehardness*

\begin{proof}
    To prove that our problem is \coNP-hard, we will show that the complement of our problem is \NP-hard by reducing from the \textsc{3-partition} problem. In the \textsc{3-partition} problem, we are given a multiset $R=\{a_{1},\dots,a_{3d}\}$ and we need to decide if there exists a partition of $R$ into $d$ triplets such that the sum of all triplets equals to
    \begin{equation*}
        \gamma=\frac{1}{d}\sum_{k=1}^{3d}a_{k}.
    \end{equation*}

    Given a \textsc{3-partition} instance $R$, let $\epsilon<\min\{\gamma/8-1/4,1/6\}$. We will reduce $R$ to an instance $\calI=\instfull$ with $n=m=5d+3$, $T=3$, and $u_{i}(g_{j})=U_{ij}$, where
    \begin{figure}[h]
        \centering
        \begin{tikzpicture}[x=0.7cm,y=-0.7cm]
        \draw[anchor=east] (-0.2,5.5) node {\LARGE $U=$};
        \draw[anchor=west] (11.1,1) node {$d$};
        \draw[anchor=west] (11.1,5) node {$3d$};
        \draw[anchor=west] (11.1,9) node {$d$};
        \draw[anchor=west] (11.1,10.5) node {$3$};
        \draw[anchor=north] (3,11.1) node {$3d$};
        \draw[anchor=north] (7,11.1) node {$d$};
        \draw[anchor=north] (9,11.1) node {$d$};
        \draw[anchor=north] (10.5,11.1) node {$3$};
        \draw (0,0) rectangle (11,11);
        \draw (0,2) -- (11,2);
        \draw (0,8) -- (11,8);
        \draw (0,10) -- (11,10);
        \draw (0,10) -- (11,10);
        \draw (6,0) -- (6,11);
        \draw (8,0) -- (8,11);
        \draw (10,0) -- (10,11);
        \draw (0.5,0.5) -- (5.5,0.5);
        \draw (0.5,1.5) -- (5.5,1.5);
        \draw[dotted,thick] (3,0.5) -- (3,1.5);
        \filldraw[white] (2.7,0.2) rectangle (3.3,0.8);
        \filldraw[white] (2.7,1.2) rectangle (3.3,1.8);
        \draw (3,0.5) node {$a$};
        \draw (3,1.5) node {$a$};
        \draw (7,1) node {\LARGE$2\epsilon$};
        \draw (9,1) node {\LARGE$\frac{\gamma-1}{3}$};
        \draw (10.5,1) node {\LARGE$0$};
        \draw[dotted,thick] (0.5,2.5) -- (5.5,7.5);
        \filldraw[white] (0.2,2.2) rectangle (0.8,2.8);
        \filldraw[white] (5.2,7.2) rectangle (5.8,7.8);
        \draw (0.5,2.5) node {$\nicefrac{\gamma}{2}$};
        \draw (5.5,7.5) node {$\nicefrac{\gamma}{2}$};
        \draw (1.5,6.5) node {\LARGE$2\epsilon$};
        \draw (4.5,3.5) node {\LARGE$2\epsilon$};
        \draw (7,5) node {\LARGE$0$};
        \draw (9,5) node {\LARGE$0$};
        \draw (10.5,5) node {\LARGE$2\epsilon$};
        \draw (3,9) node {\LARGE$0$};
        \draw (7,9) node {\LARGE$\frac{\gamma-1}{3}$};
        \draw (9,9) node {\LARGE$\frac{\gamma}{3}$};
        \draw (10.5,9) node {\LARGE$0$};
        \draw (3,10.5) node {\LARGE$2\epsilon$};
        \draw (7,10.5) node {\LARGE$0$};
        \draw (9,10.5) node {\LARGE$0$};
        \draw (10.5,10.5) node {\LARGE$Z$};
        \draw[anchor=south west] (12,11) node {$Z=\begin{bmatrix}
        \epsilon & \epsilon & \gamma-2\epsilon\\
        \epsilon & \epsilon & \gamma-2\epsilon\\
        \epsilon & \epsilon & \gamma-2\epsilon
        \end{bmatrix}$};
        \end{tikzpicture}
    \end{figure}
    
    Observe that $\OPT(1)\leq 2\epsilon$ and $\OPT(2)\leq 4\epsilon$ since at least one agents in $\{5d+1,5d+2,5d+3\}$ cannot be matched to good $g_{5d+3}$ in the first two rounds. We also have $\OPT(1)\geq 2\epsilon$ and $\OPT(2)\geq 4\epsilon$. This can be achieved by considering the sequence that, for all rounds, matches agent $i\in\{1,\dots,d\}$ to $g_{3d+i}$, agent $d+1,d+2,d+3$ to $g_{5d+1},g_{5d+2},g_{5d+3}$ respectively, agent $i\in\{d+4,\dots,4d\}$ to $g_{i-d}$, agent $i\in\{4d+1,\dots,5d\}$ to $g_{i}$, and agent $5d+1,5d+2,5d+3$ to $g_{1},g_{2},g_{3}$ respectively. Thus, we have $\OPT(1)=2\epsilon$ and $\OPT(2)=4\epsilon$.
    
    Suppose $R$ is a \texttt{YES} instance and $R_{1},\dots,R_{d}$ is a solution to the instance. We claim that $\OPT(3)\geq\gamma$. To see this, consider the allocation $A$ in which
    \begin{equation*}
        A_{ij}=
        \begin{cases}
            1, & \text{if $i\in\{1,\dots,d\}$ and $a_{j}\in R_{i}$,}\\
            2, & \text{if $i\in\{d+1,\dots,4d\}$ and $j=i-d$,}\\
            1, & \text{if $i\in\{d+1,\dots,4d\}$ and $j=3d+\lceil(i-d)/3\rceil$,}\\
            3, & \text{if $i\in\{4d+1,\dots,5d\}$ and $j=i$,}\\
            1, & \text{if $i\in\{5d+1,5d+2,5d+3\}$ and $j\in\{5d+1,5d+2,5d+3\}$}\\
            0, & \text{otherwise.}
        \end{cases}
    \end{equation*}
    By \Cref{lem:alloc-to-seq}, we can convert $A$ into a sequence $S$ of three matchings where each $i\in N$ satisfy $v_{i}^{3}(S)\geq\gamma$.

    For sake of contradiction, suppose $\calI$ is a \texttt{YES} instance and let $S$ be an anytime optimal sequence. Since $S$ is anytime optimal, it is optimal at $t=3$, i.e., $b^{3}(S)=\OPT(3)\geq\gamma$. As such, agent $i\in\{4d+1,\dots,5d\}$ must be matched to $g_{i}$ for all three rounds. Furthermore, agent $i\in\{d+1,\dots,4d\}$ has to be matched to $g_{i-d}$ at least twice; otherwise, $v_{i}^{3}(S)$ would be smaller than $6\epsilon$ or $4\epsilon+\gamma/2$, which are strictly less than $\gamma$ by our choice of $\epsilon$. As such, each good $g_{1},\dots,g_{3d}$ can only be matched to agent $1,\dots,d$ at most once over all rounds. We claim that these goods must be matched to these agents exactly once over all rounds. Suppose, for sake of contradiction, there is some good $g$ that is not matched to any agents in $\{1,\dots,d\}$. Then, we have
    \begin{equation*}
        \sum_{i=1}^{d}v_{i}^{3}(S)
        \leq\Biggl(\sum_{j=1}^{3d}u_{1}(g_{j})\Biggr)-u_{1}(g)+2\epsilon
        <\gamma d-1+\frac{1}{3}
        <\gamma d,
    \end{equation*}
    since $u_{1}(g)$ is a positive integer and $\epsilon<1/6$. As such, at least one agent $i\in\{1,\dots,d\}$ has $v_{i}^{3}(S)<\gamma$, which contradicts to the optimality of $S$. Since $g_{1},\dots,g_{3d}$ and $g_{4d+1},\dots,g_{5d}$ must be allocated three times to agents $1,\dots,5d$, these cannot be matched to any agent $5d+1,5d+2,5d+3$. Thus, at least one agent $i\in\{5d+1,5d+2,5d+3\}$ has $v_{i}^{1}(S)\leq\epsilon<\OPT(1)$, which contradicts our assumption that $S$ is an anytime optimal sequence. Thus, $\calI$ is a \texttt{NO} instance.
    
    Suppose $R$ is a \texttt{NO} instance. We claim that $\OPT(3)\leq\gamma-1$. Suppose, for sake of contradiction, that $\OPT(3)>\gamma-1$. Then, agent $i\in\{4d+1,\dots,5d\}$ must be matched to $g_{i}$ for all three rounds. Furthermore, agent $i\in\{d+1,\dots,4d\}$ has to be matched to $g_{i-d}$ at least twice; otherwise, $v_{i}^{3}(S)$ would be smaller than $6\epsilon$ or $4\epsilon+\gamma/2$, which are strictly less than $\gamma-1$ by our choice of $\epsilon$. Observe that if there is some $g\in\{g_{1},\dots,g_{3d}\}$ that is not matched to any agent in $\{1,\dots,d\}$ for all rounds, then there must exist an agent $i\in\{1,\dots,d\}$ that is matched to $g'\notin\{g_{1},\dots,g_{3d}\}$, and we can strictly improve its valuation by swapping out $g'$ with $g$. Thus, we only need to consider when every goods in $\{1,\dots,3d\}$ are matched to exactly one agent once. However, since $R$ is a \texttt{NO} instance, we know that for all partition of $\{g_{1},\dots,g_{3d}\}$ into $d$ triplets $G_{1},\dots,G_{d}$, there must exist some subset $G_{i}$ in which the sum of its valuation is at most $\gamma-1$. Thus, there exist an agent with valuation at most $\gamma-1$.

    Now, we construct an anytime optimal sequence $S$ with $\OPT(1)=2\epsilon$, $\OPT(2)=4\epsilon$, and $\OPT(3)=\gamma-1$. Consider the allocation $A$ in which
    \begin{equation*}
        A_{ij}=
        \begin{cases}
            3, & \text{if $i\in\{1,\dots,d\}$ and $j=4d+i$,}\\
            2, & \text{if $i\in\{d+1,\dots,d+6\}$ and $j=i-d$,}\\
            1, & \text{if $i\in\{d+1,\dots,d+6\}$ and $j=5d+\lceil(i-d)/3\rceil$,}\\
            3, & \text{if $i\in\{d+7,\dots,4d\}$ and $j=i-d$,}\\
            3, & \text{if $i\in\{4d+1,\dots,5d\}$ and $j=i-d$,}\\
            1, & \text{if $i=5d+1$ and $j\in\{1,2,5d+3\}$,}\\
            1, & \text{if $i=5d+2$ and $j\in\{3,4,5d+3\}$,}\\
            1, & \text{if $i=5d+3$ and $j\in\{5,6,5d+3\}$,}\\
            0, & \text{otherwise.}
        \end{cases}
    \end{equation*}
    It is straightforward to verify that $v_{i}(A)\geq\gamma-1$. By \Cref{lem:alloc-to-seq}, we can convert $A$ into a sequence $S$ of three matchings where each $i\in N$ satisfy $v_{i}^{3}(S)\geq\gamma-1$. Furthermore, since all the goods that are matched to each agent has value at least $2\epsilon$, we have $v_{i}^{1}(S)\geq 2\epsilon$ and $v_{i}^{2}(S)\geq 4\epsilon$. This implies that $S$ is an anytime optimal sequence. Thus, $\calI$ is a \texttt{YES} instance.
\end{proof}

\anytimeapprox*

\begin{proof}
    Observe that since (\ref{eqn:approx-general-lp}) has $m^{2}+5m$ inequality constraints and $m^{2}+1$ variables, $m^{2}+1$ constraints will be tight at a vertex solution, meaning there are at most $5m$ non-zero entries in $B$, which implies that $d\leq 5m$.
    
    Let $n_{kt}$ be the value of $n_{k}$ after round $t$. After each round $t\in[T]$, we claim that our choice of matching $M^{t}$ maintains the invariant $\alpha_{k}t-n_{kt}\leq 1$ for all $k\in[d]$. If the invariant is kept, we have
    \begin{align*}
        v_{i}^{t}(S)
        &=\sum_{k=1}^{d}u_{i}(M_{k}(i))\,n_{kt}\\
        &\geq\sum_{k=1}^{d}u_{i}(M_{k}(i))(\alpha_{k}t-1)\\
        &\geq\sum_{k=1}^{d}u_{i}(M_{k}(i))\,\alpha_{k}t-d\cdot\max_{g\in G}u_{i}(g)\\
        &=\sum_{\mathclap{g_{j}\in G}}u_{i}(g_{j})\,tB_{ij}-d\cdot\max_{g\in G}u_{i}(g)\\
        &\geq tb-d\cdot\max_{g\in G}u_{i}(g)\\
        &\geq\OPT(t)-5m\cdot\max_{g\in G}u_{i}(g),
    \end{align*}
    for all agents $i\in N$, where the fourth line is true because
    \begin{equation*}
        \sum_{k=1}^{d}u_{i}(M_{k}(i))\,\alpha_{k}
        =\sum_{k=1}^{d}\sum_{g_{j}\in G}u_{i}(g_{j})\,\alpha_{k}(M_{k})_{ij}
        =\sum_{\mathclap{g_{j}\in G}}u_{i}(g_{j})\,\Biggl[\sum_{k=1}^{d}\alpha_{k}M_{k}\Biggr]_{ij}
        =\sum_{\mathclap{g_{j}\in G}}u_{i}(g_{j})\,B_{ij}.
    \end{equation*}
    
    We are left to show that the invariant is kept after each round. Let $g_{kt}=(n_{kt}+1)/\alpha_{k}$ for each $k\in[d]$. Suppose, for sake of contradiction, that $\alpha_{k}t-n_{kt}>1$ for some $t\in[T]$ and some $k\in[d]$. By rearranging the terms, we have $t>(n_{kt}+1)/\alpha_{k}=g_{kt}$. 
    
    For all other matching $M_{l}\neq M_{k}$, if $M_{l}$ is not chosen for any round $s\leq t$, then we have $n_{lt}=0<\alpha_{l}t$. Otherwise, suppose that $M_{l}$ is chosen for the $n_{lt}$ time on round $s\leq t$, that is, $n_{lt}=n_{ls}=n_{l,s-1}+1$. Since $M_{l}$ is chosen over $M_{k}$, we must have $g_{l,s-1}\leq g_{k,s-1}$. Then, we have
    \begin{equation*}
        \frac{n_{lt}}{\alpha_{l}}
        =\frac{n_{l,s-1}+1}{\alpha_{l}}
        =g_{l,s-1}
        \leq g_{k,s-1}
        =\frac{n_{k,s-1}+1}{\alpha_{k}}
        \leq\frac{n_{kt}+1}{\alpha_{k}}
        <t
    \end{equation*}
    Thus, we have that $n_{lt}<\alpha_{l}t$ for all $l\in[d]$. Summing across all $n_{lt}$, we have
    \begin{equation*}
        \sum_{l\in[d]}n_{lt}
        <\sum_{l\in[d]}\alpha_{l}t
        =t\sum_{l\in[d]}\alpha_{l}
        =t 
    \end{equation*}
    where the last equality is due to the fact that the sum of the weights is $1$. However, this is a contradiction because we select a matrix at every timestep, and thus $\sum_{l\in[d]}n_{lt}$ has to be $t$.
\end{proof}

\section{Omitted Proofs in \Cref{sec:special-cases}}
We first define circulation with demand.

\begin{definition}[Circulation with demand]
    Let $G=(V,E)$ be a directed graph where each vertex $v\in V$ has a demand $d(v)$. A \emph{circulation with demand} is a function $f:E\to\bbR$ that assigns non-negative value to each edge $(u,v)\in E$ such that
    \begin{equation*}
        \sum_{\mathclap{(u,v)\in E}}f(u,v)-\sum_{\mathclap{(v,w)\in E}}f(v,w)=d(v),\quad\forall v\in V.
    \end{equation*}
\end{definition}

\matchingpolem*

\begin{proof}
    We prove by construction. Take any matching $M$ and let $M'$ be any maximum matching with goods in $G'$. If all agents in $M$ is matched to some good in $G'$, then we are done. Let $N_{0}$ be the set of agents in $M$ that are not matched to goods in $M'$ and let $G_{0}$ be the set of goods in $M'$ that are not matched to agents in $M$.
    Note that each agent $i\in N_{0}$ must be in $M'$; otherwise, we can add $(i,M(i))$ to $M'$ to increase its cardinality, contradicting to the maximality of $M'$.

    We define an $\emph{augmenting graph}$ as follows. The goods $G$ are the vertices of the augmenting graph and $(g,g')\in E$ if there exists an agent that receives $g$ in $M$ and $g'$ in $M'$. Note that each vertex have at most one incoming edge and at most one outgoing edge, and that for each $i\in N_{0}$, we have $(M(i),M'(i))\in E$. Furthermore, this edge cannot be part of a cycle in the augmenting graph because $M(i)$ is not in $G'$, thus it follows a non-cyclic path $P=(M(i)=g^{1},\dots,g^{k})$. We denote $i^{q}$ to be the agent such that $M(i^{q})=g^{q}$ and $M'(i^{q})=g^{q+1}$.
    
    We first process the agents in $N_{0}$ where no agent in $M$ is matched to $g^{k}$ on its path $P$. We modify $M$ by removing $(i^{q},g^{q})$ and adding $(i^{q},g^{q+1})$ for each $q\in[k-1]$. This ensures that each agent in $M$ who is on $P$ is now matched to a good in $G'$, and that $i^{1}$ and $g^{k}$ can be removed from $N_{0}$ and $G_{0}$ respectively.

    We claim that after the processing step, we are done. Specifically, there is no agent in $N_{0}$ where an agent in $M$ is matched to $g^{k}$ on its path $P$. Suppose such an agent $i\in N_{0}$ exist. Since $|N_{0}|=|G_{0}|$, there must be a good $g'\in G_{0}$ and an agent $i'$ in which $M'(i')=g'$. Since no agent is matched to $g'$ in $M$ (by definition of $G_{0}$) and no good is matched to $i'$ in $M$ (because this reduces to the previously processed case, which we assume are all processed), we can add $(i',g')$ to $M$ to increase its cardinality, which contradicts to its maximality.
\end{proof}

\binaryoptimal*

\begin{proof}
    We solve this with binary search on the optimal bottleneck value $b$. Let $b$ be the guess of the optimal bottleneck value. 
    
    Let $M$ be a maximum matching and let $G'$ be the goods allocated in $M$. By Lemma \ref{lem:matching_po}, it is sufficient to consider only matchings formed with these goods at every timestep. Then, if $nb > |G'|T$, we automatically reject as there can be no outcome where all agents are satisfied at $b$ timesteps.

    Otherwise, we will consider the following circulation with demand problem. 

    For each agent $i \in N$, we create a vertex $u_i$ with demand $b$ and for each $g_j \in G'$, we create a vertex $v_j$ with demand $-T$. Then, We add an edge $(u_i,v_j)$ if $u_i(g_j) = 1$. Finally, we create a vertex $i_0$ with demand $|G'|T - nb$ and add an edge between $i_0$ and all nodes $v_j$ for $g_j \in G'$.

    We claim that there is a feasible circulation if and only if there is a sequence of matching with bottleneck value at least $b$.

   $(\Rightarrow)$ Suppose there is a feasible circulation $f$. Note that since all the demands are integer-valued, the resulting circulation is also integer-valued. As such, we can consider the allocation $A$ where $A_{ij} = f(u_i,v_j)$. We note that by our demand constraint, the sum of all rows are $b$ and the sum of all columns is at most $T$. Then, we can add $n - |G'|$ empty columns (that represents `fake' zero-valued goods) and by \Cref{lem:alloc-to-seq}, there exist a sequence of matching $S$ such that $v_{i}^{T}(S)\geq b$, which implies that $b^{T}(S)\geq b$. 

    $(\Leftarrow)$ Suppose there is a sequence of matching $S$ with $b^{T}(S)\geq b$. By Lemma~\ref{lem:matching_po}, we can assume that $S$ is chosen such that for every timestep, $u_{i}(M_{t}(i))=1$ if and only if $M_{t}(i)\in G'$. Then,  let $A$ be the allocation that correspond to that matching. We note that every row sums up to at most $T$. Then the circulation $f(u_{i},v_{j})= A_{ij}$ for $i \in [n], j \in [|G'|]$ and $f(u_{0},v_{j})=T- \sum_{i \in [n]}A_{ij}$ for $j \in [|G'|]$ is feasible.
\end{proof}

\optimaltwogoods*

\begin{proof}
    We solve this with binary search on the optimal bottleneck value $b$. Let $b$ be the guess of the optimal bottleneck value. If $T\cdot\max_{g\in G}u_{i}(g)\leq b$ for some agent $i\in N$, we can immediately reject $b$. Otherwise, we will consider the following circulation with demand problem:
    
    For each good $g\in G$, create a vertex $g$ with demand $T$. For each agent $i\in N$, if agent $i$ is indifferent between $G_{0}$ and $G_{1}$, we create a vertex $i_{0}$ with demand $-T$ and add an edge $(i_{0},g)$ to all $g\in G$. Otherwise, agent $i$ strictly prefers $G_{r}$ for some $r\in\{0,1\}$. Let $g_{r}\in G_{r}$, $g_{1-r}\in G_{1-r}$, and
    \begin{equation*}
        k_{i}=\biggl\lceil\frac{b-T\cdot u_{i}(g_{1-r})}{u_{i}(g_{r})-u_{i}(g_{1-r})}\biggr\rceil
    \end{equation*}
    be the minimum number of rounds that agent $i$ needs to receive goods from $G_{r}$ to achieve at least $b$ in valuation. We create two vertices $i_{r}$ and $i_{1-r}$ with demand $-k_{i}$ and $-(T-k_{i})$ respectively. We then add an edge $(i_{r},g)$ if $g\in G_{r}$ and another edge $(i_{1-r},g)$ for all $g\in G$. Finally, we create a source $s$ with demand $-T(m-n)$ and add an edge $(s,g)$ for all $g\in G$. We claim that there is a feasible circulation if and only if there is a sequence of matching with bottleneck value at least $b$.

    $(\Rightarrow)$ Suppose there is a feasible circulation $f$. Note that since all the demands are integer-valued, the resulting circulation is also integer-valued. As such, we can consider the allocation $A$ where
    \begin{equation*}
        A_{ij}=
        \begin{cases}
            f(i_{0},g_{j}),&\text{if agent $i$ is indifferent between $G_{0}$ and $G_{1}$,}\\
            f(i_{r},g_{j}),&\text{if agent $i$ strictly prefers $G_{r}$ and $g_{j}\in G_{r}$,}\\
            f(i_{0},g_{j})+f(i_{1},g_{j}),&\text{if agent $i$ strictly prefers $G_{r}$ and $g_{j}\in G_{1-r}$,}\\
            0,&\text{otherwise.}
        \end{cases}
    \end{equation*}
    Let $i\in N$. If agent $i$ is indifferent between $G_{0}$ and $G_{1}$, then it trivially satisfy $v_{i}(A)\geq b$. Suppose agent $i$ strictly prefers $G_{r}$. Then, we have
    \begin{align*}
        v_{i}(A)
        &=\sum_{\mathclap{g\in G_{r}}}f(i_{r},g)u_{i}(g)+\sum_{\mathclap{g\in G}}f(i_{1-r},g)u_{i}(g)\\
        &\geq u_{i}(g_{r})\sum_{\mathclap{g\in G_{r}}}f(i_{r},g)+u_{i}(g_{1-r})\sum_{\mathclap{g\in G}}f(i_{1-r},g)\\
        &=k_{i}\cdot u_{i}(g_{r})+(T-k_{i})\cdot u_{i}(g_{1-r})\\
        &\geq b,
    \end{align*}
    where the last inequality holds by our choice of $k_{i}$. Then, by \Cref{lem:alloc-to-seq}, there exist a sequence of matching $S$ such that $v_{i}^{T}(S)\geq b$, which implies that $b^{T}(S)\geq b$.
    
    $(\Leftarrow)$ Suppose there is a sequence of matching $S$ with $b^{T}(S)\geq b$. For each $r\in\{0,1\}$, let $L_{r}$ be a dynamic set that is initialized to contain $T$ of each good $g\in G_{r}$, and define the operation $\texttt{remove}:\{0,1\}\times\bbN\rightarrow\mathcal{P}(G\times\bbN)$ such that $\texttt{remove}(r,c)$ removes $c$ elements from $L_{r}$ and returns the number of times each good from $G_{r}$ is removed.
    
    Let $A$ be an allocation associated with the sequence $S$. We now construct the flow $f$. For $i\in N$, if agent $i$ is indifferent between $G_{0}$ and $G_{1}$, then we set
    \begin{equation*}
        f(i_{0},g_{j})=
        \begin{cases}
            k,&\quad\text{if $g_{j}\in G_{0}$ and $(g_{j},k)\in\texttt{remove}(0,A_{ij})$},\\
            k,&\quad\text{if $g_{j}\in G_{1}$ and $(g_{j},k)\in\texttt{remove}(1,A_{ij})$},\\
            0,&\quad\text{otherwise.}
        \end{cases}
    \end{equation*}
    
    Otherwise, suppose that agent $i$ strictly prefer $G_{r}$ over $G_{1-r}$. Since $v_{i}(A)\geq b$, we know that the number of rounds agent $i$ is allocated goods in $G_{r}$ is at least $k_{i}$. Let
    \begin{equation*}
        A_{ir}=\sum_{\mathclap{g_{j}\in G_{r}}}A_{ij}\quad\text{and}\quad A_{i(1-r)}=\sum_{\mathclap{g_{j}\in G_{1-r}}}A_{ij}.
    \end{equation*}
    We can set
    \begin{equation*}
        f(i_{r'},g_{j})=
        \begin{cases}
            k,&\quad\text{if $r'=r$ and $g_{j}\in G_{r}$ and $(g_{j},k)\in\texttt{remove}(r,k_{i})$},\\
            k,&\quad\text{if $r'=1-r$ and $g_{j}\in G_{r}$ and $(g_{j},k)\in\texttt{remove}(r,A_{ir}-k_{i})$},\\
            k,&\quad\text{if $r'=1-r$ and $g_{j}\in G_{1-r}$ and $(g_{j},k)\in\texttt{remove}(1-r,A_{i(1-r)}-k_{i})$},\\
            0,&\quad\text{otherwise.}
        \end{cases}
    \end{equation*}
    We note that thus far for all vertices corresponding to the agents, the sum of flow outgoing from the node is equal to its demand and for all nodes corresponding to the goods, the flow incoming is at most $T$. Thus, we can set the flow outgoing from node $s$ appropriately. 
\end{proof}


\optimalidenticalhardness*

\begin{proof}
    We will first define the Promise Balanced Partition problem and show it is \NP-hard before proving \Cref{thm:optimal-identical-hardness} is \NP-complete.
    
    \begin{tcolorbox}[title=\textsc{Promise Balanced Partition (PBP)}]
        \textbf{Input}: A list of distinct non-negative integer $E=\{e_{1},\dots,e_{k}\}$. Let $\tau$ be the sum of the elements in $E$. It is guaranteed that for all multisets of size $k/2$, the sum of its elements does not equal $\tau/2$ if it contains duplicates.
        \tcblower
        \textbf{Question}: Is there a partition of $E$ into two equal-sized subsets $E_{1},E_{2}$ such that the sum of the elements in $E_{1}$ equals to the sum of the elements in $E_{2}$?
    \end{tcolorbox}

    Let $\Phi$ be an instance of \textsc{1-in-3-sat}. For each $i\in[n]$, we define two integers $t_{i}$ and $f_{i}$ that encode the assignments of the variable $x_{i}$ to \texttt{TRUE} and \texttt{FALSE}, respectively. In particular, we define
    \begin{equation*}
        t_{i}=(2n+1)^{m+n-i}+\sum_{\mathclap{j:x_{i}\in c_{j}}}(2n+1)^{m-j}
        \quad\text{and}\quad
        f_{i}=(2n+1)^{m+n-i}+\sum_{\mathclap{j:\bar{x}_{i}\in c_{j}}}(2n+1)^{m-j}.
    \end{equation*}
    We additionally define $t_{n+1}=\sum_{j=1}^{m}(2n+1)^{m-j}$ and $f_{n+1}=0$. Let
    \begin{equation*}
        E=\{t_{1},f_{1},\dots,t_{n+1},f_{n+1}\}.
    \end{equation*}
    Summing over all elements in $E$, we have
    \begin{equation*}
        \tau=2\sum_{i=1}^{n}(2n+1)^{m+n-i}+4\sum_{j=1}^{m}(2n+1)^{m-j}
    \end{equation*}
    Note that it is helpful to view the construction in base $2n+1$, as this makes several of the claims in the proof easier to verify. For example, in base $2n+1$, it becomes immediately apparent that all elements in $E$ are distinct.

    Consider any multiset $E'$ of size $n+1$. We now show that if there exists an index $i\in[n+1]$ such that $E'$ contains zero or multiple occurrences of the elements in $\{t_{i},f_{i}\}$, then the sum of $E'$ is not equal to $\tau/2$. Since any multiset with a duplicated element necessarily violates this condition, it follows that no multiset of size $|E|/2$ with duplicates can sum to $\tau/2$, thereby satisfying the promise condition.

    Let $i^{*}$ be the smallest such index, and for all $i<i^{*}$, let $\ell_{i}\in\{t_{i},f_{i}\}$ denote the unique element from the pair that appears in $E'$. If $E'$ contains multiple occurrences of elements from $\{t_{i^{*}},f_{i^{*}}\}$, let $\ell_{i^{*}}^{1}$ and $\ell_{i^{*}}^{2}$ denote two such occurrences. Then,
    \begin{equation*}
        \sum_{\ell\in E'}\ell
        \geq\ell_{1}+\dots+\ell_{i^{*}-1}+\ell_{i^{*}}^{1}+\ell_{i^{*}}^{2}
        >\frac{\tau}{2},
    \end{equation*}
    where the last inequality holds because
    \begin{align*}
        \ell_{1}+\dots+\ell_{i^{*}-1}+\ell_{i^{*}}^{1}+\ell_{i^{*}}^{2}
        &=(11\cdots 11\underline{2}00\cdots 00\,|\,\tilde{3}\tilde{3}\cdots\tilde{3}\tilde{3})_{2n+1}\\
        &>(11\cdots 11\underline{1}33\cdots 33\,|\,33\cdots 33)_{2n+1}\\
        &>(11\cdots 11\underline{1}11\cdots 11\,|\,22\cdots 22)_{2n+1}\\
        &=\tau/2.
    \end{align*}
    In the expressions above, numbers are written in base $2n+1$. The underlined digit corresponds to the position indexed by $m+n-i^{*}$. The bar separates the representation into two parts: the left portion has $n$ digits, and the right portion has $m$ digits. A tilde over a digit indicates that the digit is at most that value.

    If $E'$ contains no occurrence of elements from $\{t_{i^{*}},f_{i^{*}}\}$, let $\bar{E}'=E'\setminus\{\ell_{1},\dots,\ell_{i^{*}-1}\}$. Then for each $\ell\in\bar{E}'$, we have
    \begin{align*}
        \ell
        &=(00\cdots 00\underline{0}\tilde{1}\tilde{1}\tilde{1}\tilde{1}\cdots \tilde{1}\tilde{1}\,|\,\tilde{3}\tilde{3}\cdots\tilde{3}\tilde{3})_{2n+1}\\
        &\leq(00\cdots 00\underline{0}11\tilde{1}\tilde{1}\cdots \tilde{1}\tilde{1}\,|\,\tilde{3}\tilde{3}\cdots\tilde{3}\tilde{3})_{2n+1}\\
        &<(00\cdots 00\underline{0}1200\cdots 00\,|\,00\cdots 00)_{2n+1}\\
        &=(2n+1)^{m+n-i^{*}-1}+2(2n+1)^{m+n-i^{*}-2}.
    \end{align*}
    Furthermore, observe that
    \vspace{4pt} 
    \begin{align*}
        \smash{\sum_{i=1}^{\mathclap{i^{*}-1}}\ell_{i}+(2n+1)^{n+m-i^{*}}}
        &=(11\cdots 11\underline{1}00\cdots 00\,|\,\tilde{3}\tilde{3}\cdots\tilde{3}\tilde{3})_{2n+1}\\
        &<(11\cdots 11\underline{1}11\cdots 11\,|\,55\cdots 55)_{2n+1}\\
        &<\frac{\tau}{2}+3\sum_{j=1}^{m}(2n+1)^{m-j}.
    \end{align*}
    Combining these results, we obtain
    \begin{equation*}
        \sum_{\ell\in S'}\ell
        <\sum_{i=1}^{\mathclap{i^{*}-1}}\ell_{i}+n(2n+1)^{m+n-i^{*}-1}+2n(2n+1)^{m+n-i^{*}-2}
        <\frac{\tau}{2},
    \end{equation*}
    where the final inequality follows immediately by substituting the earlier bound and observing that
    \begin{equation*}
        n(2n+1)^{m+n-i^{*}-1}+2n(2n+1)^{m+n-i^{*}-2}+3\sum_{j=1}^{m}(2n+1)^{m-j}<(2n+1)^{n+m-i^{*}}.
    \end{equation*}
    Thus, our promise condition is satisfied.

    We now show that the constructed \textsc{PBP} instance $E$ is a \texttt{YES} instance if and only if the \textsc{1-in-3-sat} instance $\Phi$ is a \texttt{YES} instance.

    $(\Rightarrow)$ Suppose $E$ is a \texttt{YES} instance, and let $E_{1},E_{2}$ be a valid partition. Without loss of generality, assume that $t_{n+1}\in E_{1}$. Furthermore, by construction, for each $i\in[n+1]$, the set $E_{1}$ must contain exactly one element from the pair $\{t_{i},f_{i}\}$. We define a truth assignment as follows: for each $i\in[n]$, set $x_{i}=\texttt{TRUE}$ if $t_{i}\in E_{1}$ and $x_{i}=\texttt{FALSE}$ if $f_{i}\in E_{1}$. Then,
    \vspace{1pt} 
    \begin{align*}
        \smash{\sum_{\mathclap{\ell\in E_{1}}}\ell-t_{n+1}}
        &=(11\cdots 11\,|\,22\cdots 22)_{2n+1}-(00\cdots 00\,|\,11\cdots 11)_{2n+1}\\
        &=(11\cdots 11\,|\,11\cdots 11)_{2n+1}.
    \end{align*}
    Here, the least significant $m$ digits (after the vertical bar) count how many literals are satisfied in each clause. Since each digit is exactly $1$, this implies that, under the assignment $x_{1},\dots,x_{n}$, each clause of $\Phi$ is satisfied by exactly one literal. Hence, $\Phi$ is a \texttt{YES} instance.

    $(\Leftarrow)$ Suppose $\Phi$ is a \texttt{YES} instance, and let $x_{1},\dots,x_{n}$ be a satisfying assignment such that exactly one literal is satisfied in each clause. For each $i\in[n]$, let $\ell_{i}=t_{i}$ if $x_{i}=\texttt{TRUE}$ and $\ell_{i}=f_{i}$ if $x_{i}=\texttt{FALSE}$. Additionally, let $\ell_{n+1}=t_{n+1}$, and define the set $E=\{\ell_{1},\dots,\ell_{n+1}\}$. Since exactly one of $\{t_{i},f_{i}\}$ is selected for each $i\in[n]$, each of the $n$ most significant digit receives exactly one contribution. Moreover, because the assignment satisfies each clause exactly once, each of the $m$ least significant digit also receives exactly one contribution. Therefore, the sum of the elements $\ell_{1}+\dots+\ell_{n}=(11\cdots 11\,|\,11\cdots 11)_{2n+1}$. Adding $\ell_{n+1}$ to it, we get
    \vspace{1pt} 
    \begin{align*}
        \smash{\sum_{\mathclap{\ell\in E_{1}}}\ell}
        &=(11\cdots 11\,|\,11\cdots 11)_{2n+1}+(00\cdots 00\,|\,11\cdots 11)_{2n+1}\\
        &=(11\cdots 11\,|\,22\cdots 22)_{2n+1}\\
        &=\tau/2.
    \end{align*}
    We note that $|E_{1}|=|E_{2}|=|E|/2$ by construction. Hence, $E$ is a \texttt{YES} instance.

    We now complete the proof by reducing from the \textsc{PBP} problem. Given a set $E=\{e_{1},\dots,e_{k}\}$ from an instance of \textsc{PBP}, we construct an instance $\calI=\instfull$ of \textsc{ERM} with $n=m=k$, $T=n/2$, $\kappa=\tau/2$, and identical valuations defined by $u_{i}(g_{j})=e_{j}$ for all $i\in[n]$ and $j\in[m]$. We claim that there exists a balanced partition $E_{1},E_{2}$ of $E$ if and only if there exists a sequence of matchings in $\calI$ that achieves a bottleneck value of $\kappa$.

    $(\Rightarrow)$ Suppose $E$ is a \texttt{YES} instance, and let $E_{1},E_{2}$ be a balanced partition. Since there are $T=n/2$ rounds, construct an allocation in which the first $n/2$ agents receive, over the course of the $T$ rounds, all the goods in $E_{1}$, one per round, and the remaining $n/2$ agents receive all the goods in $E_{2}$, again one per round. Under this allocation, each agent receives exactly $T$ goods and accumulates a total value of $\tau/2=\kappa$, and each good is matched once per round and appears in exactly $T$ rounds. By \Cref{lem:alloc-to-seq}, there exists a sequence $S\in\bbS^{T}$ such that $v_{i}^{T}(S)\geq\kappa$ for all $i\in N$. Hence, $(\calI,\kappa)$ is a \texttt{YES} instance.
    
    $(\Leftarrow)$ Suppose $(\calI,\kappa)$ is a \texttt{YES} instance, and let $S\in\bbS^{T}$ be a sequence of matchings such that each agent $i\in N$ receives total value $v_{i}^{T}(S)\geq\kappa$. Since each good appears in exactly $T=n/2$ rounds and all valuations are identical, the total value across all agents is exactly $n\kappa$. But since each agent receives at least $\kappa$, and there are $n$ agents, it follows that each agent must receive exactly $\kappa$. Now consider the multiset of goods that agent $1$ receives under $S$. By the promise condition, any multiset of size $n/2=k/2$ whose sum is $\kappa=\tau/2$ must consist of distinct elements. Therefore, the goods assigned to agent $1$ are all distinct. Let $E_{1}\subset E$ be the set of integers corresponding to the goods received by agent $1$, and let $E_{2}=E\setminus E_{1}$. Then, $|E_{1}|=|E_{2}|=k/2$ and the sum of each set is $\tau/2$. Thus, $E_{1}$ and $E_{2}$ form a balanced partition. Hence, $E$ is a \texttt{YES} instance.
\end{proof}

\optimalidenticalfactor*

\begin{proof}
    Let $G_{*}\subseteq G$ be the top $n$ most valuable goods and consider the allocation $A$ that gives $k$ copies of each good in $G_{*}$ to each agent $i\in N$. Then, we have $v_{i}(A)=v_{i'}(A)$ for all agents $i,i'\in N$. Suppose, for sake of contradiction, that $A$ is not optimal. Then, there exist some other allocation $A'$ such that $\min_{i}v_{i}(A')>\min_{i}v_{i}(A)$. This implies that
    \begin{equation*}
        nk\sum_{\mathclap{g\in G_{*}}}u_{i}(g)
        \geq \sum_{i\in N}v_{i}(A')
        \geq n\cdot\min_{i\in N}v_{i}(A')  >n\cdot\min_{i\in N}v_{i}(A)
        =\sum_{i\in N}v_{i}(A)
        =nk\sum_{\mathclap{g\in G_{*}}}u_{i}(g),
    \end{equation*}
    where the first inequality is true because there is no way to achieve strictly greater utilitarian value than by assigning out the top $n$ most valuable goods in every round. Since this leads to a contradiction, we conclude that $A$ is optimal.
\end{proof}

{\renewcommand\footnote[1]{}\approxidentical*}

\begin{proof}
    We first describe the polynomial-time algorithm that will return us a sequence of matchings $S$. At each round $t\in[T]$, sort the agents in increasing order of cumulative valuation up till round $t-1$. Then, in this order, let each agent choose their favorite good and allocate it to them. Repeat this process until all rounds are completed. Note that since we are considering the setting with identical valuations, it suffices to only look at the top $n$-valued goods---no agent will choose any of the other (lower-valued) goods in any round.
    
    Fix any round $t\in[T]$. It is easy to observe that
    \begin{equation}
        \label{eqn:ident-opt}
        \frac{1}{n}\cdot\sum_{i\in N} v_{i}^{t}(S)\geq\OPT(t),\tag{1}
    \end{equation}
    since our algorithm allows agents to select their favorite good in increasing order of cumulative value up till round $t$. Let the bottleneck agent at round $t$ be $i$, that is, $b^{t}(S)=v_{i}^{t}(S)$. Let $N'$ be the set of agents who picks a good before agent $i$ at some point. Then, for each agent $i'\in N'$, let $s_{0}\leq t$ be the last round in which $i'$ picks a good before $i$. By the algorithm, we have
    \begin{equation*}
        v_{i'}^{s_{0}-1}(S)\leq v_{i}^{s_{0}-1}(S)
        \quad\text{and}\quad
        u(M^{s}(i'))\leq u(M^{s}(i))\text{ for all rounds $s\in[s_{0}+1,t]$}.
    \end{equation*}
    By the first inequality, we have
    \begin{align*}
        v_{i'}^{s_{0}}(S)
        &=v_{i'}^{s_{0}-1}(S)+u(M^{s_{0}}(i'))\\
        &\leq v_{i}^{s_{0}-1}(S)+u(M^{s_{0}}(i))-u(M^{s_{0}}(i))+u(M^{s_{0}}(i'))\\
        &=v_{i}^{s_{0}}(S)-u(M^{s_{0}}(i))+u(M^{s_{0}}(i'))\\
        &\leq v_{i}^{s_{0}}(S)+\Delta.
    \end{align*}
    Combining this result with the second inequality, we have
    \begin{equation*}
        v_{i'}^{t}(S)
        =v_{i'}^{s_{0}}(S)+\sum_{\mathclap{s=s_{0}+1}}^{t} u(M^{s}(i'))
        \leq v_{i}^{s_{0}}(S)+\Delta+\sum_{\mathclap{s=s_{0}+1}}^{t} u(M^{s}(i))
        =v_{i}^{t}(S)+\Delta.
    \end{equation*}
    Furthermore, for each $i'\in N\setminus N'$, we have
    \begin{equation*}
        v_{i'}^{t}(S)
        =\sum_{\mathclap{s=1}}^{t} u(M^{s}(i'))
        \leq \sum_{\mathclap{s=1}}^{t} u(M^{s}(i))
        =v_{i}^{t}(S)
        \leq v_{i}^{t}(S)+\Delta.
    \end{equation*}
    Taking the average over all agents and using (\ref{eqn:ident-opt}), we get
    \begin{equation*}
        \OPT(t)
        \leq\frac{1}{n}\cdot\sum_{\mathclap{i'\in N}}v_{i'}^{t}(S)
        \leq v_{i}^{t}(S)+\Delta
        =b^{t}(S)+\Delta
    \end{equation*}
    as desired.
\end{proof}

\end{document}